\newcommand{\R}{\mathcal{R}}
\newtheorem{theorem}{Theorem}[section]
\newtheorem{definition}{Definition}[section]
\newtheorem{lemma}[theorem]{Lemma}
\title{\LARGE \bf
Rules of the Road: Safety and Liveness Guarantees for Autonomous Vehicles
}
\author{Karena X. Cai, Tung Phan-Minh, Soon-Jo Chung, Richard M. Murray
\thanks{This research supported by the National Science Foundation award CNS-1545126.}
}
\begin{document}

\maketitle
\thispagestyle{empty}
\pagestyle{empty}

\begin{abstract}
The ability to guarantee safety and progress for all vehicles is vital to the success of the autonomous vehicle industry. We present a framework for designing autonomous vehicle behavior in a way that is safe and guarantees progress for all agents. In this paper, we first introduce a new game paradigm which we term the quasi-simultaneous game. We then define an agent protocol that all agents must use to make decisions in this quasi-simultaneous game setting. According to the protocol, agents first select an intended action using a behavioral profile. Then, the protocol defines whether an agent has precedence to take its intended action or must take a sub-optimal action. The protocol ensures safety under all traffic conditions and liveness for all agents under `sparse' traffic conditions. We provide proofs of correctness of the protocol and validate our results in simulation.
\end{abstract}

\section{Introduction}
A prerequisite for introducing autonomous vehicles into our society is a compelling proof of their safety and efficacy. Unfortunately, designing agent strategies in interactive multi-agent settings is extremely difficult since agent behavior is highly coupled and the computational complexity grows exponentially when reasoning about joint action spaces.  

Most approaches for designing agent behavior focus on designing an individual agent's strategy while modeling interactions with other agents using some interactive behavioral model. Minimum violation motion-planning has been proposed to help the vehicle choose the trajectory that minimizes violation of a set of ordered rules \cite{tumova2013least, IncSynth2013}. Rulebooks are a way to set priorities among possibly conflicting sets of specifications \cite{2019Rulebooks}. The game-theoretic approach has been to model agent decision-making as interacting partially-observable Markov Decision Processes (POMDPs) \cite{boutilier1999sequential, gmytrasiewicz2005framework}. These methods often capture the reactivity of agents by modeling a reward function defined on a joint action space but suffers from the curse of dimensionality. Data-driven methods are used to learn interactive models between agents and design an optimal strategy for an individual agent based on this learned model \cite{sadigh2016planning, sadigh2017active}. When designing an \textit{individual} agent strategy, how other agents are assumed to be behaving is not explicitly defined—thereby preventing the ability to make complete safety guarantees. 

Instead of reasoning about safety on the individual agent level, the authors in \cite{shoham1995social} introduce the idea of reasoning about safety as a property of the collective of agents. In particular, they introduce the idea of social laws, which are a set of rules imposed upon all agents in a multiagent system to ensure some desirable global behaviors like safety or progress \cite{shoham1995social,van2007social}. The design of social laws is intended to achieve the desirable global behavioral properties in a minimally-restrictive way \cite{shoham1995social}. The problem of automatically synthesizing useful social laws for a set of agents for a general state space, however, has been shown to be NP-complete~\cite{shoham1995social}. Model checking tools have also been designed to verify correctness of agent protocols for multi-agent systems, but these do not solve the protocol synthesis problem \cite{lomuscio2017mcmas, van2007social}. The Responsibility-Sensitive-Safety (RSS) framework \cite{2017FormalModel} adopts a similar top-down philosophy for guaranteeing safety by providing a set of rules like maintaining distance, yielding, etc, but does not provide guarantees of agent progress.  

Similarly, the Assume-Guarantee framework for autonomous vehicles introduced in \cite{agprofiles} dictates all agents must abide by some behavioral contract where agents make decisions according to a behavioral profile. With all agents operating according to the behavioral profile, the interactions are not necessarily coordinated. In particular, there might be multiple agents with conflicting goals. The process for resolving multiple conflicting processes in a local, decentralized manner is addressed in the Drinking Philosopher problem, which provides a mechanism for resolving conflicts by defining a local, decentralized algorithm for assigning precedence among agents \cite{chandy1984drinking}. 
We introduce an agent protocol that is an adaptation of the Drinking Philosopher problem. The agent protocol is defined so agents use a behavioral profile to select an intended action. Additional constraints specified in the profile, determine when an agent has precedence in taking its intended action. Unlike \cite{sahin2020drinking}, our framework leverages the structure of the driving road network and takes into account the inertial properties of agents. 

The main contributions of this paper are as follows:
1) The introduction of a new game paradigm, which we term the quasi-simultaneous discrete-time multi-agent game, 2) the definition of an agent protocol that defines local rules agents must use to select their actions, 3) safety and liveness proofs when all agents operate according to these local rules and 4) simulations as proof of concept of the safety and liveness guarantees.

\section{Quasi-Simultaneous Discrete-Time Game}
\label{section_game}
We propose a quasi-simultaneous discrete-time game paradigm, which we motivate by looking at the shortcomings of more traditional game paradigms. In synchronous games, all agents in the game are making decisions simultaneously. Since agents are making decisions in the absence of other agent behaviors, it does not capture the sequential nature of real-life decision making. Turn-based games offer potential for capturing sequential decision-making, but the turns are often assigned arbitrarily. The quasi-simultaneous discrete-time game offers a way to assign turns, but in a turn order based on the agent states defined with respect to the road network.

A \textit{state} associated with a set of variables is an assignment of values to those variables. A game evolves by a sequence of state changes. A quasi-simultaneous game has the following two properties regarding state changes: 1) each agent will get to take a turn in each time-step of the game and 2) each agent must make their turn in an order that emerges from a locally-defined precedence assignment algorithm. We define a quasi-simultaneous game where all agents act in a local, decentralized manner as follows $\mathfrak{G} = \langle \mathfrak{A}, \mathcal{Y}, Act_{[\cdot]}, P \rangle$, where $\mathfrak{A}$ is the set of all agents in the game, $\mathcal{Y}$ is the set of all variables in the game, $Act_{\text{Ag}}$ be the set of all possible actions $\text{Ag}$ can take. Finally, $P: \mathcal{Y} \rightarrow \text{PolyForest}(\mathfrak{A})$, is the precedence assignment function where $\text{PolyForest}$ is an operator that maps a set to a polyforest graph object. The polyforest, with its nodes and directed edges, defines the global turn order (of precedence) of the set of all agents based on the agent states. 


\section{Specific Agent Class}
In order to make global guarantees on safety and progress, we first only consider a single specific class of agents whose attributes, dynamics, motion-planner, and perception capabilities are described in more detail in the following section. Although assuming a single class of agents seems very restrictive, the work can be easily extended to accommodate additional variants of the agent class. These extensions, however, are beyond the scope of this work. 

\subsection{Agent Attributes}
\label{agent_attributes}
Each \textit{agent} $\text{Ag}$ is characterized by a set of variables $\mathcal{V}_{\text{Ag}} \subseteq \mathcal{Y}$. We define  $\{\texttt{Id}_{\text{Ag}}, \texttt{Tc}_{\text{Ag}}, \texttt{Goal}_{\text{Ag}} \}~\subseteq~{\mathcal{V}}_{\text{Ag}}$ where $\texttt{Id}_{\text{Ag}}$, $\texttt{Tc}_{\text{Ag}}$, and $\texttt{Goal}_{\text{Ag}}$ are the agent's ID number, token count and goal respectively. The token count and ID are defined in greater depth in Section \ref{section_conflict_cluster_resolution}. Agents are assumed to have the capability of querying the token counts of neighboring agents. 

In this paper, we only consider \textit{car} agents such that if $\text{Ag} \in \mathfrak{A}$, then $\mathcal{V}_{\text{Ag}}$ includes $x_{\text{Ag}}$, $y_{\text{Ag}}$, $\theta_{\text{Ag}}$, $v_{\text{Ag}}$, namely its absolute coordinates, heading and velocity. We let $S_{\text{Ag}}$ denote the set that contains all possible states of these variables in $\mathcal{V}_{\text{Ag}}$. $\mathcal{V}_{\text{Ag}}$ also has parameters:
$ {a_{\text{min}}}_{\text{Ag}} \in \mathbb{Z}, {a_{\text{max}}}_{\text{Ag}}\in \mathbb{Z}, {v_{\text{min}}}_{\text{Ag}}\in \mathbb{Z} \text{ and } {v_{\text{max}}}_{\text{Ag}}\in \mathbb{Z}$ which define the minimum and maximum accelerations and velocities respectively. The agent control actions are defined by two parameters: 1) an acceleration value $\text{acc}_{\text{Ag}}$ between ${a_{\text{min}}}_{\text{Ag}}$ and ${a_{\text{max}}}_{\text{Ag}}$ and 2) a steer maneuver $\gamma_{\text{Ag}} \in $\{\texttt{left-turn}, \texttt{right-turn}, \texttt{left-lane change}, \texttt{right-lane change}, \texttt{straight}\}. 

The discrete agent dynamics works as follows. At a given state $s \in S_{\text{Ag}}$ at time $t$, for a given control action $(\text{acc}_{Ag}, \gamma_{Ag})$, the agent first applies the acceleration to update its velocity $s.v_{Ag,t+1} = s.v_{Ag,t} + \text{acc}_{\text{Ag}}$. Once the velocity is applied, the steer maneuver (if at the proper velocity) is taken and the agent occupies a set of grid-points, specified in Fig. \ref{fig_agent_dynamics}, while taking its maneuver. The agent state-transition function $\tau_{\text{Ag}}: S_{\text{Ag}} \times Act_{\text{Ag}} \rightarrow S_{\text{Ag}}$ defines the state an agent will transition to by taking an action $a$ at a given state $s_{\text{Ag}}$ and the state precondition $\rho_{\text{Ag}}:S_{\text{Ag}} \rightarrow 2^{Act_{\text{Ag}}}$ functions defines the set of allowable actions at a given state. 
\begin{figure}[H]
\centering
\includegraphics[scale=0.23]{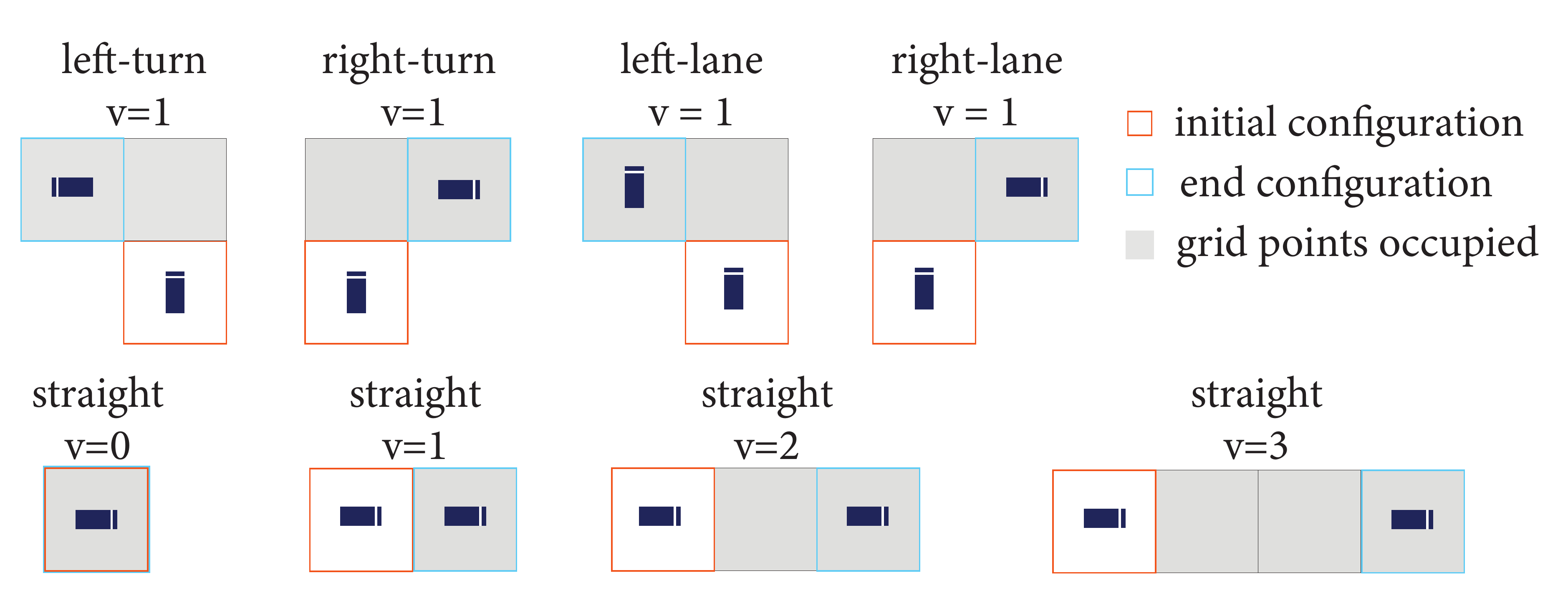}
\caption{Shows different grid point occupancy associated with different discrete agent maneuvers. Note the grid point occupancy represents a conservative space in which the agent may occupy when taking the associated maneuver.}
\label{fig_agent_dynamics}
\end{figure}

During an agent state transition, an agent may, depending on the maneuver, occupy a set of grid points. Before and after the state transition, the agent is assumed to occupy only a single grid point. Fig.~\ref{fig_agent_dynamics} shows the grid point occupancy for different agent maneuvers. The concept of grid point occupancy is defined as follows:
\begin{definition}[Grid Point Occupancy]
 The notion of grid point occupancy is captured by the definitions of the following maps for each $\text{Ag} \in \mathfrak{A}$. To define the grid point an agent is occupying at a given time we use the map: $\mathcal{G}_{\text{Ag},t}: S_{\text{Ag}} \rightarrow 2^{G}$, mapping each agent to the single grid point the agent occupies. By a slight abuse of notation, we let $\mathcal{G}_{Ag,t}: S_{\text{Ag}} \times {Act}_{\text{Ag}} \rightarrow 2^{G}$ be a function that maps each $s \in S_{\text{Ag}}$ and $a \in \rho_{\text{Ag}}(s)$ to denote the set of all grid points that are occupied by the agent $\text{Ag}$ when it takes an allowable action $a$ from state $s$ at the time-step $t$.
  \end{definition}

Here we assume that any graph-based planning algorithm can be used to specify an agent's motion plan, where the motion plan is a set of critical points along the graph that the agent must reach in order to get to its destination.

\subsection{Agent Backup Plan Action}
\label{backup_plan_action_def}
A \textit{backup plan} is a reserved set of actions an agent is \textit{entitled} to execute at any time while being immune to being at fault for a collision if one occurs. In other words, an agent will always be able to safely take its backup plan action. We show if each agent can maintain the ability to safely execute its own backup plan (i.e. keep a far enough distance behind a lead agent), the safety of the collective system safety is guaranteed. The default backup plan, which we refer to as $a_{\text{bp}}$ adopted here is that of applying maximal deceleration until a complete stop is achieved. Note, it may take multiple time-steps for an agent to come to a complete stop because of the inertial dynamics of the agent. 

\subsection{Limits on Agent Perception}
\label{limits_on_agent_perception}
In real-life, agents make decisions based on local information. We model this locality by defining a region of grid points around which agents have access to the full state and intentions of the other agents. We assume agents have different perception capabilities in different contexts of the road network. For road segments, the region around which agents make decisions cannot be arbitrarily defined. In fact, an agent's bubble must depend on its state, and the agent attributes and dynamics of all agents in the game. In particular, the bubble can be defined as follows:

\begin{definition}[Bubble]
\label{bubble_definition}
Let $\text{Ag}$ with state $s_0 \in S_{\text{Ag}}$. Then the bubble of $\text{Ag}$ with respect to agents of the same type is written as $\mathcal{B}_{\text{Ag}}(s_0)$. The bubble is the minimal region of space (set of grid points) agents need to have full information over to guarantee they can make a decision that will preserve safety under the defined protocol. 
\end{definition}

The details for the construction of the bubble for an agent with a particular set of attributes and dynamics can be found in the Appendix. At intersections, agents are assumed to be able to see across the intersection when making decisions about crossing the intersection. More precisely, any $\text{Ag}$ must be able to know about any $\text{Ag}' \in \mathfrak{A}$ that is in the lanes of oncoming traffic. The computation of the exact region of perception necessary depends on the agent dynamics. 

\section{Road Network Environment}
Here we introduce the structure of the road network environment that agents are assumed to be operating on. The road network is a grid world with additional structure (e.g. lanes, bundles, road segments, intersections, etc.). The road network is formalized as follows: 

\begin{definition}[Road Network]
A road network $\mathfrak{R}$ is a graph $\mathfrak{R} = (G, E)$ where $G$ is the set of grid points and $E$ is the set of edges that represent immediate adjacency in the Cartesian space among grid points. Note that each grid point $g\in G$ has a set of associated properties $\mathcal{P}$, where $\mathcal{P} = \{p, d, \texttt{lo} \}$ which denote the Cartesian coordinate, drivability of the grid point and the set of legal orientations allowed on the grid point respectively. Note, $p\in\mathbb{Z}^2$, $d \in \{0, 1\}$ and $\texttt{lo} \in \{\texttt{north}, \texttt{east}, \texttt{south}, \texttt{west}\}$.
\end{definition}

 $\mathcal{S}_{\text{sources}}$ ($\mathcal{S}_{\text{sinks}}$) are the set of grid points agents can enter or leave the road network from. Each intersection of the road network is governed by traffic lights. The road network is hierarchically decomposed into lanes, bundles and road segments, where a lane $La(g)$ defines a set of grid points that contains $g$ and all grid points that form a line going through $g$ and a bundle $Bu(g)$ is a set of grid points that make up a set of lanes that are adjacent or equal to the lane containing $g$ and have the same legal orientation. Each bundle can be decomposed into a set of road segments $RS$, where the intersections are used to partition each bundle into a set of road segments. These road components can be seen in Fig.\ref{fig_precedence}. 

We introduce the following graph definition since it will be used in the liveness proof. 
\begin{definition}[Road Network Dependency Graph]
\label{road_network_dependency_graph}
The road network dependency graph is a graph $G_{\text{dep}} = (RS, E)$ where nodes are road segments and a directed edge $(rs_1, rs_2)$ denotes that agents on $rs_1$ depends on the clearance of agents in $rs_2$ to make forward progress.
\end{definition}


\section{the Agent Protocol}
\label{section_agent_protocol}
The protocol is the set of rules agents use to select which action to ultimately take at a given time step. According to the protocol, agents first select an intended action using a profile. The protocol then defines additional rules that an agent uses to determine whether it has priority to take its intended action, and if not, which alternative, less-optimal actions it is allowed to take. The protocol is defined in a way that 1) scales well in the number of agents 2) is interpretable so there is a consistent and transparent way agents make their decisions 3) ensures safety and progress of all agents. In this section, we introduce the components that form the agent protocol that make it such that all these properties are satisfied. 



\subsection{Agent Precedence Assignment}
\label{section_precedence_rules}
The definition of the quasi-simultaneous game requires agents to locally assign precedence, i.e. have a set of rules to define how to establish which agents have higher, lower, equal or incomparable precedence to it.

Thus, the first element of the agent protocol is defining the agents' local precedence assignment algorithm so each agent knows its turn order relative to neighboring agents. Our precedence assignment algorithm is motivated by capturing how precedence among agents is generally established in real-life scenarios on a road network. In particular, since agents are designed to move in the forward direction, we aim to capture the natural inclination of agents to react to the actions of agents visibly ahead of it. 

Before presenting the precedence assignment rules, we must introduce a few definitions. Let us define: $\text{proj}^{B}_{\text{long}}: \mathfrak{A} \rightarrow \mathbb{Z}$, which is restricted to only be defined on the bundle $B$. In other words, $\text{proj}^{B}_{\text{long}}(Ag)$ is the mapping from an agent (and its state) to its scalar projection onto the longitudinal axis of the bundle $B$ the agent $\text{Ag}$ is in. If $\text{proj}^B_{\text{long}}(\text{Ag}') < \text{proj}^B_{\text{long}}(\text{Ag})$, then the agent $\text{Ag}'$ is behind $\text{Ag}$ in $B$.

The following rules can be used to define the precedence relation among agents $Ag$ and $Ag'$. 
\subsubsection{Local Precedence Assignment Rules}
\begin{enumerate}
    \item If $\text{proj}^B_{\text{long}}(\text{Ag}') < \text{proj}^B_{\text{long}}(\text{Ag})$ and $Bu(\text{Ag}')= Bu(\text{Ag})$, then $\text{Ag}' \prec \text{Ag}$, i.e. if agents are in the same bundle and $\text{Ag}$ is longitudinally ahead of $\text{Ag}'$, $\text{Ag}$ has higher precedence than $\text{Ag}'$.
    \item  If $\text{proj}^B_{\text{long}}(\text{Ag}') = \text{proj}^B_{\text{long}}(\text{Ag})$ and $Bu(\text{Ag}')= Bu(\text{Ag})$, then $\text{Ag} \sim Ag'$ and we say $\text{Ag}$ and $\text{Ag}'$ are equivalent in precedence.
    \item If $\text{Ag}'$ and $\text{Ag}$ are not in the same bundle, then the two agents are incomparable.
\end{enumerate}
Each agent $\text{Ag}\in \mathfrak{A}$ only assigns precedence according to the above rules locally to agents within its local region. Thus, we must show if all agents locally assign precedence according to these rules, a globally-consistent turn precedence among all agents is established. The linear ordering induced by these local rules are used to prove this. The reader is referred to the Appendix for the full proof.

\begin{figure}[H]
\centering
\includegraphics[scale=0.29]{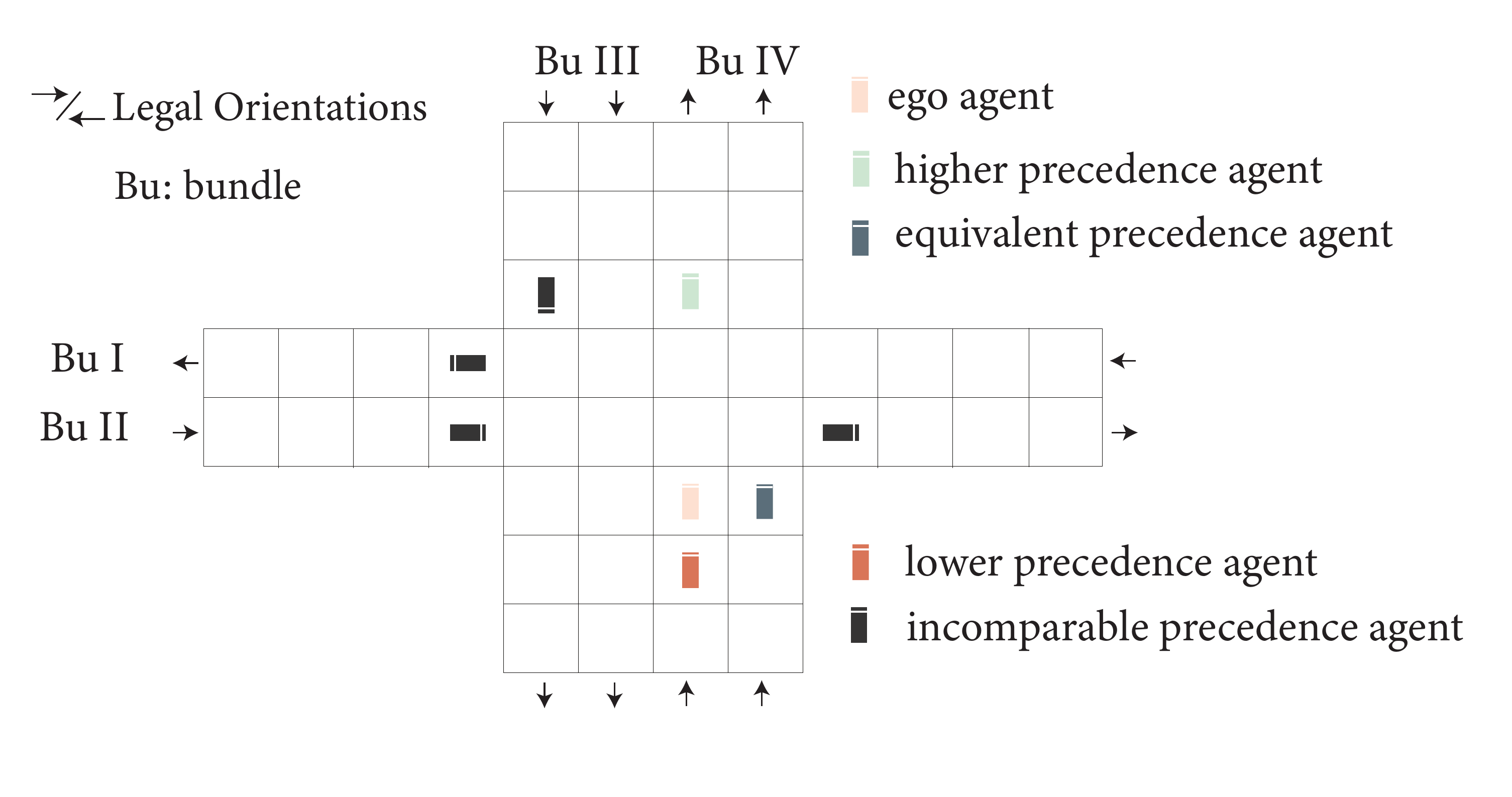}
\caption{Rules for precedence assignment.}
\label{fig_precedence}
\end{figure}


Even when this turn-order is established, there is still some ambiguity as to which agents have precedence. The ambiguity is resolved through the conflict-cluster resolution, introduced in Section \ref{section_conflict_cluster_resolution}. 

\subsection{Behavioral Profile}
\label{section_profiles}
The way in which agents select actions is the fundamental role of the agent protocol. The behavioral profile serves the purpose of defining which action an agent intends to take at a given time-step $t$. We define a specific assume-guarantee profile with the mathematical properties defined in \cite{agprofiles}. In particular, we define a set of ten different specifications (rules) and place a hierarchy of importance (ordering) on these rules. 

\begin{figure}[H]
\centering
\includegraphics[scale=0.16]{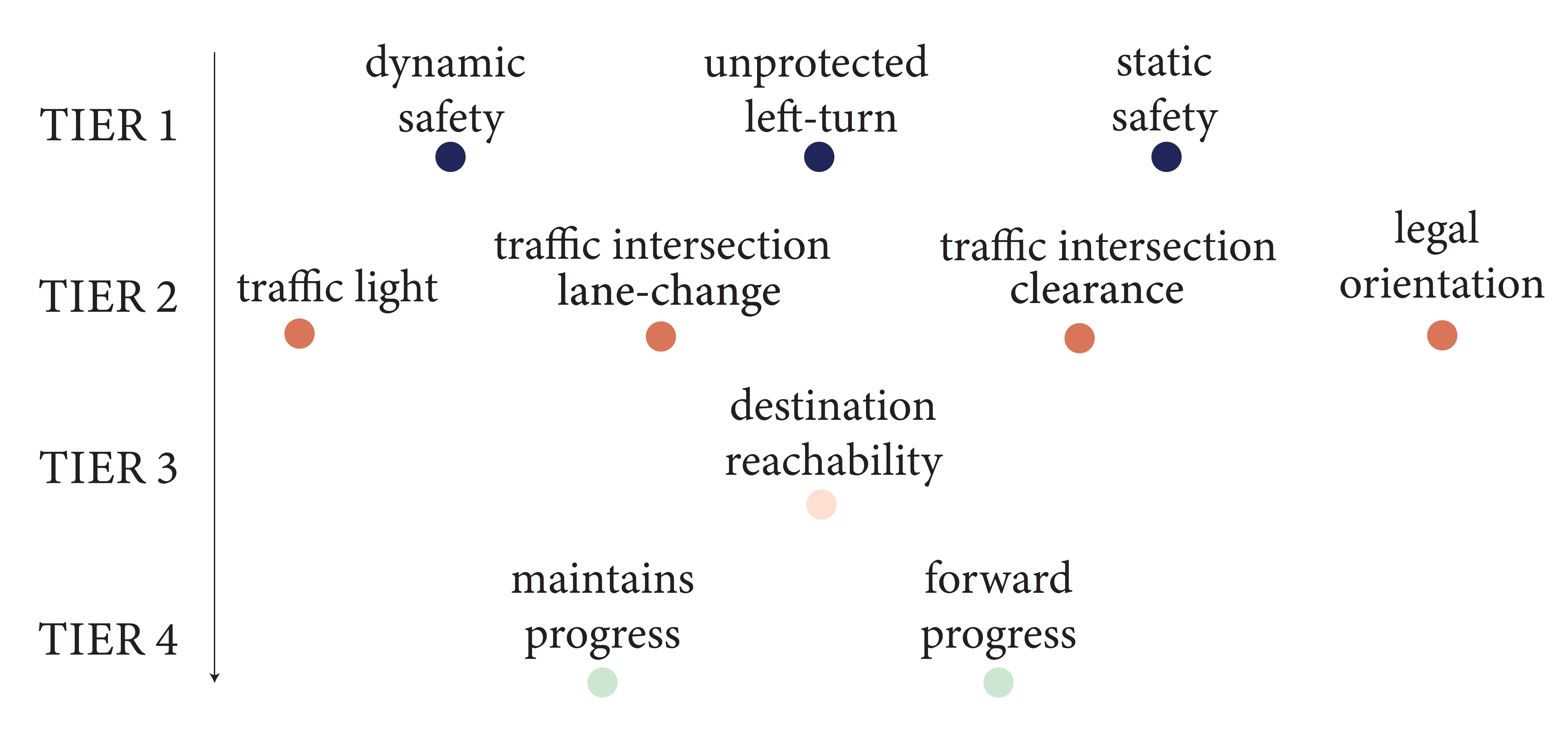}
\caption{Assume-guarantee profile that shows ordering of specifications, where specifications on the same tier are incomparable to one another and Tier 1 has highest priority.}
\label{fig_ag_profiles}
\end{figure}

Each of the specifications is associated with an oracle that evaluates whether or not an agent taking an action $a$ satisfies the specification. The reader is referred to the Appendix for the precise oracle definitions. The consistent-evaluating function, defined on this agent profile, will evaluate actions based on which subset of specifications they satisfy--giving priority to actions that satisfy the highest number of highest-valued specifications, as described in \cite{agprofiles}. The action with the highest value is then selected as the action the agent intends to takes.

For this work, the agent profile defined in Fig. \ref{fig_ag_profiles} is used to define both the agent's intended action $a_i$ and best straight action $a_{\text{st}}$ defined in Definition \ref{best_straight_action}. Since an agent would never propose a lane-change action if $O_{\text{Ag},t,\text{dynamic safety}}(s,a,u)$ were included in the profile, it is not included in the selection of the intended action $a_i$, but rather evaluated later downstream in the protocol. 


\subsection{Conflict-Cluster Resolution}
\label{section_conflict_cluster_resolution}
At every time-step $t$, each agent will know when to take its turn based on its local precedence assignment algorithm. Before taking its turn, the agent will have selected an intended action $a_i$ using the Agent profile. When it is the agent's turn to select an action, it must choose whether or not to take it's intended action $a_i$. When the intended actions of multiple agents conflict, the conflict-cluster resolution is a token-based querying method used to help agents determine which agent has priority in taking its action.

Under the assumption agents have access to the intentions of other agents within a local region as defined in Section \ref{limits_on_agent_perception}, agents can use the following criteria to define when it conflicts with another agent. 

\begin{definition}[Agent-Action Conflict]
\label{agent_action_conflict}
Let us consider an agent $\text{Ag}$ is currently at state $s \in S_{\text{Ag}}$ and wants to take action $a$ and an agent $\text{Ag}'$ at state $s' \in S_{\text{Ag}'}$ wants to take action $a'$. We write an agent-action conflict exists $(\text{Ag}, s, a) \dagger (\text{Ag}', s', a')$, if each of the agents taking their respective actions will cause them to overlap in occupancy grid points or end up in a configuration where the agent behind does not have a valid safe backup plan action. 
\end{definition}

 In the case that an agent's action is in conflict with another agents' action, the agent must send a conflict request that ultimately serves as a bid the agent is making to take its intended action. It cannot, however, send requests to just any agent (e.g. agents in front of it). The following criteria are used to determine the properties that must hold in order for an agent $\text{Ag}$ to send a conflict request to agent $\text{Ag}'$: 1) $\text{Ag}$'s intended action $a_i$ is a lane-change action, 2) $\text{Ag}' \in \mathcal{B}_{Ag}(s)$, i.e. $\text{Ag}'$ is in agent $\text{Ag}$'s bubble, 3) $\text{Ag}' \precsim Ag$, i.e. $\text{Ag}$ has equivalent or higher precedence than $\text{Ag}'$, 4) $Ag$ and $Ag'$ have the same heading, 5) $(Ag, a_i) \dagger (Ag',a_i')$: agents intended actions are in conflict with one another,  and 6) $\mathcal{F}_{\text{Ag}}(u,a_i) = \texttt{F}$, where $\mathcal{F}_{\text{Ag}}(u,a_i)$ is the max-yielding-not enough flag and is defined below. 

\begin{definition}[maximum-yielding-not-enough flag]
\label{definition_max_yielding_flag_not_enough}
The maximum-yielding-not-enough flag $\mathcal{F}_{\text{Ag}}:\mathcal{U} \times Act_{\text{Ag}} \rightarrow \mathbb{B}$ is set to $\texttt{T}$ when $\text{Ag}$ is in a configuration where if $\text{Ag}$ did a lane-change, $Ag$ would still violate the safety of $\text{Ag}'$'s backup plan action even if $\text{Ag'}$ applied its own backup plan action.
\end{definition}
We note that if $\mathcal{F}_{\text{Ag}}(u, a_i)$ is set, $\text{Ag}$ cannot send a conflict request by the last condition. Even though $\text{Ag}$ does not send a request, it must use the information that the flag has been set in the agent's Action Selection Strategy defined in Section \ref{section_action_selection_strategy}. After a complete exchange of conflict requests, each agent will be a part of a cluster of agents that define the set of agents it is ultimately bidding for its priority (to take its intended action) over. These clusters of agents are defined as follows: 
\begin{definition}[Conflict Cluster]
\label{conflict_cluster_definition}
A conflict cluster for an agent $\text{Ag}$ is defined as $\mathcal{C}_{Ag} = \{Ag' \in \mathfrak{A} \mid Ag \texttt{ send } \text{Ag}' \text{ or } Ag' \texttt{ send } Ag\}$, where $\text{Ag} \texttt{ send } Ag'$ implies $\text{Ag}$ has sent a conflict request to $\text{Ag}'$. An agents' conflict cluster defines the set of agents in its bubble that an agent is in conflict with.
\end{definition}

Fig. \ref{fig_conflict_clusters} shows an example scenario and each agents' conflict clusters. Once the conflict requests have been sent and an agent can thereby identify the other agents in its conflict cluster, it needs to establish whether or not the conflict resolution has resolved in it's favor. 

\begin{figure}[H]
\centering
\includegraphics[scale=0.30]{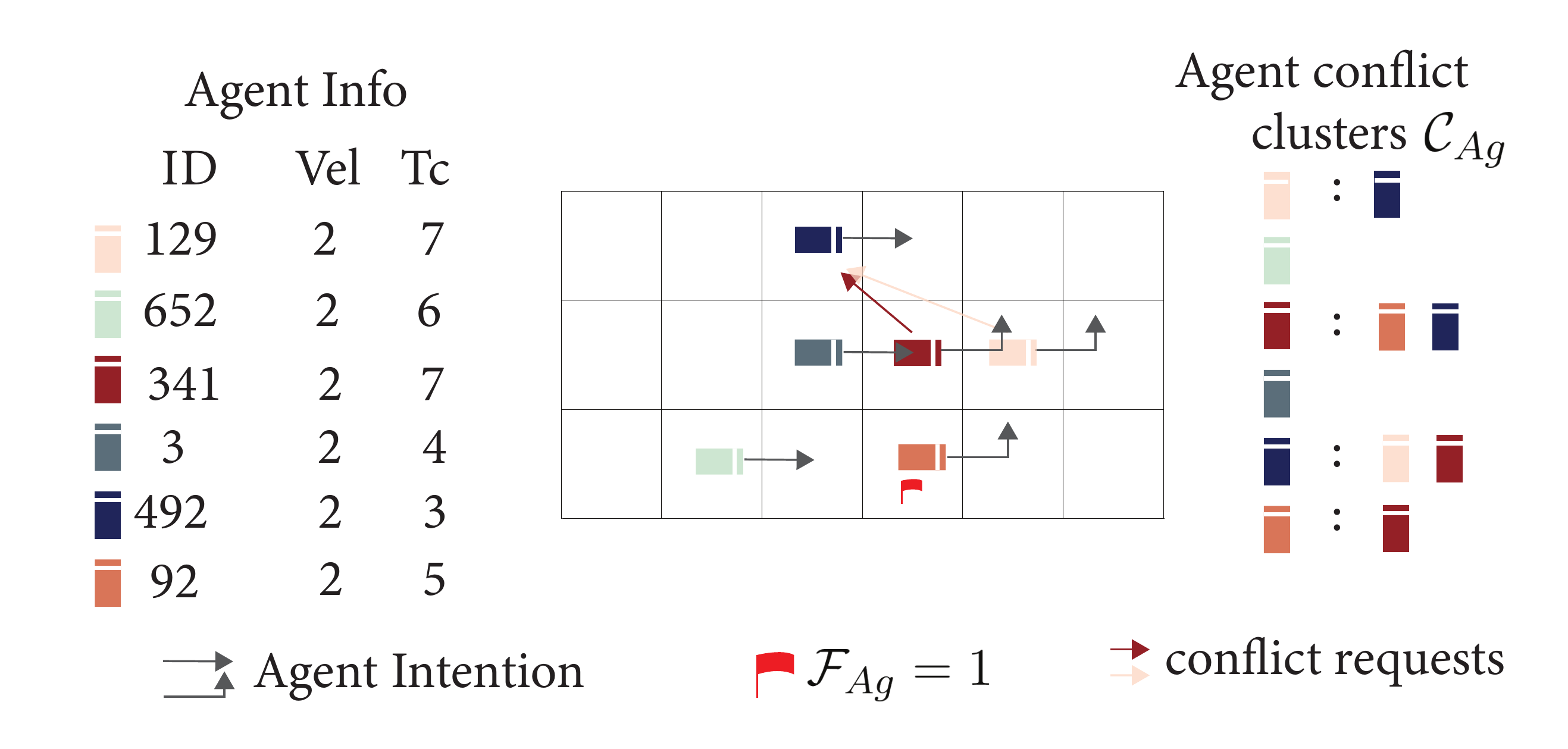}
\caption{An example scenario with agents in a given configuration of agents, their intended actions and their respective conflict clusters. }
\label{fig_conflict_clusters}
\end{figure}

Once an agent has determined which agents are in its conflict cluster, it must determine whether or not it has the priority to take its intended action. The token resolution scheme is the way in which agents determine whether they have precedence. 

The token resolution strategy must be designed to be fair, meaning each agent will always eventually wins their conflict resolution. The resolution is therefore based on the agents' token counts $\texttt{Tc}$, which is updated by agents to represent how many times an agent has been unable to take a forward progress action thus far. 

The token count updates according to the agent's chosen action. In particular, if $\text{Ag}$ selects action $a$: if $O_{\text{forward progress}}(s,a,u) = \texttt{T}$, the the token count resets to 0, otherwise it increases by 1. 

Then, a fair strategy would be to make it so that the agent with the highest amount of tokens wins in its own conflict cluster. Thus, we define a token resolution indicator variable for each $\text{Ag}$ as $\mathcal{W}_{\text{Ag}} \in \mathbb{B}$, indicating whether or not the agent has won in its conflict cluster. 
The conflict cluster resolution indicator variable $\mathcal{W}_{\text{Ag}}$ evaluates to $\texttt{T}$ if $Ag$ has the highest amount of tokens in its conflict cluster, where ties are broken via agent ID comparison.

\subsection{Action Selection Strategy}
\label{section_action_selection_strategy}
The Action Selection Strategy is a decision tree that defines whether or not an agent is allowed to take its intended action $a_i$ and if it is not, which alternative action it should take. In the case where an agent is not allowed to take $a_i$, the agent is restricted to take either: the best straight action $a_{st}$, which is defined in Definition \ref{best_straight_action}, or its backup plan action $a_{bp}$, where the best straight action is defined as follows: 
\begin{definition}[Best Straight Action]
\label{best_straight_action}
Let us consider $\text{Ag}$ and its associated action set $\rho_{\text{Ag}}(s)$. The best straight action is the action $a \in \rho_{\text{Ag}(s)}$ that is the highest-ranked action (according to the profile defined in Section \ref{section_profiles}), among the set of all actions for which $\gamma_{\text{Ag}} = \text{straight}$. 
\end{definition}

The decision tree branches are defined based on the following five conditions: 1) $a_i$, the agent's and other agents' (in its bubble) intended actions 2) $\text{Ag}$'s role in conflict request cluster being a) a conflict request sender, b) a conflict request receiver, c) both a sender and a receiver, d) neither sender or receiver, 3) the agent's conflict cluster resolution $\mathcal{W}_{\text{Ag}}$, 4) evaluation of $O_{\text{Ag}, t, \text{dynamic safety}}(s,a_i,u)$ and 5) $\mathcal{F}_{\text{Ag}}(u, a_i)$ for $\text{Ag}$ is raised, where $\mathcal{F}_{\text{Ag}}(u, a_i)$ is the maximal-yielding-not-enough flag defined in Section \ref{section_conflict_cluster_resolution}. 

If an agent receives a conflict cluster request and loses their conflict cluster resolution, according to the action selection strategy, the agent must take its backup plan action $a_{\text{bp}}$. An agent is only allowed to take a lane-change action when the agent is a winner of its conflict cluster resolution, $\mathcal{F}_{\text{Ag}} = \texttt{F}$ and the dynamic safety oracle evaluates to true (i.e. $O_{\text{Ag}, t, \text{dynamic safety}}(s,a_i,u) = \texttt{T}$). Finally, an agent that loses in its conflict cluster but did not send requests must take $a_{\text{st}}$. A figure showing the full decision-tree logic for selecting actions can be found in the Appendix. 

The agent protocol, as described in the above sections, has been designed in a way such that if all agents are selecting actions via the protocol, we can provide formal guarantees on safety and liveness. Theses safety and liveness proofs are given in the following sections. 

\section{Formal Guarantees}
\label{all_assumptions}
Before introducing the formal guarantees of safety and liveness and their respective proofs, we first make explicit the assumptions that must hold on agents and the road network. 
\begin{enumerate}
\item Each $\text{Ag} \in \mathfrak{A}$ has access to the traffic light states.
\item There is no communication error in the conflict requests, token count queries and the agent intention signals.
\item All intersections in the road network $R$ are governed by traffic lights.
\item The traffic lights are designed to coordinate traffic such that if agents respect the traffic light rules, they will not collide.
\item Agents follow the agent dynamics defined in Section \ref{agent_attributes}.
\item For $t=0$, $\forall \text{Ag} \in \mathfrak{A}$ in the quasi-simultaneous game is initialized to be located on a distinct grid point on the road network and have a safe backup plan action $a_{bp}$ such that $S_{\text{Ag}, bp}(s, u) = \texttt{T}$.
\item The traffic lights are red a window of time $\Delta t_{\text{tl}}$ such that $t_{\text{min}}<\Delta t_{\text{tl}}<\infty$, where $t_{\text{min}}$ is defined so agents are slowed down long enough so agents that have been waiting can take a lane-change action. More details can be found in the Appendix.
\item The static obstacles are not on any grid point $g$ where $g.d = 1$.
\item Each $\text{Ag}$ treats its respective goal $\text{Ag}.\texttt{g}$ as a static obstacle.
\item Bundles in the road network $\mathfrak{R}$ have no more than 2 lanes.
\item All intersections in the road network $\mathfrak{R}$ are governed by traffic lights. 
\end{enumerate}

\subsection{Safety Guarantee}
\label{section_safety_proof}
Safety is guaranteed when agents do not collide with one another. An agent causes collision when it takes an action that satisfies the following condition.
\begin{definition}[Collision]
\label{definition_collision}
An agent $\text{Ag}$ that takes an action $a \in Act_{\text{Ag}}$ will cause collision if the grid point occupancy of $Ag$ ever overlaps with the grid point occupancy of another agent $Ag'$ or a static obstacle $O_{\text{st}}$.
\end{definition} 

A strategy where agents simply take actions that avoid collision in the current time-step is insufficient for guaranteeing safety because of the inertial properties of the agent dynamics. The agent protocol has therefore been defined so an agent also avoids violating the safety of its own and any other agent's backup plan action $a_{bp}$ defined in Section \ref{backup_plan_action_def}. An agent's backup plan action $a_{bp}$ is evaluated to be safe when the following conditions hold: 
\begin{definition}
\label{definition_safety_back_up_plan_action}[Safety of a Backup Plan Action] Let us define the safety of an agent's backup plan action $S_{\text{Ag}, bp}:\mathcal{U} = \mathbb{B}$, where $\mathbb{B} = \{\texttt{T}, \texttt{F} \}$ is an indicator variable that determines whether an agent's backup plan action is safe or not. It is defined as: $S_{Ag, bp}(u) = \land_{o \in O} o(s, a_{bp}, u)$ where the set $O$ is the set of all oracles in the top three tiers of the agent profile defined in Section \ref{section_profiles}.
\end{definition}

An agent $\text{Ag}$ takes an action $a \in Act_{\text{Ag}}$ that violates the safety backup plan action of another agent $\text{Ag}'$ when the following conditions hold:
\begin{definition}[Safety Backup Plan Violation Action]
\label{safety_backup_plan_violation_action}
Let us consider an agent $\text{Ag}$ that is taking an action $a \in Act_{\text{Ag}}$, and another agent $\text{Ag}'$. The action $(\text{Ag},a) \bot \text{Ag}'$, i.e. agent $\text{Ag}$ violates the safety backup plan of an agent $\text{Ag}'$ when by taking an action $a$, then $S_{\text{Ag}', bp}(u') = \texttt{F}$, where $u'$ is the state of the game after $\text{Ag}$ has taken its action. In other words, by taking the action, the agent has ended in a state such that it violates the safety of its own or another agents' backup plan action.

\end{definition}
The safety proof is based on the premise that all agents only take actions that do not collide with other agents and maintain the invariance of the safety of their own \textit{and} other agents' safety backup plan actions. The safety theorem statement and the proof sketch are as follows.

We can treat the quasi-simultaneous game as a program, where each of the agents are separate concurrent processes. A safety property for a program has the form $P \Rightarrow \square Q$, where $P$ and $Q$ are immediate assertions. This means if the program starts with $P$ true, then $Q$ is always true throughout its execution \cite{owicki1982proving}.

\begin{theorem}[Safety Guarantee]
\label{safety_guarantee_theorem}
Given all agents $\text{Ag} \in \mathfrak{A}$ in the quasi-simultaneous game select actions in accordance to the Agent Protocol specified in Section \ref{section_agent_protocol}, then we can show the safety property $P \Rightarrow \square Q$, where the assertion $P$ is an assertion that the state of the game is such that $\forall Ag, S_{\text{Ag}, bp}(s,u) = \texttt{T}$, i.e. each agent has a backup plan action that is safe, as defined in Section \ref{definition_safety_back_up_plan_action}. We denote $P_t$ as the assertion over the state of the game at the beginning of the time-step $t$, before agents take their respective actions. $Q_t$ is the assertion that the agents never occupy the same grid point when taking their respective action at time step $t$. 
\end{theorem}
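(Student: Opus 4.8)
\textit{Proof proposal.} The plan is to prove $P \Rightarrow \square Q$ by exhibiting $P$ itself, strengthened with the assertion that all agents occupy distinct grid points, as an inductive invariant over time-steps, in the spirit of the Owicki--Gries method cited above: I would show that $P_t$ is established initially, that $P_t$ together with the protocol forces $Q_t$, and that $P_t$ propagates to $P_{t+1}$. The base case is immediate from the initialization assumption in Section \ref{all_assumptions}, which places every agent on a distinct grid point with a safe backup plan, so $P_0$ holds. The whole argument then reduces to the inductive step: assuming $P_t$, establish $Q_t$ and $P_{t+1}$.

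First I would handle the interleaving induced by the quasi-simultaneous structure. Within time-step $t$, agents commit their actions in the turn order given by the precedence polyforest produced by $P$, which by the local precedence rules of Section \ref{section_precedence_rules} orders agents by longitudinal position so that any agent acts only after every higher-precedence (longitudinally ahead) agent in its bubble has committed. I would treat each agent's turn as a single atomic action and argue non-interference: when an agent $\text{Ag}$ selects its action via the Action Selection Strategy, it reacts to the already-committed states of the agents it depends on, and the fallback options $a_{\text{st}}$ and $a_{\text{bp}}$ guarantee it can always avoid both present overlap and backup-plan violation. Establishing $Q_t$ then amounts to showing that no two committed actions share an occupancy grid point: by Definition \ref{agent_action_conflict} any such overlap is an agent-action conflict, and the conflict-cluster resolution of Section \ref{section_conflict_cluster_resolution} lets at most the unique token winner take the conflicting (e.g.\ lane-change) action while every loser is driven to $a_{\text{st}}$ or $a_{\text{bp}}$, so no overlap survives.

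Next I would prove the propagation $P_t \Rightarrow P_{t+1}$: after every agent has acted, each agent's backup plan is still safe, i.e.\ $S_{\text{Ag},bp}(u') = \texttt{T}$ for the post-action game state $u'$. The key is that the protocol never lets an agent commit an action that violates its own or another agent's backup plan in the sense of Definition \ref{safety_backup_plan_violation_action}: a forward or lane-change action is permitted only when the dynamic-safety oracle $O_{\text{Ag},t,\text{dynamic safety}}(s,a_i,u)$ evaluates to $\texttt{T}$ and the maximum-yielding-not-enough flag $\mathcal{F}_{\text{Ag}}(u,a_i)$ is not raised, and otherwise the agent retreats to $a_{\text{st}}$ or $a_{\text{bp}}$. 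Because a following agent acts with full information over its lead agent in its bubble (and, at intersections, over oncoming traffic), I would verify case-by-case over the branches of the Action Selection Strategy that each permitted action leaves the acting agent with a safe backup plan and does not destroy the backup plan of any agent already committed. Combining the two steps, induction yields $P_t$ and $Q_t$ for all $t$, hence $P \Rightarrow \square Q$.

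I expect the main obstacle to be lane changes. A lane-change action restructures the who-follows-whom relation by inserting $\text{Ag}$ between two agents in a new lane, so it can threaten the backup-plan safety of a vehicle that has already committed its turn as well as that of $\text{Ag}$ itself; the delicate part of the argument is showing that the combination of the bubble, the conflict-request criteria (notably $\text{Ag}' \precsim \text{Ag}$ and the same-heading condition), the flag $\mathcal{F}_{\text{Ag}}$, and the single-winner token resolution is exactly strong enough to rule this out under the two-lane-per-bundle assumption. A secondary subtlety is the treatment of equal-precedence agents ($\text{Ag} \sim \text{Ag}'$ in the same bundle) and of incomparable agents in different bundles, where I would lean on the disjointness of bundle occupancy and on the intersection perception assumption to close out $Q_t$.
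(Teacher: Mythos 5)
Your proposal is correct and follows essentially the same route as the paper: the paper also applies the Owicki--Gries schema with the invariant that every agent retains a safe backup plan (its intermediate assertion $I$ on the actions taken at each time-step is just your inductive step made explicit, with $P_t \Rightarrow I_t$, $I_t \Rightarrow P_{t+1}$, and $I \Rightarrow Q$), proves the base case from the initialization assumption, and discharges the inductive step via the dynamic-safety oracle, the flag $\mathcal{F}_{\text{Ag}}$, the single-winner token resolution, and the fallbacks $a_{\text{st}}$, $a_{\text{bp}}$. The only organizational difference is that the paper arranges the case analysis by precedence class (separate lemmas for agents outside the bubble, the agent itself, and agents of higher, lower, equal, and incomparable precedence, each lemma internally splitting on $a_i$, $a_{\text{st}}$, $a_{\text{bp}}$), whereas you split on the action-selection branches first and treat precedence (your ``secondary subtlety'') as the inner axis --- a transpose of the same analysis.
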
 
The following is a proof sketch.

\begin{proof}
To prove an assertion of this form, we need to find an invariant assertion $I$ for which i) $P \Rightarrow I$ ii) $I \Rightarrow \square I$ and iii) $I \Rightarrow Q$ hold.
 We define $I$ to be the assertion that holds on the actions that agents select to take at a time-step. We denote $I_t$ to be the assertion on the actions agents take at time $t$ such that $\forall Ag$, $\text{Ag}$ takes $a\in Act_{\text{Ag}}$ where 1) it does not collide with other agents and 2) it does not violate the safety of other agents' back up plan actions (i.e. $\forall \text{Ag}, S_{\text{Ag}, bp}(u') = \texttt{T}$ where $s' = \tau_{\text{Ag}}(s, a)$, and $u'$ is the corresponding global state of the game after each $\text{Ag}$ has taken its respective action $a$).


We can prove $P \Rightarrow \square Q$ by showing the following:
\begin{enumerate}
    \item $P_t \Rightarrow I_t$. This is equivalent to showing that if all agents are in a state where $P$ is satisfied at time $t$, then all agents will take actions at time $t$ where the $I$ holds. This can be proven by showing agents will take actions that satisfy the conditions of $I$ as long as they are begin a state where all agents have a safe backup action and they select actions according to the protocol.
    \label{p_implies_i}
    \item $I \Rightarrow \square I$. If agents take actions such that at time $t$ such that the assertion $I_t$ holds, then by the definition of the assertion $I$, agents will end up in a state where at time t+1, assertion $P$ holds, meaning $I_t \Rightarrow P_{t+1}$. Since $P_{t+1} \Rightarrow I_{t+1}$ from \ref{p_implies_i}, we get $I \Rightarrow \square I$.
    \item  $I \Rightarrow Q$. If all agents take actions according to the assertions in $I$, then collisions will not occur. This follows from the definition of $I$. 
\end{enumerate}
\end{proof}

The reader is referred to the Appendix for a full proof. Proof of safety alone is not sufficient reason to argue for the effectiveness of the protocol, as all agents could simply stop for all time and safety would be guaranteed. A liveness guarantee, i.e. proof that all agents will eventually make it to their final destination, is critical. In the following section, we present liveness guarantees.

\subsection{Liveness Guarantees}
\label{section_liveness_proof}
A liveness property asserts that program execution eventually reaches some desirable state \cite{owicki1982proving}. In this paper, we describe the eventual desirable state for each agent is to reach their respective final destinations. Unfortunately, deadlock occurs when agents indefinitely wait for resources held by other agents \cite{deadlock2010}. Since the Manhattan grid road network has loops, agents can enter a configuration in which each agent in the loop is indefinitely waiting for a resource held by another agent. When the density of agents in the road network is high enough, deadlocks along these loops will occur. We can therefore guarantee liveness only when certain assumptions hold on the density of the road network.

\begin{definition}[Sparse Traffic Conditions]
\label{assumptions_sparse_traffic}
Let $M$ denote the number of grid points in the smallest loop (defined by legal orientation) of the road network, not including grid points $g\in\mathcal{S}_{\text{intersections}}$. The sparsity condition must be such that $N < M-1$, where $N$ is the number of agents in the road network. The number of agents has to be such that the smallest loop does not become completely saturated, in which deadlock would occur. Note, these sparsity conditions are conservative because it is a bound defined by the worst possible assignment of agents and their destinations.
\end{definition}

Now, we introduce the liveness guarantees under these sparse traffic conditions. The proof of liveness is based on the fact that 1) agent profile include progress specifications and 2) conflict precedence is resolved by giving priority to the agent that has waited the longest time (a quantity that is reflected by token counts). 

\begin{theorem}[Liveness Under Sparse Traffic Conditions]
\label{liveness_guarantee_theorem}
Under the Sparse Traffic Assumption given by Definition  \ref{assumptions_sparse_traffic} and given all agents $\text{Ag} \in \mathfrak{A}$ in the quasi-simultaneous game select actions in accordance to the Agent Protocol specified in Section \ref{section_agent_protocol}, liveness is guaranteed, i.e. all $\text{Ag} \in \mathfrak{A}$ will always eventually reach their respective goals.
\end{theorem}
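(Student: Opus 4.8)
The plan is to establish liveness by combining a \emph{no-deadlock} argument, which follows from the sparsity condition, with a \emph{fairness} argument, which follows from the token-count resolution scheme. The first guarantees that the collective of agents can always make \emph{some} forward progress; the second upgrades this into a guarantee that \emph{every} agent makes forward progress infinitely often, and hence that each agent eventually reaches its goal. Throughout I rely on Theorem \ref{safety_guarantee_theorem} so that every candidate sequence of actions is collision-free; the liveness argument then only has to exhibit, from any reachable configuration, a future in which each agent advances.

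First I would formalize the notion of deadlock using the road network dependency graph $G_{\text{dep}}$ of Definition \ref{road_network_dependency_graph}. An agent fails to take a forward-progress action at time $t$ only because the grid points it needs are occupied, or because taking them would leave it without a safe $a_{bp}$; that is, it depends on the clearance of an agent ahead. Chasing these dependencies, a genuine deadlock must correspond to a directed cycle in $G_{\text{dep}}$ in which every road segment is occupied right up to its exit---a \emph{saturated loop}. I would then invoke Definition \ref{assumptions_sparse_traffic}: the smallest such loop contains $M$ non-intersection grid points, and since $N < M-1$ there are at least two unoccupied grid points in any loop of the network. Hence no loop can be saturated, so along any dependency cycle there is always at least one agent with an empty grid point ahead of it that can execute a forward-progress action while keeping its backup plan safe. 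This shows the system is deadlock-free: in every reachable configuration at least one agent advances.

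Next I would use the token dynamics to turn system-level progress into per-agent progress. Recall that $\texttt{Tc}_{\text{Ag}}$ increments whenever $O_{\text{forward progress}}(s,a,u)=\texttt{F}$ and resets to $0$ on a forward-progress action, and that the winner $\mathcal{W}_{\text{Ag}}$ of a conflict cluster is the agent of maximal token count, with ties broken by $\texttt{Id}$. Suppose, for contradiction, that some agent $\text{Ag}$ never again makes forward progress after some time. Then $\texttt{Tc}_{\text{Ag}}$ grows without bound, while every agent that does progress has its count reset. I would argue that after finitely many steps $\text{Ag}$ attains the strictly largest token count within its conflict cluster and therefore wins ($\mathcal{W}_{\text{Ag}}=\texttt{T}$); combined with the deadlock-freeness above---which supplies the empty grid point that makes $\text{Ag}$'s intended forward action feasible and keeps the relevant backup plans safe---the Action Selection Strategy then permits $\text{Ag}$ to take a forward-progress action, resetting its count and contradicting the assumption. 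The traffic-light window assumption (Assumption 7, $t_{\min}<\Delta t_{\text{tl}}<\infty$) plays the analogous role at intersections, guaranteeing a recurring interval in which a waiting agent can complete any required lane change and cross.

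The main obstacle is this third step: ruling out \emph{starvation}, i.e. a scenario in which a fixed subset of agents perpetually cycles through forward-progress actions, resetting their own tokens, while continually forcing one unlucky agent to yield. The delicate point is that winning a conflict cluster is a \emph{local} property, so I must show the token counts cannot remain in a configuration where the starving agent is forever out-ranked locally. This is where the combination of the finite loop length---which bounds how many distinct agents can contend for the same region---and the monotone, reset-on-progress token dynamics must be made to force the starving agent's count above those of all its finitely many competitors. Making this bounded-competitor argument precise, and separately checking that the lane changes demanded by an agent's motion plan are themselves eventually granted rather than deferred indefinitely, is the crux of the full proof; the remainder is bookkeeping over the finitely many reachable configurations permitted by the sparsity bound.
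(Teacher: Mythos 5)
You correctly identify the two ingredients the paper also uses---sparsity to rule out saturated loops, and reset-on-progress token dynamics for fairness---but your proposal has a genuine gap at exactly the point you label ``the crux,'' and the gap is not mere bookkeeping. Your contradiction argument rests on the claim that once the starving agent $\text{Ag}$ has the maximal token count it wins its conflict cluster and, ``combined with the deadlock-freeness above,'' its forward action becomes feasible. Deadlock-freedom only guarantees that \emph{some} agent \emph{somewhere} advances; it does not guarantee that a vacancy ever propagates to the particular grid points $\text{Ag}$ needs. Worse, token counts and cluster-winning are irrelevant to the most common blocking pattern: conflict requests are only sent when the intended action is a \emph{lane change} (Section \ref{section_conflict_cluster_resolution}), so an agent whose straight forward-progress action is blocked by a stopped agent ahead in its own lane belongs to no conflict cluster at all---no accumulation of tokens will ever make $O_{\text{dynamic safety}}(s,a,u)$ evaluate to \texttt{T} for it. What is needed, and what your proposal does not supply, is an argument that the chain of agents ahead of $\text{Ag}$ eventually clears.

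That propagation argument is the substance of the paper's proof. The paper inducts on longitudinal distance to the sink within sink-containing road segments (Lemma \ref{lemma_destination_node}), then on longitudinal distance to the intersection for segments whose downstream segments have vacancies (Lemma \ref{lemma_any_road_segment}), then on the topological order of the dependency graph $G_{\text{dep}}$ when it is a DAG, and finally handles cyclic $G_{\text{dep}}$ by showing that the vacancy guaranteed by the sparsity bound circulates around a loop; separate lemmas (Lemmas \ref{liveness_lemma_left_right_turns} and \ref{lane_change_liveness_lemma}, together with the traffic-light minimum-time assumption) handle turns and lane changes. In that structure the token argument appears only \emph{locally}, to show a would-be lane-changer eventually out-ranks its finitely many equal-precedence competitors---not as the global engine of the proof. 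So your plan would need to be rebuilt around this backward-from-the-sinks induction: as stated, the inference from ``the system is deadlock-free'' to ``each individual agent's needed grid points eventually clear'' does not follow, and the starvation scenario you flag is precisely the case your argument leaves open.
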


The following is a proof sketch.
\\
\begin{proof}
\begin{enumerate}
    \item The invariance of a no-deadlock state follows from the sparsity assumption and the invariance of safety (no collision) follows from the safety proof. 
    \label{invariance_no_deadlock}
    \item Inductive arguments related to control flow are used to show that all $\text{Ag}$ will always eventually take $a \in Act_{\text{Ag}}$ where $O_{\text{forward progress}}(s,a,u) = \texttt{T}$.
    \label{last_liveness_item}
    \begin{enumerate}
        \item Let us consider a road segment $r \in RS$ that contains grid point(s) $g \in \mathcal{S}_{\text{sinks}}$ meaning that the road segment contains grid points with sink nodes. Inductive arguments based on the agents' longitudinal distance to destination grid points are used to show every $\text{Ag} \in r$ will be able to always eventually take $a \in Act_{\text{Ag}}$ for which the forward progress oracle $O_{\text{forward progress}}(s,a,u) = \texttt{T}$.
        \label{liveness_lemma_1}
        \item Let us consider a road segment $rs \in RS$. Let us assume $\forall rs \in RS, \exists (rs, rs') \in G_{\text{dep}}$ meaning that the clearance of $rs$ depends on the clearance of all $rs'$.  Inductive arguments based on agents' longitudinal distance to the front of the intersection show any $\text{Ag}$ on $rs$ will always eventually take $a \in Act_{\text{Ag}}$ where the forward progress oracle $O_{\text{forward progress}}(s,a,u) = \texttt{T}$.
        \label{liveness_lemma_2}
        \item For any $\mathfrak{R}$ where the dependency graph $G_{\text{dep}}$ (as defined in Definition  \ref{road_network_dependency_graph}) is a directed-acyclic-graph (DAG), inductive arguments based on the linear ordering of road segments $rs \in G_{\text{dep}}$, combined with the arguments \ref{liveness_lemma_1}-\ref{liveness_lemma_2}, can be used to prove all $\text{Ag} \in \mathfrak{A}$ will always eventually take $a\in Act_{\text{Ag}}$ for which the forward progress oracle $O_{\text{forward progress}}(s,a,u)=\texttt{T}$.
        \label{liveness_induction_statement}
        \item When the graph $G_{\text{dep}}$ is cyclic, the Sparsity Assumption \ref{assumptions_sparse_traffic} allows for similar induction arguments in \ref{liveness_induction_statement} to apply.
    \end{enumerate}
    \item By the above inductive arguments and the definition of $O_{\text{forward progress}}(s,a,u)$, all $\text{Ag}$ will always eventually take actions that allow them to make progress towards their respective destinations.
\end{enumerate}
\end{proof}
\noindent
 The reader is referred to the Appendix for a full proof.

\section{Simulation Results}
In order to streamline discrete-time multi-agent simulations, we have built a traffic game simulation platform called Road Scenario Emulator (RoSE). We use RoSE to generate different game scenarios and simulate how agents will all behave if they each follow the agent strategy protocol introduced in this paper. 

\begin{figure}[H]
\centering
\includegraphics[scale=0.34]{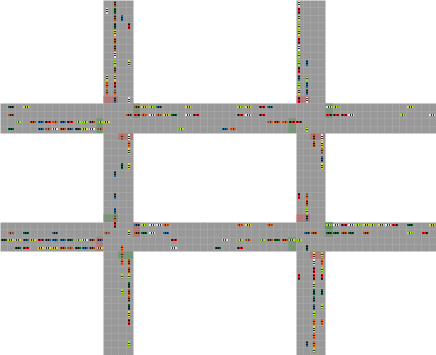}
\caption{City blocks map environment.}
\label{fig_city_blocks_med}
\end{figure}

We simulate the game with randomized initialization of spawning agents at the source nodes for three different road network environments: 1) the straight road segment, 2) small city blocks grid and 3) large city blocks grid. A snapshot of a small city blocks grid simulation is shown in Fig. \ref{fig_city_blocks_med}. 

The agent attributes in this simulation are as follows: $v_{\text{min}}=0$, $v_{\text{max}}=3$, $a_{\text{min}}=-1$, and $a_{\text{max}}=1$. For each road network environment, we simulate the game 100 times for $t=250$ time-steps. During each time-step, agents will spontaneously spawn with some defined probability $p$ at the source nodes and are randomly assigned a sink node as their destination. Agents that make it to their destinations exit the map. For all game simulation trials, collision does not occur. Although liveness is only guaranteed in sparse traffic conditions, we simulate for a number of agents $N > M-1$ specified in the sparsity condition and agents do not enter a deadlock state. In particular, over the 100 trials for each of the maps (straight, small and large city blocks), on average $77\%$, $36\%$ and $43\%$ made it to their respective destinations on the respective maps by the end of the 250 time-steps.

\section{Conclusion and Future Works}
In this paper, we have proposed a novel paradigm for designing safety-critical decision-making modules for agents whose behavior is extremely complex and highly-coupled with other agents. The main distinction of our proposed architecture from the existing literature, is the shift from thinking of each agents as separate, individual entities, to agents as a collective where all \textit{all} agents adopt a \textit{common} local, decentralized protocol. The protocol defines the agent attributes, the region it must reason over (i.e. the bubble), how the agent chooses its intended agent, and how it ultimately selects which action to take. With this protocol, we are able to formally guarantee specifications safety and liveness (under sparse traffic conditions) for all agents. We validate the safety and liveness guarantees in a randomized simulation environment. 

The current work still lacks 1) liveness guarantees in all scenarios, 2) robustness to imperfect sensory information and 3) does not account for other agent types like pedestrians and cyclists. Future work on modifying the agent strategy architecture to prevent the occurrence of the loop deadlock introduced in Section \ref{section_liveness_proof} from occurring. Additionally, the architecture must be modified in a way to effectively accommodate impartial and imperfect information. We also hope to accommodate a diverse, heterogenous set of car agents and also other agent types like pedestrians and cyclists. Although the work needs to be extended to make more applicable to real-life systems, we believe this work is a first step towards defining a comprehensive method for guaranteeing safety and liveness for all agents in an extremely dynamic and complex environment.

\section*{Acknowledgments}
We would like to acknowledge K. Mani Chandy who provided valuable input and to Giovanna Amorim for her contributions to the simulation code.

 \section*{Author Contributions}
 K.X.C., R.M.M., and T.P-M. jointly conceived the conceptual framework. K.X.C. and T.P-M. jointly developed the problem formulation and theoretical approach. K.X.C. worked out the main proofs with input from T.P-M. K.X.C. drafted the manuscript and figures with input from T.P-M. S-J.C. and R.M.M. provided guidance on the overall approach and provided feedback on the final manuscript.

\bibliographystyle{plain}
\bibliography{refs}
\nocite{*}

\newpage
\clearpage
\newpage

\begin{appendices}
\label{appendix}

\subsection{Road Network}
The following defines the set of properties that grid points can have.
\subsubsection{Grid Point Properties}
The set of properties $\mathcal{P} = \{p, d,$ \texttt{lo}$\}$ of each grid point $g \in G$. $p \in \mathbb{Z}^2$ denotes the Cartesian coordinate of the grid point, $d \in \{0, 1\}$, which is an indicator variale that defines whether or not the grid point is drivable, \texttt{lo} is the legal orientation, where the legal orientation is an element of the set $ \{\texttt{north}, \texttt{east}, \texttt{south}, \texttt{west}\}$. The set \texttt{lo} may be empty when the grid point is not drivable.

\begin{figure}[H]
\centering
\includegraphics[scale=0.20]{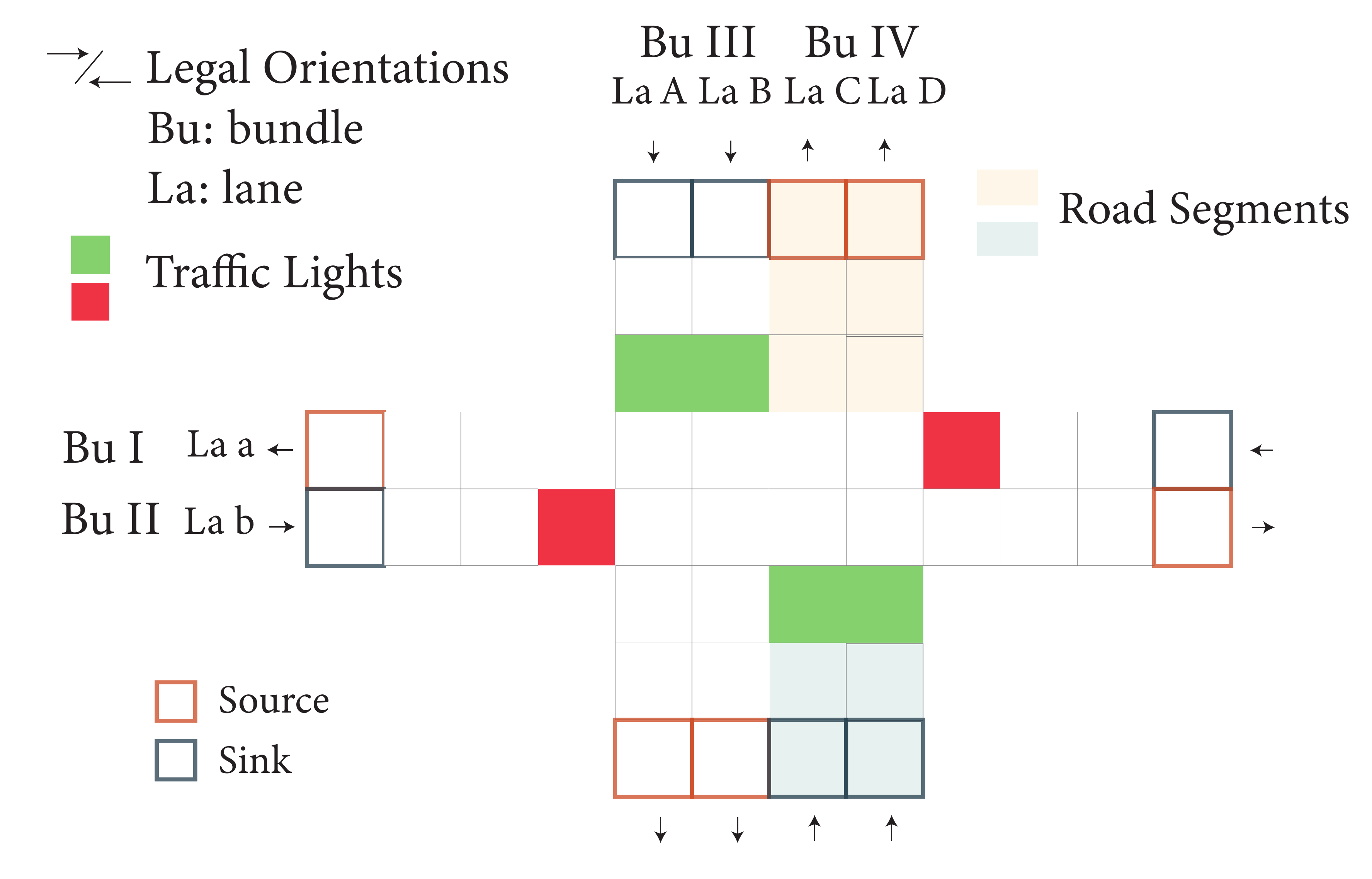}
\caption{Road network decomposition where each box represents a grid point.}
\label{fig_road_network}
\end{figure}

The following are sets of labeled grid points on the road network map. 
\begin{enumerate}
    \item $\mathcal{S}_{\text{intersection}}$: A set of grid points that contains all grid points with more than one legal orientation.
    \item $\mathcal{S}_{\text{traffic light}}$: A set of grid points that represent the traffic light states in the vertical or horizontal direction via its color (for every intersection). 
\end{enumerate}

The road network is hierarchically decomposed into lanes and bundles, which are defined informally as follows: 
\begin{itemize}
    \item Lanes: Let lane $La(g)$ denote a set of grid points that contains all grid points that are in the same `lane' as $g$. $La(g) = \{g' | \text{proj}_x(g'.p) = \text{proj}_x(g.p)$ or $\text{proj}_y(g.p) = \text{proj}_y(g.p),$ \\ 
    $g'.\phi_l = g.\phi_l, g.\texttt{drivable} = g'.\text{drivable} = 1 \}$. 
    \item Bundles: First, we define the set of adjacent lanes to lane $La(g)$ as $\texttt{adj}(La(g)) = \{La(g') \mid \exists e = (\hat{g}, \hat{g}') \in \mathfrak{R} \text{ s.t. } (\hat{g} \in La(g), \hat{g'} \in La(g')) \text{ and } \hat{g}.\phi_l = \hat{g'}.\phi_l\}$. This represents the set of lanes $La(g)$ in the same direction that the lane is adjacent to. 
    Let $N(g)$ = \texttt{adj}(La(g)). Let bundle $Bu(g)$ denote a set of lanes that are all connected to one another and is defined recursively as follows:
    \[
    Bu(g) =
    \begin{cases}
        La(g) \cup N(g) \text{ if } N(g) \neq \emptyset  \\
        La(g) & \text{otherwise}.
    \end{cases}
    \]
\end{itemize}


For clarity of the road network decomposition, refer to Fig. \ref{fig_precedence}. With slight abuse of notation, we let $La(\text{Ag})$ refer to the lane ID associated with the grid point $(s.x_{\text{Ag}}, s.y_{\text{Ag}})$, and $Bu(\text{Ag})$ mean the bundle ID associated with the lane $La(\text{Ag})$.

\subsection{Agent Backup Plan Action}
\begin{definition}[Backup Plan Action]
\label{def:backup_plan_action}
The backup plan action $a_{bp}$ is a control action where $a = a_{\text{min}}$ and when applying $a_{\text{min}}$ causes the agent's velocity to go below 0, $a=\text{max}(a_{\text{min}}, -s.v_{Ag})$ and $\gamma_{\text{Ag}} = \texttt{straight}$. 
\end{definition}

\subsection{Bubble Construction}
\label{bubble_construction}

In order to define the bubble for the agent dynamics specified in Section \ref{agent_attributes}, we present some preliminary definitions. We first introduce the backup plan node set (which is defined recursively) as follows:
\begin{definition}[Backup Plan Node Set]
Let $\text{Ag} \in \mathfrak{A}$ and $s_0 \in S_{\text{Ag}}$.
The backup plan grid point set $BP_{\text{Ag}}(s_0)$ is all the grid points agent $\text{Ag}$ occupies as it applies maximum deceleration to come to a complete stop.
\[
  BP_{\text{Ag}}(s_0) =
  \begin{cases}
    \mathcal{G}_{\text{Ag}}(s_0, a_{{\text{bp}}}) \cup BP_{\text{Ag}}(\tau_{\text{Ag}}(s_0,a_{{\text{bp}}})) & \text{if $\tau_{\text{Ag}}(s_0, a_{{\text{bp}}}).v \neq 0$} \\
   \mathcal{G}_{\text{Ag}}(\tau(s_0, a_{{\text{bp}}})) & \text{otherwise}.
  \end{cases}
\]
where $a_{\text{min}}$ is the agent's action of applying maximal deceleration while keeping the steering wheel at the neutral position.
\end{definition}

\begin{definition}[Forward/Backward Reachable States] The (1-step) forward reachable state set of agent $\text{Ag}$ denoted $\mathcal{R}_{\text{Ag}}(s_0)$ represents the set of all states reachable by $\text{Ag}$ from the state $s_0$. The forward reachable set is defined as $\mathcal{R}_{\text{Ag}}(s_0) \triangleq \{ s \in S_{\text{Ag}} \mid \exists a \in \rho_{\text{Ag}}(s_0).s = \tau(s_0, a)\}$.
Similarly, we define the (1-step) backward reachable state set $\mathcal{R}^{-1}_{\text{Ag}}(s_0)$ as the set of all states from which the state $s_0$ can be reached by $\text{Ag}$. Formally, $\mathcal{R}^{-1}_{\text{Ag}}(s_0) \triangleq \{s \in S_{\text{Ag}} \mid \exists s \in S_{\text{Ag}}. \exists a \in \rho_{\text{Ag}}(s). s_0 = \tau(s, a) \}.$
\end{definition}

\begin{definition}[Forward Reachable Nodes]
We denote by $\mathcal{G}^{\mathcal{R}}_{\text{Ag}}(s_0)$ the \textit{forward reachable node set}, namely, the set of all grid points that can be occupied upon taking the actions that brings the agent $\text{Ag}$ from its current state $s_0$ to a state in $\mathcal{R}_{\text{Ag}}(s_0)$. Specifically, 
$$\mathcal{G}^{\mathcal{R}}_{\text{Ag}}(s_0) \triangleq \bigcup_{a \in \rho_{Ag}(s_0)} \mathcal{G}_{\text{Ag}}(s_0, a)$$ 

\end{definition}
This set represents all the possible grid points that can be occupied by an agent in the next time step.

\begin{definition}[Occupancy Preimage]
For $n \in G$, where $G$ are the nodes in the road network graph $\mathfrak{R}$, the \textit{occupancy preimage} $\mathcal{G}^{\mathcal{R}^{-1}}_{\text{Ag}}(n) $ is the set of states of agent $\text{Ag}$ from which there is an action that causes $n$ to be occupied in the next time step. Formally, 
$$\mathcal{G}^{\mathcal{R}^{-1}}_{\text{Ag}}(n) = \{s \in S_{\text{Ag}} \mid \exists a \in \rho_{\text{Ag}}(s). n \in \mathcal{G}_{\text{Ag}}(s, a)\}$$ 
\end{definition}

In the next section, we define several different sets of grid points that are defined to represent the locations where two agents may possibly interfere with one another, which are shown in Fig. \ref{fig_build_bubbles}. The bubble is defined to be the union of these sets of grid points. 

\begin{figure}[H]
\centering
\includegraphics[scale=0.18]{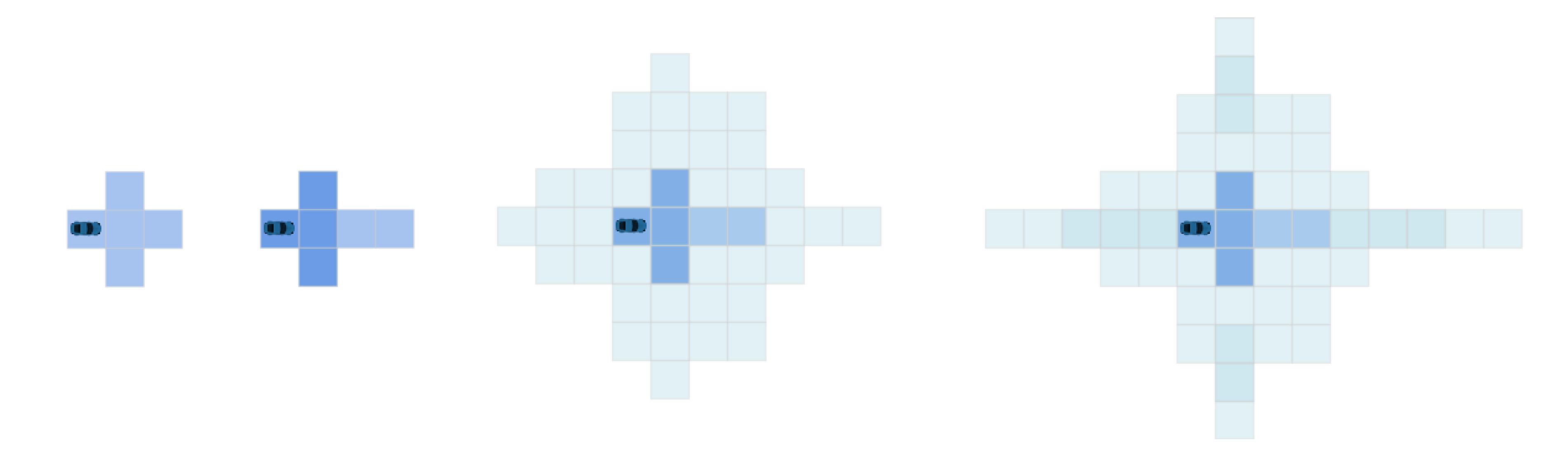}
\caption{Bubble if all $\text{Ag}\in\mathfrak{A}$ have the Agent Dynamics specified in Section \ref{agent_attributes}. Construction of this set defined in the Appendix.}
\label{fig_build_bubbles}
\end{figure}

We begin by considering the ego agent whose bubble we are defining. In particular, let us again consider an agent $\text{Ag}$ at state $s_0 \in S_{Ag}$. The corresponding grid point set $\mathcal{G}_{\text{Ag}}^{\mathcal{R}}(s_0)$ is shown in the left-most figure in Fig. \ref{fig_build_bubbles}. The grid points an agent occupies when executing its backup plan from a state in the agent's forward reachable set $\mathcal{R}_{\text{Ag}}(s_0)$ is given by:
\begin{equation*}
\mathcal{G}^{\R,BP}_{\text{Ag}}(s_0) \triangleq \bigcup_{s \in \R_{\text{Ag}}(s_0)} BP_{\text{Ag}}(s)
\end{equation*}
These grid points are shown in the second from the left sub-figure in Fig. \ref{fig_build_bubbles}. The set-valued map $$\mathcal{Z}_{\text{Ag}}(s_0) \triangleq \mathcal{G}^{\mathcal{R}}_{\text{Ag}}(s_0) \cup \mathcal{G}^{\R,BP}_{\text{Ag}}(s_0).$$ 
represents all the grid points an agent can possibly reach in the next state or in the following time step were it to execute its backup plan. 
Let $\text{Ag}' \in \mathfrak{A}$ and $\text{Ag}' \neq \text{Ag}$. The set:
\begin{equation*}
\mathcal{S}_{\text{Ag}'}^{\mathcal{R}}(\text{Ag}, s_0) \triangleq \bigcup_{n \in \mathcal{Z}_{\text{Ag}}(s_0)}
\mathcal{G}^{\R^{-1}}_{\text{Ag}'}(n)
\end{equation*} 
defines the set of all states in which another agent $\text{Ag}'$ can reach any grid point in the other agents' forward reachable grid points $\mathcal{Z}_{\text{Ag}}(s_0)$. Let us define the grid point projection of these states as 
$$\mathcal{G}_{\text{Ag}'}^{\mathcal{R}}(\text{Ag}, s_0) \triangleq \{\mathcal{G}_{\text{Ag}'}(s) \mid s \in \mathcal{S}_{\text{Ag}'}^{\mathcal{R}}(\text{Ag}, s_0) \}.$$
These grid points are defined in the third from the left subfigure in Fig. \ref{fig_build_bubbles}. 

The bubble also needs to include any state where an agent $\text{Ag}'$ where the agent has so much momentum it cannot stop fast enough to avoid collision with the agent $\text{Ag}$. To define the set of states from which this might occur, let us define the set:  
\begin{equation*}
\mathcal{S}^{BP}_{Ag'}(Ag, s_0) = \{s \in S_{Ag'} \mid BP_{Ag'}(s) \cap \mathcal{Z}_{Ag}(s_0) \neq \emptyset\}.
\end{equation*}
If another agent $\text{Ag}'$ occupies a state in this set, then execution of that agent's backup plan will cause it to intersect with the set of grid points that are in agents set $\mathcal{Z}_{\text{Ag}}(s_0)$. 
Let 
\begin{equation*}
\mathcal{S}^{\R, BP}_{\text{Ag}'}(\text{Ag}, s_0) =  \bigcup_{s \in \mathcal{S}^{BP}_{\text{Ag}'}(\text{Ag})} \R_{\text{Ag}'}^{-1}(s).
\end{equation*}
This is the set of all states backward reachable to the states in $\mathcal{S}^{BP}_{\text{Ag}'}(\text{Ag}, s_0)$. If an agent $\text{Ag}'$ occupies any of these states, it will end up in a state where its backup plan will intersect with agent $\text{Ag}$'s potential grid points that are defined in $\mathcal{Z}_{\text{Ag}}$. We project this set of states to a set of grid points as 
$$\mathcal{G}^{\R, BP}_{\text{Ag}'}(\text{Ag}, s_0) = \{\mathcal{G}_{Ag'}(s) \mid s \in \mathcal{S}^{BP}_{Ag'}(Ag, s_0)\}.$$
Note, this set of grid points is shown in the right-most subfigure in Fig. \ref{fig_build_bubbles}. The bubble is then defined as the union of all the sets of grid points specified above.
\begin{definition}[Bubble]
Let us consider an agent $\text{Ag}$ with state $s_0 \in S_{\text{Ag}}$ and agent $\text{Ag}'$ be another agent. Then the bubble of $\text{Ag}$ with respect to agents of the same type as $\text{Ag}'$ is given by $$\mathcal{B}_{\text{Ag}/\text{Ag}'}(s_0) \triangleq \mathcal{Z}_{\text{Ag}}(s_0) \cup \mathcal{G}_{\text{Ag}'}^{\mathcal{R}}(\text{Ag}, s_0) \cup \mathcal{G}^{\R, BP}_{\text{Ag}'}(\text{Ag}, s_0).$$
\end{definition}
Note that under almost all circumstances, we should have
$$
\mathcal{Z}_{\text{Ag}}(s_0) \subseteq \mathcal{G}^{\mathcal{R}}_{\text{Ag}'}(Ag, s_0) \subseteq \mathcal{G}^{\R, BP}_{\text{Ag}'}(\text{Ag}, s_0)
$$ so $\mathcal{B}_{\text{Ag}}(s_0)$ is simply equal to $\mathcal{G}^{\R, BP}_{\text{Ag}'}(\text{Ag}, s_0)$. This holds true for the abstract dynamics we consider in this paper.
This means the bubble contains any grid points in which another agent $\text{Ag}'$ occupying those grid points can interfere (via its own forward reachable states or the backup plan it would use in any of its forward reachable states) with at least one of agent $\text{Ag}$'s  next possible actions and the backup plan it would use if it were to take any one of those next actions. 

\subsection{Global Precedence Consistency}
\begin{lemma}
\label{global_precedence}
If all agents assign precedence according to the local precedence assignment rules to agents in their respective bubbles, then the precedence relations will induce a polyforest on $\mathfrak{A}/\sim$, where $S/\sim$ defines the quotient set of a set $S$.
\end{lemma}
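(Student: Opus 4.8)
The plan is to proceed in three stages: first establish that $\sim$ is genuinely an equivalence relation so that the quotient $\mathfrak{A}/\sim$ is well defined, then show that $\prec$ descends to a well-defined strict partial order on the quotient, and finally argue that the directed graph this order induces is a polyforest. For the first stage I would observe that the relation $\sim$ is the conjunction of two relations, ``lying in the same bundle'' ($Bu(\text{Ag}') = Bu(\text{Ag})$) and ``having equal longitudinal projection'' ($\text{proj}^B_{\text{long}}(\text{Ag}') = \text{proj}^B_{\text{long}}(\text{Ag})$). Each of these is individually reflexive, symmetric, and transitive---the first because bundle membership partitions the grid, the second because it is pulled back from equality on $\mathbb{Z}$---so their conjunction is again an equivalence relation, and $\mathfrak{A}/\sim$ is well defined.

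Next I would check that $\prec$ is constant on $\sim$-classes, i.e.\ that if $\text{Ag}_1 \sim \text{Ag}_2$ then $\text{Ag}_1$ and $\text{Ag}_2$ compare identically against every third agent. This is immediate from the rules, since the outcome of Rules 1--3 depends on an agent only through its bundle and its longitudinal projection, both of which are $\sim$-invariant; hence $\prec$ descends to a relation on $\mathfrak{A}/\sim$. I would then record the decomposition forced by Rule 3: two classes are comparable only when they share a bundle, so the quotient partitions into blocks indexed by the occupied bundles, with no precedence relations running between distinct blocks. Within a single block the longitudinal projection injects the classes into $\mathbb{Z}$, and by Rules 1--2 the restriction of $\prec$ to that block is exactly the pullback of the usual order $<$, hence a strict total order---each block is a finite chain.

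The final and central stage is to read off the graph structure. The directed edges of the precedence graph are the covering (immediate-precedence) pairs of $\prec$: there is an arc $[\text{Ag}'] \to [\text{Ag}]$ exactly when $[\text{Ag}'] \prec [\text{Ag}]$ with no class strictly between them. On a finite chain the covering relation is a simple directed path visiting the classes in projection order, whose underlying undirected graph is a path and hence a tree; distinct bundles contribute vertex-disjoint such paths. The whole precedence graph is therefore a disjoint union of directed paths, its underlying undirected graph is acyclic, and so it is a polyforest (indeed a linear forest) on $\mathfrak{A}/\sim$, as claimed.

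I expect the main obstacle to be the gluing / global-consistency argument hiding behind the phrase ``to agents in their respective bubbles.'' Each agent assigns arcs only to the agents inside its own bubble $\mathcal{B}_{\text{Ag}}(s)$, so a priori the global graph is merely the union of many local, partial assignments, and one must rule out that these locally issued arcs ever disagree or close up into a cycle. The observation that dissolves this difficulty is that every local assignment is a restriction of the single globally defined order ``longitudinal projection within a bundle'': because this order is a genuine total order on each block, any collection of arcs consistent with it---however it is aggregated from the separate bubbles---can contain neither a directed nor an undirected cycle, while Rule 3 forbids arcs across blocks. I would therefore make the $\sim$-invariance of $\text{proj}^B_{\text{long}}$ and its totality the load-bearing facts of the argument, and take care to confirm that asymmetric visibility (one agent seeing another but not conversely) still yields arcs oriented consistently with this global order.
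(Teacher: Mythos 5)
Your core argument is correct and reaches the same essential fact as the paper, but by a different route. The paper argues by contradiction: it supposes a cycle $C$ exists in $\mathfrak{A}/\sim$, notes that edges only join comparable classes, which (since distinct bundles are incomparable) forces every class of $C$ into a single bundle, and then uses Rule 1 to conclude that longitudinal projections strictly increase along every edge of $C$ --- impossible around a cycle of integers --- so $C$ collapses to a single class, a contradiction. Your proof instead builds the structure positively: $\sim$ is an equivalence relation, $\prec$ descends to the quotient, Rule 3 splits the quotient into bundle-indexed blocks, and within each block $\prec$ is the pullback of the integer order, hence a finite chain. This chain decomposition is more informative than the paper's argument (it identifies the graph as a linear forest rather than merely ruling out cycles), and the well-definedness checks in your first two stages are genuinely absent from, and a useful supplement to, the paper's proof.

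However, your final ``gluing'' paragraph contains a false claim, and it is exactly the claim you call load-bearing: that any collection of arcs consistent with a total order ``can contain neither a directed nor an undirected cycle.'' Directed, yes; undirected, no. Three pairwise-visible agents $a$, $b$, $c$ at strictly increasing projections in one bundle generate the locally assigned arcs $a \to b$, $b \to c$, and $a \to c$, whose underlying undirected graph is a triangle. So if the edge set of the graph is taken to be \emph{all} locally assigned precedence pairs --- the natural reading of ``agents assign precedence to agents in their respective bubbles'' --- then the underlying undirected graph need not be a forest, and only directed acyclicity survives; that is in fact all the paper's own proof establishes, and all the turn-ordering application requires. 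Your stage-three reading, where edges are the covering pairs of the quotient order, does yield a genuine polyforest, but then the graph is determined by the single globally defined order and the bubble-aggregation worry you raise never arises; you cannot have it both ways. The clean fix is to commit to one edge set: either prove only directed acyclicity for the union of locally assigned arcs (matching the paper's content), or define the graph as the Hasse diagram of the quotient order and delete the final paragraph.
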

\begin{proof}
Suppose there is a cycle $C$ in $\mathfrak{A}/\sim$. For each of the equivalent classes in $C$ ($C$ must have at least $2$ to be a cycle), choose a representative from $\mathfrak{A}$ to form a set $R_C$. Let $\text{Ag} \in R_C$ be one of these representatives. Applying the second local precedence assignment rule inductively, we can see that all agents in $R_C$ must be from $\text{Ag}$'s bundle. By the first local precedence assignment rule, any $C$ edge must be from an agent with lower projected value to one with a higher projected value in this bundle. Since these values are totally ordered (being integers), they must be the same. This implies that $C$ only has one equivalence class, a contradiction.
\end{proof}
The acyclicity of the polyforest structure implies the consistency of local agent precedence assignments. Note, the local precedence assignment algorithm establishes the order in which agents are taking turns. 

\subsection{Oracle Definitions}
\begin{enumerate}
    \item \textbf{$O_{\text{Ag},t,\text{unprotected left-turn safety}}(s, a, u)$} returns \texttt{T} when the action $a$ from the state $s$ will result in the complete execution of a safe, unprotected left-turn (invariant to agent precedence). Note, an unprotected left turn spans over multiple time-steps. The oracle will return \texttt{T} if $\text{Ag}$ has been waiting to take left-turn (while traffic light is green), traffic light turns red, and no agents in oncoming lanes. 
    \item \textbf{$O_{\text{static safety}}(s, a, u)$} returns \texttt{T} when the action $a$ from state $s$ will not cause the agent to collide with a static obstacle or end up in a state where the agent's safety backup plan $a_{bp}$ with respect to the static obstacle is no longer safe. 
    \item \textbf{$O_{\text{traffic light law}}(s, a, u)$} returns \texttt{T} if the action $a$ from the state $s$ satisfies the traffic light laws (not crossing into intersection when red. It also requires that $\text{Ag}$ be able to take $a_{bp}$ from $s' = \tau_{\text{Ag}}(s,a)$ and not violate the traffic-light law.
    \item \textbf{$O_{\text{traffic orientation law}}(s, a, u)$}  returns \texttt{T} if the action $a$ from the state $s$ follows the legal road orientation.
    \item \textbf{$O_{\text{traffic intersection clearance law}}(s, a, u)$} returns \texttt{T} if the action causes the agent to enter the intersection and not leave it when the traffic light turns red. Returns \texttt{T} if the action causes the agent to end in a state where its backup plan action will cause the agent to enter the intersection and not be able to leave it when the traffic light turns red. 
    \item \textbf{$O_{\text{traffic intersection lane change law}}(s, a, u)$} returns \texttt{T} if the action is such that \\ $\gamma_{Ag} = \{\texttt{left-lane change}, \texttt{right-lane change} \}$ and the agent either begins in an intersection or ends up in the intersection after taking the action.
    \item \textbf{$O_{\text{maintains progress}}(s, a, u)$} returns \texttt{T} if the action $a$ from the state $s$ stays the same distance to its goal.
\end{enumerate}

\subsection{Action Selection Strategy}
\begin{figure}[H]
\centering
\includegraphics[scale=0.20]{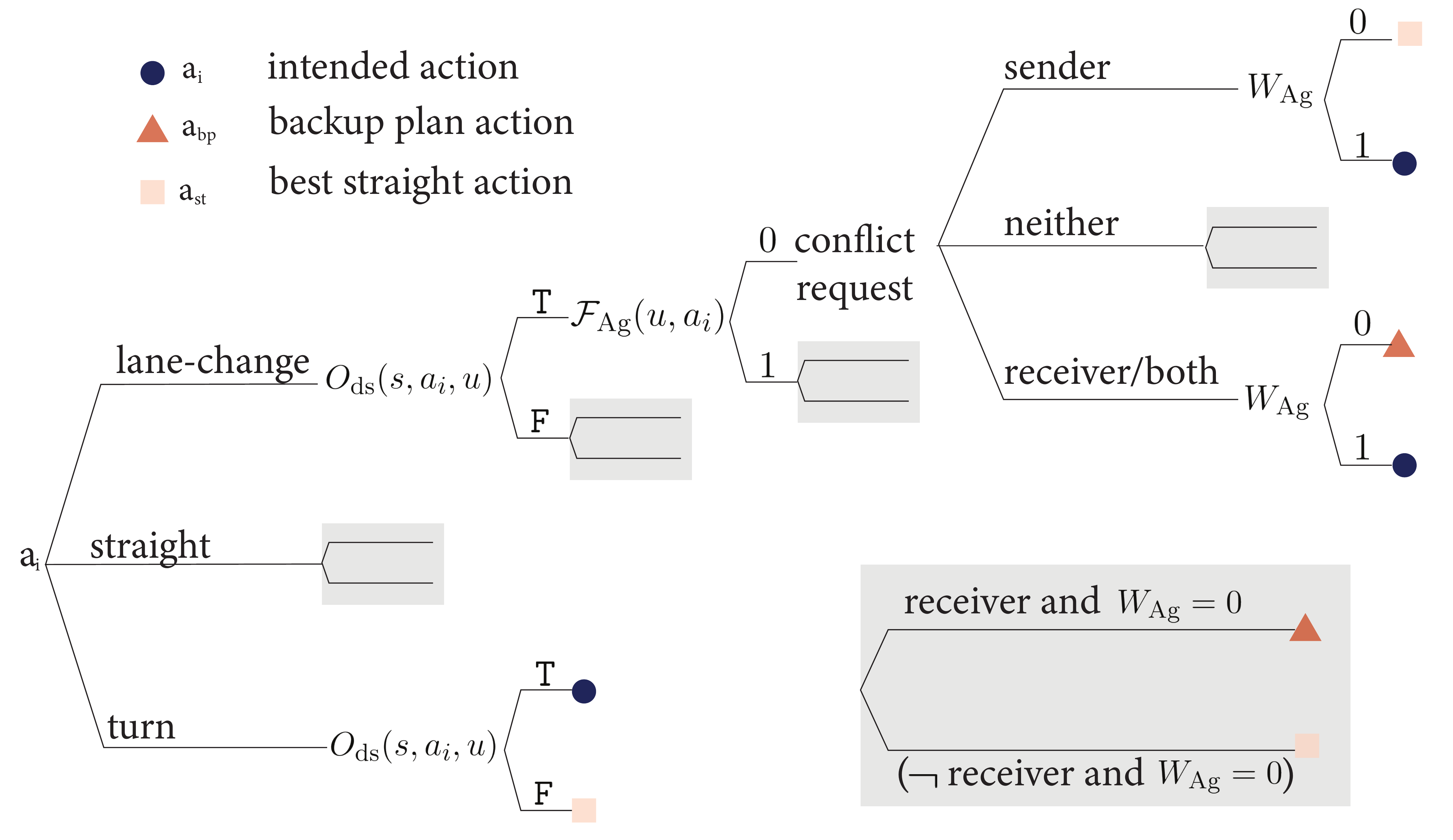}
\caption{Agent action selection strategy.}
\label{fig_action_selection_strategy}
\end{figure}

\subsection{Safety Lemmas}
In the following lemma, we show that an agent cannot send (or receive) a conflict request to (from) an agent outside its bubble. 
\begin{lemma}
\label{conflict_bubble_lemma}
Let us consider agent $\text{Ag}$ with state $s$ and agent $\text{Ag}'$ at state $s'$. $\text{Ag} \texttt{ send } \text{Ag}' \Rightarrow \text{Ag} \in \mathcal{B}_{\text{Ag}'}(s')$. 
\end{lemma}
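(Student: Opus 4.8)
The plan is to reduce the hypothesis $\text{Ag}\texttt{ send }\text{Ag}'$ to its one substantive ingredient, the agent-action conflict of condition~5, and then to chase the grid point that witnesses this conflict through the layered set-valued maps that build the bubble. Recall that, since all agents share a type, $\mathcal{B}_{\text{Ag}'}(s') = \mathcal{B}_{\text{Ag}'/\text{Ag}}(s') = \mathcal{Z}_{\text{Ag}'}(s') \cup \mathcal{G}^{\mathcal{R}}_{\text{Ag}}(\text{Ag}', s') \cup \mathcal{G}^{\R,BP}_{\text{Ag}}(\text{Ag}', s')$, so proving $\text{Ag}\in\mathcal{B}_{\text{Ag}'}(s')$ amounts to exhibiting $\mathcal{G}_{\text{Ag}}(s)$ in one of these three pieces. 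The precedence and heading requirements (conditions~3--4) enter only to identify which agent is ``behind''; the reachability bookkeeping is what carries the argument.

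By condition~5 we have $(\text{Ag},s,a_i)\dagger(\text{Ag}',s',a_i')$, so Definition~\ref{agent_action_conflict} gives two cases, handled separately. In the overlap case $\mathcal{G}_{\text{Ag}}(s,a_i)\cap\mathcal{G}_{\text{Ag}'}(s',a_i')\neq\emptyset$, I fix a common grid point $n$. Then $n\in\mathcal{G}_{\text{Ag}'}(s',a_i')\subseteq\mathcal{G}^{\mathcal{R}}_{\text{Ag}'}(s')\subseteq\mathcal{Z}_{\text{Ag}'}(s')$, while $n\in\mathcal{G}_{\text{Ag}}(s,a_i)$ with $a_i\in\rho_{\text{Ag}}(s)$ puts $s$ in the occupancy preimage $\mathcal{G}^{\mathcal{R}^{-1}}_{\text{Ag}}(n)$. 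Because $n\in\mathcal{Z}_{\text{Ag}'}(s')$, this yields $s\in\mathcal{S}^{\mathcal{R}}_{\text{Ag}}(\text{Ag}',s')$ and hence $\mathcal{G}_{\text{Ag}}(s)\in\mathcal{G}^{\mathcal{R}}_{\text{Ag}}(\text{Ag}',s')\subseteq\mathcal{B}_{\text{Ag}'}(s')$.

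The backup-plan case is the step I expect to be the crux. Here taking $a_i,a_i'$ leaves the trailing agent --- by conditions~3--4 this is $\text{Ag}'$ --- without a safe backup plan, a fact witnessed by a grid point $m\in BP_{\text{Ag}'}(\tilde s')\cap BP_{\text{Ag}}(\tilde s)$, where $\tilde s=\tau_{\text{Ag}}(s,a_i)$ and $\tilde s'=\tau_{\text{Ag}'}(s',a_i')$. The difficulty is that the collision materializes one step downstream, after $\text{Ag}$ has executed its lane change, whereas the bubble is anchored at $\text{Ag}$'s present state $s$, so the forward-reach component cannot see $m$; this is precisely what the backward-reachability layer of the construction resolves. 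Since $\tilde s'\in\R_{\text{Ag}'}(s')$ we have $m\in BP_{\text{Ag}'}(\tilde s')\subseteq\mathcal{G}^{\R,BP}_{\text{Ag}'}(s')\subseteq\mathcal{Z}_{\text{Ag}'}(s')$, whence $\tilde s\in\mathcal{S}^{BP}_{\text{Ag}}(\text{Ag}',s')$ because $m\in BP_{\text{Ag}}(\tilde s)\cap\mathcal{Z}_{\text{Ag}'}(s')$. As $\tilde s=\tau_{\text{Ag}}(s,a_i)$ with $a_i\in\rho_{\text{Ag}}(s)$, we get $s\in\R^{-1}_{\text{Ag}}(\tilde s)\subseteq\mathcal{S}^{\R,BP}_{\text{Ag}}(\text{Ag}',s')$, and projecting, $\mathcal{G}_{\text{Ag}}(s)\in\mathcal{G}^{\R,BP}_{\text{Ag}}(\text{Ag}',s')\subseteq\mathcal{B}_{\text{Ag}'}(s')$. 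I would flag that this reading requires $\mathcal{G}^{\R,BP}$ to be the grid-point projection of the backward-reachable set $\mathcal{S}^{\R,BP}$ rather than of $\mathcal{S}^{BP}$; otherwise the present-state anchor $s$ is not captured and the ``pull-back'' step fails. In both cases $\mathcal{G}_{\text{Ag}}(s)\in\mathcal{B}_{\text{Ag}'}(s')$, as claimed.
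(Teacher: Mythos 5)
Your proof is correct and takes essentially the same route as the paper's: reduce $\text{Ag} \texttt{ send } \text{Ag}'$ to the agent-action conflict condition, split on the two disjuncts of that definition, and land $\mathcal{G}_{\text{Ag}}(s)$ in the forward-reach component $\mathcal{G}^{\mathcal{R}}_{\text{Ag}}(\text{Ag}',s')$ or the backup-plan component $\mathcal{G}^{\R,BP}_{\text{Ag}}(\text{Ag}',s')$ of $\mathcal{B}_{\text{Ag}'}(s')$ --- which is exactly what the paper's (much terser) proof asserts in its notation $\mathcal{G}_{F,B}$, $\mathcal{G}_{F,BP}$, only without the definition-chasing you supply. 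Your flagged caveat is also well taken: the appendix's formula projects $\mathcal{S}^{BP}$ rather than $\mathcal{S}^{\R,BP}$, which is evidently a typo, since both the surrounding prose (``we project this set of states'') and the paper's claimed inclusion chain $\mathcal{G}^{\mathcal{R}}_{\text{Ag}'}(\text{Ag},s_0) \subseteq \mathcal{G}^{\R,BP}_{\text{Ag}'}(\text{Ag},s_0)$ only make sense under your reading, and the pull-back step of the lemma genuinely fails under the literal one.
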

\begin{proof}
If $A \texttt{ send } B$ this means that all of the conditions specified in Section \ref{agent_action_conflict}, particularly that $(A, a_i) \dagger (B, a_i')$. This condition is only valid if $\text{proj}_G s \in \mathcal{G}_{F, B}(B,A)$ or
$\text{proj}_G s \in \mathcal{G}_{F, BP}(B,A)$. Membership of Agent A's state in either of these sets implies $A \in \mathcal{B}(B)$.
\end{proof}

The following lemma follows from the lemma above. 

\begin{lemma}
\label{at_most_conflict_cluster}
At most one agent will win in each agent's conflict cluster. 
\end{lemma}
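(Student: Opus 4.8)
The plan is to argue by contradiction, reducing the whole claim to the strictness of the token ordering used in the resolution. Recall from Section~\ref{section_conflict_cluster_resolution} that $\mathcal{W}_{\text{Ag}} = \texttt{T}$ holds exactly when $\text{Ag}$ attains the maximum, over its conflict cluster $\mathcal{C}_{\text{Ag}}$, of the order that ranks agents first by token count $\texttt{Tc}$ and then breaks ties by $\texttt{Id}$. Since agent IDs are distinct integers, this is a strict total order on $\mathfrak{A}$; call it $\prec_{\mathrm{tok}}$. Consequently the $\prec_{\mathrm{tok}}$-maximum of any finite nonempty subset of $\mathfrak{A}$ is unique. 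Fixing a reference agent $\text{Ag}$, I would suppose toward a contradiction that two \emph{distinct} agents $A, B \in \mathcal{C}_{\text{Ag}}$ both satisfy $\mathcal{W}_A = \mathcal{W}_B = \texttt{T}$.

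The heart of the argument is to show that $A$ and $B$ are adjacent in the conflict-request graph, i.e. $A \in \mathcal{C}_B$ and $B \in \mathcal{C}_A$ (these two memberships are equivalent by Definition~\ref{conflict_cluster_definition}, so it suffices to exhibit one sent request between them). First I would use that $A$ and $B$ each exchanged a request with $\text{Ag}$: by the sending conditions of Section~\ref{section_conflict_cluster_resolution} every such exchange forces the two participants into the same bundle with the same heading, so $A$, $B$, and $\text{Ag}$ share a bundle and heading and are therefore pairwise comparable under the precedence rules of Section~\ref{section_precedence_rules} (within a bundle, precedence is a total preorder induced by longitudinal projection). I would then argue from the geometry of lane changes in the two-lane-bundle setting (the at-most-two-lanes assumption of Section~\ref{all_assumptions}) that two agents both competing for the lane-change space around $\text{Ag}$ must themselves have mutually conflicting intended actions, $(A, a_i^A) \dagger (B, a_i^B)$. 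The remaining prerequisites for a sent request are then in place: mutual bubble membership is guaranteed symmetrically by Lemma~\ref{conflict_bubble_lemma} (a sent request implies the sender lies in the receiver's bubble, complementing condition~2), and the precedence condition $\precsim$ is satisfied because $A$ and $B$ are comparable in their shared bundle, so whichever of the two has the higher precedence can send to the other. Hence $A$ and $B$ are adjacent.

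With adjacency established the contradiction is immediate: since $A$ is the $\prec_{\mathrm{tok}}$-maximum of $\mathcal{C}_A$ and $B \in \mathcal{C}_A$, we get $B \prec_{\mathrm{tok}} A$; by the symmetric statement, $A \prec_{\mathrm{tok}} B$. As $\prec_{\mathrm{tok}}$ is a strict total order this is impossible, so no two distinct members of $\mathcal{C}_{\text{Ag}}$ can both be flagged winners, which is the claim.

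I expect the main obstacle to be the adjacency step, and specifically the geometric claim that two would-be winners sharing a cluster necessarily conflict \emph{with each other} rather than merely each conflicting with $\text{Ag}$. The delicate sub-case is when $A$ or $B$ is a pure receiver (so it carries no lane-change intent of its own), since then the sent request must originate from the higher-precedence partner and one must check that its lane-change action and raised max-yielding-not-enough flag are exactly what the conflicting configuration produces. This is precisely where Lemma~\ref{conflict_bubble_lemma}, the two-lane restriction, and the same-heading/same-bundle consequences of the sending conditions have to be combined carefully; everything downstream of adjacency is a one-line appeal to the strict totality of $\prec_{\mathrm{tok}}$.
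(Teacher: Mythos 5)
Your reduction to the strict totality of the (token count, ID) order is fine, but the proof has a genuine gap exactly where you flagged it: the adjacency step, which you defer ("I would then argue from the geometry of lane changes\dots") and never actually establish. The claim is in fact not true under the paper's definitions. Two agents $A, B \in \mathcal{C}_{\text{Ag}}$ that each conflict with $\text{Ag}$ need not conflict with each other, even in a two-lane bundle with a common heading: take $\text{Ag}$ travelling in lane~1 and let $A$ and $B$ both sit in lane~2 at longitudinally separated positions ahead of $\text{Ag}$, each intending a lane change that lands in lane~1 close enough to $\text{Ag}$ to invalidate $\text{Ag}$'s backup-plan gap under its intended action (so each conflict with $\text{Ag}$ per Definition~\ref{agent_action_conflict} holds and each can send a request, with both max-yielding flags false since braking suffices), while $B$ lands far enough ahead of $A$, with suitable velocities, that there is neither occupancy overlap nor a backup-plan violation between $A$ and $B$. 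Then $A \notin \mathcal{C}_B$ and $B \notin \mathcal{C}_A$ by Definition~\ref{conflict_cluster_definition}, so $\mathcal{W}_A$ and $\mathcal{W}_B$ are maxima taken over clusters that do not contain one another; your two inequalities $B \prec_{\mathrm{tok}} A$ and $A \prec_{\mathrm{tok}} B$ never both arise, and the contradiction evaporates. The same failure hits your pure-receiver sub-case. Since the entire argument funnels through this unproven (and, as stated, false) geometric claim, the proposal does not go through as written.

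It is also worth noting that the paper's own proof avoids pairwise adjacency altogether. It invokes Lemma~\ref{conflict_bubble_lemma} only to obtain mutual bubble membership between request partners, hence that every agent in a cluster can read the token counts and IDs of every other member; it then argues that the resolution makes every member of the cluster cede to the single (token, ID)-maximal agent --- all cede-edges are incident to one node --- so at most one agent per cluster can be the winner. In effect, the paper reads ``winning in $\text{Ag}$'s conflict cluster'' as being the unique maximum of that one fixed cluster under your order $\prec_{\mathrm{tok}}$, with the information symmetry guaranteeing that all members compute this maximum consistently; it never needs two winners to be mutually conflicting. If you want to salvage your framing, you must either adopt that per-cluster reading (in which case uniqueness is immediate from strict totality, i.e., your first paragraph alone), or confront the fact that the cross-cluster version you set out to prove requires an adjacency property that the protocol's send conditions do not provide.
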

\begin{proof}
W.l.o.g. let us consider an agent $\text{Ag}$ and its respective conflict cluster $\mathcal{C}(\text{Ag})$. It follows from Lemma \ref{conflict_bubble_lemma} that $\forall \text{Ag}'$, s.t. $ \text{Ag} \texttt{ send } \text{Ag}' \text{Ag}' \in \mathcal{B}_{Ag}(s)$ and $\text{Ag} \in \mathcal{B}_{Ag'}(s')$. It also follows that $\forall \text{Ag}' \text{ s.t. }, \text{Ag} \texttt{ send } \text{Ag}', \text{Ag} \in \mathcal{B}_{\text{Ag}'}(s')$ and $\text{Ag}' \in \mathcal{B}_{\text{Ag}}(s)$. This means an agent has access to all token counts and IDs of all agents in its conflict cluster, and all agents in its conflict cluster have access to the agent's token count and ID. The conflict resolution implies that all agent edges are incident to the winning agent, where edges point to the agent they cede to. This implies that at most one agent can be the winner of each cluster. Less than one winner (per conflict cluster) will occur when an agent that is in the intersection of more than one conflict cluster wins. 
\end{proof}

The following lemma states that if all $\text{Ag} \in \mathfrak{A}$ are  following the Agent Protocol, an agent $\text{Ag}$ will not take an action that will cause it to 1) collide with or 2) violate the safety backup plan of another agent outside its bubble $\mathcal{B}_{\text{Ag}}(s)$.
\begin{lemma}
\label{lemma_safety_agents_outside_bubble}
If $\text{Ag}$ is following the Agent Protocol, and $S_{\text{Ag}, bp}(u) = \texttt{T}$, $\text{Ag}$ will only choose an action $a \in Act_{\text{Ag}}$ for which the following two conditions hold: 1) $\mathcal{G}_{\text{Ag}}(s, a) \cap (\cup_{\text{Ag}'\in S}\mathcal{G}_{\text{Ag}'}(s', a')) = \emptyset$ and 2) $\forall \text{Ag}'\in S$, $\lnot ((\text{Ag}, a)\bot \text{Ag}')$, where the set $S \triangleq \{\text{Ag}'|\text{Ag}' \notin \mathcal{B}_{\text{Ag}}(s) \land ((\text{Ag}' \sim \text{Ag}) \lor (\text{Ag}' \prec \text{Ag}) \lor (\text{Ag} \prec \text{Ag'})) \}$.
\end{lemma}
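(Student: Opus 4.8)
The plan is to reduce both conditions to the conservative, purely spatial nature of the bubble construction in the Appendix, combined with the fact that an agent following the protocol can never leave the reachable footprint $\mathcal{Z}_{\text{Ag}}(s_0)$ attached to its current state $s_0 = s$. Fix an arbitrary $\text{Ag}' \in S$ with state $s'$ occupying grid point $g'$; since $\text{Ag}' \in S$ we have $g' \notin \mathcal{B}_{\text{Ag}}(s)$, and comparability of $\text{Ag}'$ and $\text{Ag}$ forces them into a common bundle by local precedence rules 1--3, which is precisely the regime in which the road-segment bubble is the relevant interference region. It then suffices to prove the two conclusions for this single $\text{Ag}'$ and take the union over $S$.

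First I would establish a \emph{confinement} claim that is independent of any safety assertion (thus avoiding circularity with the safety invariant this lemma is meant to support). For any allowable $a \in \rho_{\text{Ag}}(s)$ we have $\mathcal{G}_{\text{Ag}}(s,a) \subseteq \mathcal{G}^{\mathcal{R}}_{\text{Ag}}(s_0)$ by definition of the forward reachable node set; and because $\tau_{\text{Ag}}(s,a) \in \mathcal{R}_{\text{Ag}}(s_0)$, the footprint of the ensuing backup plan satisfies $BP_{\text{Ag}}(\tau_{\text{Ag}}(s,a)) \subseteq \mathcal{G}^{\R,BP}_{\text{Ag}}(s_0)$. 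Hence every grid point $\text{Ag}$ can occupy this step, or during its subsequent backup plan, lies in $\mathcal{Z}_{\text{Ag}}(s_0) = \mathcal{G}^{\mathcal{R}}_{\text{Ag}}(s_0) \cup \mathcal{G}^{\R,BP}_{\text{Ag}}(s_0)$.

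For condition (1) I would take the contrapositive of membership in the forward-reachable part of the bubble. Since $g' \notin \mathcal{G}^{\mathcal{R}}_{\text{Ag}'}(\text{Ag}, s_0)$, no state projecting to $g'$ lies in $\mathcal{S}^{\mathcal{R}}_{\text{Ag}'}(\text{Ag}, s_0)$; unwinding the definitions of $\mathcal{S}^{\mathcal{R}}_{\text{Ag}'}$ and the occupancy preimage $\mathcal{G}^{\R^{-1}}_{\text{Ag}'}$ shows that for every $a'$ available to $\text{Ag}'$ at $s'$ we have $\mathcal{G}_{\text{Ag}'}(s',a') \cap \mathcal{Z}_{\text{Ag}}(s_0) = \emptyset$. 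Combining with the confinement claim $\mathcal{G}_{\text{Ag}}(s,a) \subseteq \mathcal{Z}_{\text{Ag}}(s_0)$ gives $\mathcal{G}_{\text{Ag}}(s,a) \cap \mathcal{G}_{\text{Ag}'}(s',a') = \emptyset$, and taking the union over $\text{Ag}' \in S$ yields condition (1).

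Condition (2) is the analogous argument on the backup-plan part of the bubble, and this is where the main difficulty lies. From $g' \notin \mathcal{G}^{\R,BP}_{\text{Ag}'}(\text{Ag}, s_0)$ I would deduce, by unwinding $\mathcal{S}^{\R,BP}_{\text{Ag}'}$ and $\mathcal{S}^{BP}_{\text{Ag}'}$, that whatever state $s'_{\text{next}} = \tau_{\text{Ag}'}(s',a')$ that $\text{Ag}'$ moves to, its backup-plan footprint satisfies $BP_{\text{Ag}'}(s'_{\text{next}}) \cap \mathcal{Z}_{\text{Ag}}(s_0) = \emptyset$; hence all of $\text{Ag}$'s occupancy, action and ensuing backup plan alike, is disjoint from $\text{Ag}'$'s backup-plan footprint by confinement. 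The hard part is converting this spatial disjointness into $\lnot((\text{Ag},a)\bot\text{Ag}')$: I would argue that among the top-three-tier oracles constituting $S_{\text{Ag}',bp}$, only the collision and static/backup-safety oracles depend on $\text{Ag}$'s position, and each can be flipped to $\texttt{F}$ by $\text{Ag}$'s move only through an overlap between $\text{Ag}$'s occupancy and $\text{Ag}'$'s backup-plan footprint, exactly what was just excluded. Two subtleties must be handled here: the exclusion $g' \notin \mathcal{B}_{\text{Ag}}(s)$ is a statement about grid points and must be lifted to \emph{all} states projecting to $g'$, which the existentially quantified, conservative projections in the construction guarantee; and the projection defining $\mathcal{G}^{\R,BP}_{\text{Ag}'}$ should be read, per the surrounding text, as a projection of $\mathcal{S}^{\R,BP}_{\text{Ag}'}$ so that backward reachability to $\mathcal{S}^{BP}_{\text{Ag}'}$ is correctly accounted for.
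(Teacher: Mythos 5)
Your proposal is correct and takes essentially the same approach as the paper: the paper's entire proof of this lemma is a one-line appeal to the bubble construction in the Appendix, and your argument simply unwinds that construction in detail (confinement of $\text{Ag}$'s occupancy and backup-plan footprint to $\mathcal{Z}_{\text{Ag}}(s_0)$, then disjointness from anything an agent outside $\mathcal{B}_{\text{Ag}}(s)$ can reach or threaten via its own backup plan). Your reading of $\mathcal{G}^{\mathcal{R},BP}_{\text{Ag}'}$ as the projection of $\mathcal{S}^{\mathcal{R},BP}_{\text{Ag}'}$ rather than $\mathcal{S}^{BP}_{\text{Ag}'}$ is the intended one and correctly patches an apparent typo in the paper's definition.
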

\begin{proof}
This follows from the definition of the agent bubble, whose construction is defined in \ref{bubble_construction}.
\end{proof}

The following lemma states that an agent $\text{Ag}$ following the Agent Protocol will not take an action for which it violates the safety of its own backup plan.
\begin{lemma}
\label{lemma_safety_itself}
If $\text{Ag}$ is following the Agent Protocol, and $S_{\text{Ag}, bp}(u) = \texttt{T}$, $\text{Ag}$ will only choose an action $a \in Act_{\text{Ag}}$ for which the following condition holds: $\forall \text{Ag}'\in S$, $\lnot ((\text{Ag}, a)\bot Ag')$, where $S = \{\text{Ag}\}$.
\end{lemma}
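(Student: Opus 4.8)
The plan is to unfold the goal into a statement purely about the oracles in $O$, the top three tiers of the profile, and then dispatch it with a short case analysis over the three actions the Action Selection Strategy can output. Writing $s' = \tau_{\text{Ag}}(s,a)$ for the post-action state and $u'$ for the resulting global state, the target $\lnot\big((\text{Ag},a)\bot\text{Ag}\big)$ (Definition \ref{safety_backup_plan_violation_action} specialized to $S=\{\text{Ag}\}$) is, by Definition \ref{definition_safety_back_up_plan_action}, exactly the claim $S_{\text{Ag},bp}(u') = \bigwedge_{o\in O} o(s',a_{bp},u') = \texttt{T}$. So the entire lemma reduces to showing that whichever action $a$ the agent selects leaves it in a state $s'$ from which the backup-plan trajectory still satisfies every oracle in $O$.

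First I would record the key structural property of the oracles in $O$: each is defined so that $o(s,a,u)=\texttt{T}$ already certifies that the deceleration-to-stop trajectory launched from $s'=\tau_{\text{Ag}}(s,a)$ is safe for that oracle's property. This is explicit in the Oracle Definitions (for instance $O_{\text{static safety}}$ and $O_{\text{traffic light law}}$ both require that $a_{bp}$ be executable from $s'$ without violating the associated property). Because $a_{bp}$ is a pure maximal-deceleration-to-stop maneuver (Definition \ref{def:backup_plan_action}), every suffix of that trajectory is itself a valid backup plan, so safety from $s'$ propagates to each later state along it. Hence $o(s,a,u)=\texttt{T}$ implies $o(s',a_{bp},u')=\texttt{T}$ for every $o\in O$, and it suffices to prove that the \emph{chosen} action satisfies $o(s,a,u)=\texttt{T}$ for all $o\in O$.

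Next I would run the case analysis through the profile's consistent-evaluating function. The backup plan action is always available ($a_{bp}\in\rho_{\text{Ag}}(s)$, since the agent is entitled to it), it is a straight action, and the hypothesis $S_{\text{Ag},bp}(u)=\texttt{T}$ gives $o(s,a_{bp},u)=\texttt{T}$ for every $o\in O$. Because the consistent-evaluating function compares actions lexicographically by tier and the oracles of $O$ occupy the top three tiers, any action ranked at least as high as $a_{bp}$ must itself satisfy every oracle of $O$ (satisfying ``at least as many'' specs in each of the saturated top tiers forces satisfying all of them). The intended action $a_i$ is the globally highest-ranked action, so $a_i\succsim a_{bp}$; the best straight action $a_{st}$ is the highest-ranked straight action, so $a_{st}\succsim a_{bp}$; and the Action Selection Strategy only ever outputs one of $a_i$, $a_{st}$, or $a_{bp}$. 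In every case the output satisfies all of $O$, which combined with the previous paragraph yields $S_{\text{Ag},bp}(u')=\texttt{T}$.

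The hard part will be the branch in which $a_i$ is a lane-change action, because $a_i$ is selected by a profile that deliberately omits the dynamic-safety oracle. I would argue this omission is harmless here: the safety of the agent's \emph{own} backup plan is fully encoded by the top-three-tier oracles in $O$ (static safety, traffic-light, orientation, and clearance laws), none of which is the dynamic-safety oracle, so the lexicographic-ranking argument still forces a lane-changing $a_i$ to satisfy every $o\in O$ once it is actually taken. The dynamic-safety check that the strategy performs downstream concerns the backup plans of \emph{other} agents, and is thus the business of Lemma \ref{lemma_safety_agents_outside_bubble} and its in-bubble counterpart, not of this lemma, which fixes $S=\{\text{Ag}\}$.
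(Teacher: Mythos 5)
Your proposal is correct and follows essentially the same route as the paper's proof: a case analysis over the three Action Selection Strategy outputs ($a_i$, $a_{st}$, $a_{bp}$), using the hypothesis $S_{\text{Ag},bp}(u)=\texttt{T}$ together with the consistent-evaluating function's tier-wise ranking to conclude the chosen action satisfies all top-tier oracles, and then the oracle definitions to propagate backup-plan safety to the post-action state. If anything, you are more explicit than the paper in two spots it glosses over --- the argument for why top-tier satisfaction at $(s,a,u)$ yields $S_{\text{Ag},bp}(u')=\texttt{T}$, and the lane-change branch where the profile used to pick $a_i$ omits the dynamic-safety oracle (note also that the paper's proof says ``top two tiers'' whereas Definition \ref{definition_safety_back_up_plan_action} takes $O$ to be the top three).
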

\begin{proof}
We prove this by using specific definition of elements in the Agent Protocol.
\begin{enumerate}
    \item Let us first show that any action $a\in Act_{\text{Ag}}$ that $\text{Ag}$ takes will satisfy the oracles in the top two tiers (safety and traffic rules) of $\text{Ag}$'s profile defined in Section. \ref{section_profiles}.\\
    \label{safety_top_two_tiers}
    \begin{enumerate}
        \item According to the Action Selection Strategy defined in Section \ref{section_action_selection_strategy}, $\text{Ag}$ will choose one of three actions: the agent's intended action $a_i$, the best straight action $a_{st}$, or its backup plan action $a_{bp}$.
        \item Let us consider the actions $a_i$ and $a_{st}$. 
        \label{safety:intended_and_straight}
        \begin{enumerate}
             \item Both $a_i$ and $a_{st}$ are selected via the Agent Profile and consistent-function evaluator defined in Section \ref{section_profiles}.
            \label{safety_consistent_function_evaluator}
            \item Since $S_{\text{Ag}, bp}(u) = \texttt{T}$, the agent will have at least one action ($a_{bp}$) for which the top two tiers of specifications are satisfied. 
            \label{safety:assumption_have_action}
            \item By definition of the Agent Profile and the consistent evaluator function, if $S_{\text{Ag}, bp}(u) = \texttt{T}$, the safety backup plan action $a_{bp}$ will always be chosen over an action where any of the specifications in the top two tiers of the profile are not satisfied. \label{safety:agent_profile_choice}
            \item By \ref{safety:assumption_have_action} and \ref{safety:agent_profile_choice}, $\text{Ag}$ will have $a \in  Act_{Ag}$ and will choose an action for which the top two tiers of the Agent Profile are satisfied and thus $a_{i}$ and $a_{st}$ are actions where all oracles in the top two tiers of the profile are satisfied.
        \end{enumerate}
    \item Let us consider the action $a_{bp}$. 
    \begin{enumerate}
            \item This follows from the assumption that $S_{\text{Ag}, bp}(u) = \texttt{T}$ and the definition of $S_{\text{Ag}, bp}(u)$.
            \label{safety:max_yielding_eval_2}
    \end{enumerate}
    \end{enumerate}
    \item If the oracles in the top two tiers are satisfied by an action $a$, by the definition of the oracles in Section \ref{section_profiles}, this implies that the action $a$ will take $\text{Ag}$ to a state $s'$ and the system will be in a new global state $u'$ where $S_{Ag, bp}(u')= \texttt{T}$. 
    \item $S_{Ag, bp}(u')= T$ means $\text{Ag}$ will end up in a state where $a_{bp}$ will be an action that satisfies traffic rules, avoids inevitable collision with static obstacles, and $\lnot((Ag,a_i) \bot \text{Ag})$. 
    \label{safety_top_two_tiers_end}
\end{enumerate}
\end{proof}

The following lemma states that if all $\text{Ag} \in \mathfrak{A}$ are  following the Agent Protocol, any agent $\text{Ag}$ will not take an action for which it collides with or violates the safety backup plan of any agent with higher precedence.
\begin{lemma}
\label{lemma_safety_higher}
If $\text{Ag}$ is following the Agent Protocol, and $S_{\text{Ag}, bp}(u) = \texttt{T}$, $\text{Ag}$ will only choose an action $a \in Act_{\text{Ag}}$ for which the following two conditions hold: 1) $\mathcal{G}_{Ag}(s, a) \cap (\cup_{Ag'\in S}\mathcal{G}_{\text{Ag}'}(s', a')) = \emptyset$ and 2) $\forall \text{Ag}'\in S$, $\lnot ((Ag, a)\bot Ag')$, where the set $S \triangleq \{\text{Ag}'|\text{Ag} \prec \text{Ag}'\}$, i.e. agents with higher precedence than $\text{Ag}$.
\end{lemma}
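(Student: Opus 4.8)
The plan is to combine the precedence structure with the three admissible outputs of the Action Selection Strategy, reducing the claim to the invariance already established in Lemma \ref{lemma_safety_itself} together with the dynamic safety oracle for the lane-change case. First I would characterize the set $S$: by local precedence rule~1, every $\text{Ag}' \in S$ (those with $\text{Ag} \prec \text{Ag}'$) lies in $\text{Ag}$'s bundle and is strictly longitudinally ahead of $\text{Ag}$ under $\text{proj}^B_{\text{long}}$. For any such $\text{Ag}' \notin \mathcal{B}_{\text{Ag}}(s)$ both conclusions already follow from Lemma \ref{lemma_safety_agents_outside_bubble}, so it suffices to treat $\text{Ag}' \in S \cap \mathcal{B}_{\text{Ag}}(s)$, over which $\text{Ag}$ has full state and intention information by Section~\ref{limits_on_agent_perception}. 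I would also record the structural fact that $\text{Ag}$ can never send a conflict request to any $\text{Ag}' \in S$: the sending condition requires $\text{Ag}' \precsim \text{Ag}$, which is incompatible with $\text{Ag} \prec \text{Ag}'$. Hence $\text{Ag}$ is always in a \emph{deferring} posture toward higher-precedence agents and never acquires priority over them through the token resolution.

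Next I would case-split on the action $a$ that the Action Selection Strategy outputs, which is always one of $a_{bp}$, $a_{st}$, or $a_i$. If $a = a_{bp}$, the hypothesis $S_{\text{Ag},bp}(u)=\texttt{T}$ together with Definition \ref{definition_safety_back_up_plan_action} immediately gives that executing the backup plan neither collides with nor invalidates the backup plan of any in-bubble agent, in particular every $\text{Ag}' \in S$. If $a$ is a straight action—either $a=a_{st}$ or $a=a_i$ with $\gamma_{\text{Ag}}=\texttt{straight}$—then, exactly as in Lemma \ref{lemma_safety_itself} steps \ref{safety_top_two_tiers}--\ref{safety_top_two_tiers_end}, the profile and consistent-evaluating function guarantee that $a$ satisfies all oracles in the top tiers, so the resulting global state $u'$ satisfies $S_{\text{Ag},bp}(u')=\texttt{T}$. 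Because higher-precedence agents remain longitudinally ahead of $\text{Ag}$, this maintained backup-plan safety is precisely the condition that $\text{Ag}$ keeps enough following clearance to halt before reaching any $\text{Ag}' \in S$ despite its inertia, yielding both $\mathcal{G}_{\text{Ag}}(s,a)\cap(\cup_{\text{Ag}'\in S}\mathcal{G}_{\text{Ag}'}(s',a'))=\emptyset$ and $\lnot((\text{Ag},a)\bot \text{Ag}')$.

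The remaining case is $a = a_i$ with $a_i$ a lane-change action. By the Action Selection Strategy, $\text{Ag}$ executes a lane change only when it wins its conflict cluster, $\mathcal{F}_{\text{Ag}}=\texttt{F}$, and $O_{\text{Ag},t,\text{dynamic safety}}(s,a_i,u)=\texttt{T}$. I would argue that this dynamic safety oracle, evaluated over $\mathcal{B}_{\text{Ag}}(s)$, certifies exactly that the post-lane-change configuration neither overlaps the occupancy of any in-bubble agent nor renders any such agent's backup plan unsafe; applied to $\text{Ag}' \in S$ this delivers both required conclusions. Here I would additionally invoke the turn order induced by the precedence structure—agents visibly ahead of $\text{Ag}$ resolve their turns first, as motivated in Section~\ref{section_precedence_rules}—so that the actions $a'$ of the agents in $S$ are already committed and visible when $\text{Ag}$ evaluates $a_i$, making the occupancy intersection in condition~1 well defined against their realized next positions $\mathcal{G}_{\text{Ag}'}(s',a')$.

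The main obstacle I anticipate is the straight/backup-plan case: rigorously linking the Boolean invariant $S_{\text{Ag},bp}(u')=\texttt{T}$ to the geometric non-collision and non-violation conclusions for higher-precedence agents. This requires showing that the top-tier safety oracles, and hence backup-plan safety, inherently encode sufficient longitudinal clearance behind the lead agents in $\text{Ag}$'s lane under the inertial dynamics of Section~\ref{agent_attributes}, so that preserving the invariant is equivalent to guaranteeing $\text{Ag}$ can always stop before the occupancy of any $\text{Ag}' \in S$. Establishing this equivalence—tying the oracle semantics to the backup-plan node set $BP_{\text{Ag}}$ and the bubble construction of Section \ref{bubble_construction}—is where the bulk of the careful work lies, whereas the lane-change and pure backup-plan cases reduce cleanly to the dynamic safety oracle and the standing hypothesis respectively.
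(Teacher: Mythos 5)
Your overall strategy matches the paper's first moves: the same split between agents outside the bubble (dispatched by Lemma \ref{lemma_safety_agents_outside_bubble}) and agents inside it, and the same reliance on Lemma \ref{lemma_safety_itself} to ensure the chosen action satisfies the top-tier oracles, in particular $O_{\text{dynamic safety}}$, which yields condition 1. Where you diverge is in how condition 2 is closed, and that is where your proposal has a genuine gap: you yourself flag that ``rigorously linking the Boolean invariant $S_{\text{Ag},bp}(u')=\texttt{T}$ to the geometric non-collision and non-violation conclusions'' is the bulk of the work, but you never supply that link. Your straight-action case conflates $\text{Ag}$ preserving \emph{its own} backup-plan safety with $\text{Ag}$ not violating \emph{$\text{Ag}'$'s} backup plan (these are different statements: the former concerns clearance to the agent ahead of $\text{Ag}$, the latter concerns whether $\text{Ag}$ degrades the leader's situation), and your lane-change case simply asserts that the dynamic safety oracle certifies non-violation of other agents' backup plans, which assumes exactly what needs to be proved.

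The paper closes this step with a short geometric argument that also makes your case split on action type unnecessary. Every $\text{Ag}' \in S$ is longitudinally ahead of $\text{Ag}$ in the same bundle, so by Definition \ref{safety_backup_plan_violation_action} the only way an action $a$ can produce $(\text{Ag},a)\bot\text{Ag}'$ is for $\text{Ag}$ to end up in $\text{Ag}'$'s lane directly in front of $\text{Ag}'$; under the agent dynamics of Section \ref{agent_attributes}, any such maneuver forces $\mathcal{G}_{\text{Ag}}(s,a)\cap\mathcal{G}_{\text{Ag}'}(s')\neq\emptyset$, i.e.\ a follower cannot get ahead of a leader without sweeping through the leader's occupied grid points. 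Hence any backup-plan-violating action already fails $O_{\text{dynamic safety}}$, so condition 2 reduces to condition 1; and since $S_{\text{Ag},bp}(u)=\texttt{T}$ guarantees at least one oracle-satisfying action ($a_{bp}$) exists, the protocol always selects an action satisfying the oracle. If you insert this reduction, the rest of your structure (including your correct side observation that $\text{Ag}$ can never send a conflict request to a higher-precedence agent, since sending requires $\text{Ag}'\precsim\text{Ag}$) goes through without the per-action-type analysis at all.
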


\begin{proof}
We prove this by using arguments based on the definition of precedence, the Agent Protocol, and Agent Dynamics. 
    \begin{enumerate}
        \item Let us first consider all $\text{Ag}'$ where $\text{Ag} \prec \text{Ag}'$ and $Ag' \notin \mathcal{B}_{\text{Ag}}(s)$.
        \begin{enumerate}
            \item Proof by Lemma \ref{lemma_safety_agents_outside_bubble}. 
        \end{enumerate}
        \item Now, let us consider all $\text{Ag}'$ where $\text{Ag} \prec \text{Ag}'$ and $\text{Ag}' \in \mathcal{B}_{\text{Ag}}(s)$.
        \item According to Lemma \ref{lemma_safety_itself}, $\text{Ag}$ will only take an action that satisfies all oracles in the top two tiers, including \\ $O_{\text{dynamic safety}}(s, a, u)$. 
        \item Since $a$ is such that $O_{\text{dynamic safety}}(s, a, u) = \texttt{T}$, by definition of the oracle, $Ag$ will not cause collision with any $Ag' \in \mathcal{B}_{\text{Ag}}(s)$.
        \item For any $\text{Ag} \prec \text{Ag}'$, where $\text{Ag}'$ has higher precedence than $\text{Ag}$, then $\text{proj}_{\text{long}}(\text{Ag}) < \text{proj}_{\text{long}} (\text{Ag}')$, i.e. $\text{Ag}'$ is longitudinally ahead of $\text{Ag}$. 
        \item In order for $(\text{Ag}, a) \bot \text{Ag}'$, the action $a$ would have to be such that $s_f =  \tau_{Ag}(s,a)$, and $ La(s_f) = La(s')$ and $\text{proj}_{\text{long}}(\text{Ag}) > \text{proj}_{\text{long}} (\text{Ag}')$, where $\text{Ag}$ is directly in front of $\text{Ag}'$.  
        \item Because of the agent dynamics defined in Section \ref{agent_attributes}, any $a$ such that $(\text{Ag}, a) \bot \text{Ag}'$ will require $\mathcal{G}(\text{Ag},a) \cap \mathcal{G}(\text{Ag}') \neq \emptyset$. 
        \item Thus, any such action $a$ will not satisfy the oracle \\
        $O_{\text{dynamic safety}}(s, a, u)$.
        \item Since $S_{\text{Ag}, bp}(u) = \texttt{T}$, by Assumption 6 in Section \ref{all_assumptions}, the agent will have at least one action $a_{bp}$ for which \\
        $O_{\text{dynamic safety}}(s, a, u) = \texttt{T}$.
        \item Since the agent will only choose an action for which \\
        $O_{\text{dynamic safety}}(s,a,u) = \texttt{T}$ and it always has at least one action $a_{bp}$ that satisfies the oracle, the agent will always choose an action for which $O_{\text{dynamic safety}}(s,a,u) = \texttt{T}$ and thus will take an action such that $\lnot ((\text{Ag}, a) \bot \text{Ag}')$.
    \end{enumerate}
\end{proof}

The following lemma states that if all $\text{Ag} \in \mathfrak{A}$ are  following the Agent Protocol, any agent $\text{Ag}$ will not take an action for which it collides with or violates the safety backup plan of any agent with lower precedence.

\begin{lemma}
\label{lemma_safety_lower}
If $\text{Ag}$ is following the Agent Protocol, and $S_{\text{Ag}, bp}(u) = \texttt{T}$, $\text{Ag}$ will only choose an action $a \in Act_{\text{Ag}}$ for which the following two conditions hold: 1) $\mathcal{G}_{Ag}(s, a) \cap (\cup_{Ag'\in S}\mathcal{G}_{\text{Ag}'}(s', a')) = \emptyset$ and 2) $\forall Ag'\in S$, $\lnot ((Ag, a)\bot Ag')$, where the set $S \triangleq \{\text{Ag}' | \text{Ag}' \prec \text{Ag} \}$, i.e. agents with lower precedence than $\text{Ag}$.
\end{lemma}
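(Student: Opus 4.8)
The plan is to mirror the structure of Lemma~\ref{lemma_safety_higher}, but with the decisive mechanism now being the conflict-cluster resolution rather than the dynamic-safety oracle acting on its own. The reason for the shift is geometric: a lower-precedence agent sits \emph{behind} $\text{Ag}$ in its bundle, so $\text{Ag}$ can only endanger such an agent by cutting in front of it with a lane change. As in the higher-precedence lemma, I would first dispose of the agents $\text{Ag}' \prec \text{Ag}$ with $\text{Ag}' \notin \mathcal{B}_{\text{Ag}}(s)$ directly by Lemma~\ref{lemma_safety_agents_outside_bubble}, which already delivers both the empty-intersection condition~1 and $\lnot((\text{Ag},a)\bot\text{Ag}')$ for every comparable agent outside the bubble. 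It then remains to handle $\text{Ag}' \prec \text{Ag}$ with $\text{Ag}' \in \mathcal{B}_{\text{Ag}}(s)$.

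For the in-bubble case I would split on the action $\text{Ag}$ actually takes, which by the Action Selection Strategy of Section~\ref{section_action_selection_strategy} is one of $a_i$, $a_{st}$, or $a_{bp}$. If $\text{Ag}$ takes $a_{st}$ or $a_{bp}$, or takes $a_i$ when $a_i$ is not a lane change, then $\text{Ag}$ stays in its own lane. By Lemma~\ref{lemma_safety_itself} the chosen action satisfies every oracle in the top two tiers, in particular $O_{\text{dynamic safety}}(s,a,u)$, so condition~1 holds. For condition~2, note that by the agent dynamics of Section~\ref{agent_attributes} and the fact that $\text{Ag}'$ is longitudinally behind $\text{Ag}$ in the same bundle, an in-lane action by $\text{Ag}$ only preserves or increases the longitudinal gap relevant to $\text{Ag}'$'s maximal-deceleration backup plan; creating the configuration required for $(\text{Ag},a)\bot\text{Ag}'$ would force $\text{Ag}$ \emph{into} $\text{Ag}'$'s lane ahead of it, which a non-lane-change action cannot do. Hence $\lnot((\text{Ag},a)\bot\text{Ag}')$. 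This subcase is essentially the mirror image of steps~5--7 of Lemma~\ref{lemma_safety_higher}.

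The substantive case is when $a_i$ is a lane change and $\text{Ag}$ actually executes it. Here the Action Selection Strategy permits the lane change only when $\mathcal{W}_{\text{Ag}}=\texttt{T}$, $\mathcal{F}_{\text{Ag}}(u,a_i)=\texttt{F}$, and $O_{\text{dynamic safety}}(s,a_i,u)=\texttt{T}$; the last of these immediately yields condition~1. For condition~2, fix a lower-precedence $\text{Ag}'$ in the bubble that conflicts with $\text{Ag}$ (if it does not conflict, conditions~1 and~2 hold vacuously by Definition~\ref{agent_action_conflict}). I would then verify that all six conflict-request criteria of Section~\ref{section_conflict_cluster_resolution} hold: $a_i$ is a lane change, $\text{Ag}' \in \mathcal{B}_{\text{Ag}}(s)$, $\text{Ag}' \precsim \text{Ag}$ (since $\text{Ag}' \prec \text{Ag}$), equal headings, the conflict itself, and $\mathcal{F}_{\text{Ag}}=\texttt{F}$. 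Thus $\text{Ag}$ sends $\text{Ag}'$ a conflict request, so $\text{Ag}' \in \mathcal{C}_{\text{Ag}}$ as a receiver. Since $\mathcal{W}_{\text{Ag}}=\texttt{T}$, Lemma~\ref{at_most_conflict_cluster} forces $\text{Ag}'$ to lose, and the Action Selection Strategy then requires the losing receiver $\text{Ag}'$ to take $a_{bp}$. Finally, by the contrapositive of Definition~\ref{definition_max_yielding_flag_not_enough}, $\mathcal{F}_{\text{Ag}}=\texttt{F}$ means $\text{Ag}$'s lane change does not violate $\text{Ag}'$'s backup-plan safety once $\text{Ag}'$ applies that backup plan---exactly the situation just established---so $\lnot((\text{Ag},a)\bot\text{Ag}')$.

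I expect the main obstacle to be this lane-change subcase, and specifically the joint bookkeeping around the flag $\mathcal{F}_{\text{Ag}}$. First, I must confirm that $\text{Ag}$ winning its cluster and executing the lane change is \emph{consistent} with having sent the request: both require $\mathcal{F}_{\text{Ag}}=\texttt{F}$, which closes the potential gap in which a conflict exists but no request was ever sent. Second, the single boolean $\mathcal{F}_{\text{Ag}}(u,a_i)$ must be read as certifying sufficiency of yielding simultaneously for \emph{every} conflicting lower-precedence $\text{Ag}'$, not merely one; I would make this reading explicit when invoking Definition~\ref{definition_max_yielding_flag_not_enough}, appealing to its dependence on the full game state $u$. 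Everything else reduces cleanly to the cited lemmas and the dynamics.
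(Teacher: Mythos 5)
Your proof follows the same skeleton as the paper's: dispose of comparable agents outside the bubble via Lemma~\ref{lemma_safety_agents_outside_bubble}, obtain condition~1 from the dynamic-safety oracle, and case-split on $a_{bp}$, $a_{st}$, $a_i$. Where you genuinely diverge is the lane-change subcase. The paper argues purely through the flag: it asserts that the gap condition defining $(\text{Ag},a_i)\bot\text{Ag}'$ (with $\text{Ag}'$'s state already propagated under $a_{bp}$) is exactly the condition under which $\mathcal{F}_{\text{Ag}}(u,a_i)$ is raised, concludes ``lane change permitted $\Rightarrow \mathcal{F}_{\text{Ag}}=\texttt{F} \Rightarrow$ no violation,'' and never discusses what $\text{Ag}'$ actually does. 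You instead close the loop the paper leaves implicit: since by Definition~\ref{definition_max_yielding_flag_not_enough} the flag being \texttt{F} certifies safety only \emph{conditional on} $\text{Ag}'$ applying its backup plan, you check the six request criteria, conclude $\text{Ag}'$ receives a request, use $\mathcal{W}_{\text{Ag}}=\texttt{T}$ together with Lemma~\ref{at_most_conflict_cluster} to force $\text{Ag}'$ to lose, and invoke the rule that a losing receiver must take $a_{bp}$. This buys an explicit match between the flag's semantics and the actual evolution of the game; the paper's shorter route is sound only under the reading (which its proof adopts) that the violation predicate for this case is evaluated with $\text{Ag}'$ assumed to execute $a_{bp}$. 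Your version is the more defensible one, and it is consistent with the paper's conclusion.

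Two imprecisions should be fixed. First, your claim that a non-lane-change $a_i$ ``stays in its own lane'' is false when $\gamma_{\text{Ag}} \in \{\texttt{left-turn}, \texttt{right-turn}\}$; the paper handles turns as a separate subcase by noting that the agent ends up in a different bundle, and by Definition~\ref{safety_backup_plan_violation_action} agents in different bundles cannot violate each other's backup plans. Your monotone-gap argument does not literally cover turns, though the conclusion survives once this bundle argument is substituted. Second, your inference from ``the in-lane action preserves or increases the gap relevant to $\text{Ag}'$'s backup plan'' to $\lnot((\text{Ag},a)\bot\text{Ag}')$ requires that $\text{Ag}'$'s backup plan was safe \emph{before} the step, i.e.\ $S_{\text{Ag}',bp}(u)=\texttt{T}$; this is not part of the lemma's stated hypothesis (which speaks only of $\text{Ag}$) and must be imported from the enclosing assertion $P_t$ of the safety theorem---the paper invokes it explicitly in its $a_{bp}$ subcase, and you should too.
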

\begin{proof}
We prove this by using arguments based on the definition of precedence, the Agent Protocol, and Agent Dynamics. 
\begin{enumerate}
    \item Let us first consider all $\text{Ag}'$ where $\text{Ag} \prec \text{Ag}'$ and $\text{Ag}' \notin \mathcal{B}_{\text{Ag}}(s)$.
    \begin{enumerate}
        \item Proof by Lemma \ref{lemma_safety_agents_outside_bubble}.
    \end{enumerate}
    \item Now, let us consider all $\text{Ag}'$ where $\text{Ag} \prec \text{Ag}'$ and $\text{Ag}' \in \mathcal{B}_{\text{Ag}}(s)$.
    \item According to \ref{safety_top_two_tiers_end}, $\text{Ag}$ will only take an action that satisfies all oracles in the top two tiers, including $O_{\text{dynamic safety}}(s,a,u)$. 
    \item Since $a$ is such that $O_{\text{dynamic safety}}(s, a, u) = \texttt{T}$, by definition of the oracle, $Ag$ will not cause collision with any $Ag' \in \mathcal{B}_{\text{Ag}}(s)$.
    \item According to the Action Selection Strategy defined in Section \ref{section_action_selection_strategy}, $\text{Ag}$ will choose one of three actions: the agent's intended action $a_i$, the best straight action $a_{st}$, or its backup plan action $a_{bp}$.
    \item Let us consider the backup plan action $a_{bp}$.
    \label{safety_lower_backup_plan}
    \begin{enumerate}
        \item By violation of safety backup plan, $((\text{Ag}, a_{bp}) \bot \text{Ag}')$ only if $La(\text{Ag}) = La(\text{Ag}')$.
        \item W.l.o.g., let us consider $\text{Ag}'$ that is directly behind $\text{Ag}$.
        \item Since $S_{\text{Ag}', bp}(s, u) = \texttt{T}$, by Assumption 6 in Section \ref{all_assumptions}, $O_{\text{dynamic safety}}(s, a_{bp},u) = \texttt{T}$, meaning $\text{Ag}'$ will be far enough behind $\text{Ag}$ so that if $\text{Ag}$ executes its backup plan action $a_{bp}$, $\text{Ag}'$ can safely execute its own backup plan action. 
        \item Thus, by Definition \ref{safety_backup_plan_violation_action}, $\lnot((\text{Ag}, a_{bp}) \bot \text{Ag}')$.
    \end{enumerate}
    \item Let us consider the best straight action $a_{st}$.
    \begin{enumerate}
        \item This follows from the arguments made in \ref{safety_lower_backup_plan}, since $a_{st}$ is a less severe action than $a_{bp}$.
    \end{enumerate}
    \item Let us consider the intended action $a_i$.
    \begin{enumerate}
        \item Let us consider when $\gamma_{\text{Ag}} = \{\texttt{straight}\}$. 
        \begin{enumerate}
            \item This follows from \ref{safety_lower_backup_plan}.
        \end{enumerate}
        \item Let us consider when $\gamma_{\text{Ag}} \in \{\texttt{right-turn}, \texttt{left-turn} \}$.
        \begin{enumerate}
            \item If $\text{Ag}$ takes such an action, $\text{Ag}$ will end up in a state where $Bu(\text{Ag}') \neq Bu(\text{Ag})$ and from Definition \ref{safety_backup_plan_violation_action}, agents in different bundles cannot violate each others' backup plans. 
        \end{enumerate}
        \item Let us consider when $\gamma_{\text{Ag}} \in \{\texttt{right-lane change}$  \\$\texttt{left-lane change} \}$.
        \begin{enumerate}
            \item $(\text{Ag}, a_i) \bot \text{Ag}'$ when $a_i$ is a lane change and the agents $\text{Ag}$ and $\text{Ag}'$ are at a state such that $s_f = \tau(s, a_i)$ and $s_f' = \tau(s', a_{bp})$, respectively, where $d(s_f, s_f') < gap_{req}$, where $d(s_f, s_f')$ is the $l_2$ distance between $s_f$ and $s_f'$.
            \item When this condition holds, the agent's max-yielding-not-enough flag $\mathcal{F}_{\text{Ag}}(u, a_i)$ defined in Section \ref{definition_max_yielding_flag_not_enough} will be set.
            \item According to the action-selection strategy, $\text{Ag}$ will only take $a_i$ when  $\mathcal{F}_{\text{Ag}}(u, a_i) = \texttt{F}$. 
            \item Thus, $\text{Ag}$ will only take $a_i$ when $\lnot((Ag, a_{i}) \bot Ag')$.
        \end{enumerate}
    \end{enumerate}
\end{enumerate}
\end{proof}

The following lemma states that if all $\text{Ag} \in \mathfrak{A}$ are  following the Agent Protocol, any agent $\text{Ag}$ will not take an action for which it collides with or violates the safety backup plan of any agent with equal precedence.

\begin{lemma}
\label{lemma_safety_equal}
If $\text{Ag}$ is following the Agent Protocol, and $S_{\text{Ag}, bp}(u) = \texttt{T}$, $\text{Ag}$ will only choose an action $a \in Act_{\text{Ag}}$ for which the following two conditions hold: 1) $\mathcal{G}_{\text{Ag}}(s, a) \cap (\cup_{\text{Ag}'\in S}\mathcal{G}_{\text{Ag}'}(s', a')) = \emptyset$ and 2) $\forall \text{Ag}'\in S$, $\lnot ((\text{Ag}, a)\bot \text{Ag}')$, where the set $S \triangleq \{\text{Ag}' | \text{Ag}' \sim \text{Ag} \}$, i.e. agents with equivalent precedence as the agent.
\end{lemma}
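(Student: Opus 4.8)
The plan is to follow the same decomposition used in the proofs of Lemmas \ref{lemma_safety_higher} and \ref{lemma_safety_lower}. First I would split the set $S = \{\text{Ag}' \mid \text{Ag}' \sim \text{Ag}\}$ into those equal-precedence agents lying outside $\mathcal{B}_{\text{Ag}}(s)$ and those inside it. The outside case is immediate from Lemma \ref{lemma_safety_agents_outside_bubble}, since that lemma already covers every $\text{Ag}'$ outside the bubble that is comparable or equivalent to $\text{Ag}$. For $\text{Ag}' \in \mathcal{B}_{\text{Ag}}(s)$, condition 1 (no grid-point overlap) follows exactly as before: by Lemma \ref{lemma_safety_itself}, $\text{Ag}$ only selects actions satisfying all oracles in the top two tiers, in particular $O_{\text{dynamic safety}}(s,a,u) = \texttt{T}$, which by definition rules out collision with any in-bubble agent.

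The remaining work is condition 2, $\lnot((\text{Ag}, a) \bot \text{Ag}')$, and here I would exploit the geometry of equivalence. By the local precedence rules, $\text{Ag}' \sim \text{Ag}$ means $Bu(\text{Ag}') = Bu(\text{Ag})$ with equal longitudinal projection, so (using Assumption 10 that bundles have at most two lanes) $\text{Ag}'$ sits directly beside $\text{Ag}$ in the other lane. I would then case on the steering type of $\text{Ag}$'s chosen action, reusing the argument of Lemma \ref{lemma_safety_lower}: if $\gamma_{\text{Ag}} = \texttt{straight}$, $\text{Ag}$ remains in its own lane so $La(\text{Ag}) \neq La(\text{Ag}')$ and no backup-plan violation is possible; if $\gamma_{\text{Ag}}$ is a turn, $\text{Ag}$ leaves the bundle so $Bu(\text{Ag}) \neq Bu(\text{Ag}')$ and again no violation occurs, since agents in different bundles cannot violate one another's backup plans. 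This reduces the entire lemma to the single case where $a$ is a lane change into $\text{Ag}'$'s lane.

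The lane-change case is the main obstacle, and it is genuinely harder than in the higher/lower-precedence lemmas because equivalence is symmetric: $\text{Ag}' \sim \text{Ag}$ gives both $\text{Ag}' \precsim \text{Ag}$ and $\text{Ag} \precsim \text{Ag}'$, so geometry alone does not decide who yields and the token-based conflict resolution must do the work. I would argue that whenever $\text{Ag}$ actually executes a lane change, the Action Selection Strategy of Section \ref{section_action_selection_strategy} forces $\text{Ag}$ to be a winner of its conflict cluster with $\mathcal{F}_{\text{Ag}}(u,a) = \texttt{F}$ and $O_{\text{dynamic safety}}(s,a,u) = \texttt{T}$. The flag condition $\mathcal{F}_{\text{Ag}} = \texttt{F}$ is, by Definition \ref{definition_max_yielding_flag_not_enough}, exactly the statement that the lane change does not render $S_{\text{Ag}', bp}$ false, i.e. $\lnot((\text{Ag}, a) \bot \text{Ag}')$, giving condition 2. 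For condition 1 and for the symmetric danger that both side-by-side agents attempt conflicting lane changes, I would invoke that $\text{Ag}' \precsim \text{Ag}$ satisfies the precedence requirement for a conflict request, so the two agents land in one another's conflict clusters; Lemma \ref{at_most_conflict_cluster} then guarantees at most one of them wins, so the loser is driven to $a_{st}$ or $a_{bp}$ (a straight-type action already handled above), and they never both commit to the swap. The delicate point to verify carefully is that the flag and the dynamic-safety oracle, as defined against $\text{Ag}'$ executing its own backup plan, remain sound under the quasi-simultaneous ordering in which $\text{Ag}'$ may instead advance via $a_{st}$; I expect this to follow because $S_{\text{Ag}', bp}$ certifies the persistence of a safe backup option rather than a particular trajectory, but it is the step most in need of explicit checking.
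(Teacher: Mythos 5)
There is a genuine gap, and it sits exactly at the point where this lemma differs from the higher/lower-precedence ones. Your opening claim for condition 1 --- that for $\text{Ag}' \in \mathcal{B}_{\text{Ag}}(s)$ the absence of grid-point overlap ``follows exactly as before'' because Lemma \ref{lemma_safety_itself} forces $O_{\text{Ag},t,\text{dynamic safety}}(s,a,u) = \texttt{T}$ --- is precisely what fails for equal-precedence agents. The paper's proof states this explicitly: agents with equal precedence take their actions \emph{simultaneously}, so the dynamic safety oracle, which is evaluated before the simultaneous moves, does not guarantee no collision. Two side-by-side agents can each pass their own oracle against the other's current state and still overlap once both move (e.g., your swap scenario, or $\text{Ag}$ going straight while $\text{Ag}'$ cuts into its lane). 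The paper therefore never uses the oracle to discharge condition 1 here; instead it derives collision-freedom from the \emph{other} agent's protocol compliance: for each of $\text{Ag}$'s possible choices ($a_{bp}$, $a_{st}$, $a_i$ by steer type), it argues that any equal-precedence $\text{Ag}'$ will have its flag $\mathcal{F}_{\text{Ag}'}(u,a_i')$ raised or will have lost the conflict-cluster resolution, hence will itself fall back to a straight/backup action, and since equal-precedence agents start in distinct lanes, those simultaneous straight-type actions cannot overlap.

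Your later conflict-cluster argument (mutual requests, Lemma \ref{at_most_conflict_cluster}, the loser reverts to $a_{st}$ or $a_{bp}$) is the right mechanism and matches the paper, but you only deploy it for the case where \emph{both} agents attempt lane changes. The asymmetric case --- $\text{Ag}$ chooses $a_{bp}$ or $a_{st}$ while $\text{Ag}'$ intends a lane change into $\text{Ag}$'s lane --- is left resting on the faulty oracle claim; condition 1 is a statement about the intersection of \emph{both} agents' occupancy sets, so a purely geometric ``$\text{Ag}$ stays in its own lane'' argument cannot close it. To repair the proof, replicate the paper's reasoning for every one of $\text{Ag}$'s action types: whenever $\text{Ag}$'s choice is straight-type, any equal-precedence $\text{Ag}'$ whose intended lane change would conflict either has $\mathcal{F}_{\text{Ag}'} = \texttt{T}$ or has entered $\text{Ag}$'s conflict cluster and (by the at-most-one-winner property) lost, so $\text{Ag}'$ also executes a straight-type action and the lanes stay disjoint. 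Your closing worry about the flag being defined against $\text{Ag}'$'s backup plan rather than $a_{st}$ is legitimate but secondary; the primary missing step is the one above.
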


\begin{proof}
We prove this by using arguments based on the definition of precedence, Agent Dynamics, and the Agent Protocol. 
\begin{enumerate}
    \item Let us first consider all $\text{Ag}'$ where $\text{Ag} \prec \text{Ag}'$ and $\text{Ag}' \notin \mathcal{B}_{\text{Ag}}(s)$.
    \begin{enumerate}
        \item Proof by Lemma \ref{lemma_safety_agents_outside_bubble}.
    \end{enumerate}
    \item Now, let us consider all $\text{Ag}'$ where $\text{Ag} \prec \text{Ag}'$ and $\text{Ag}' \in \mathcal{B}_{\text{Ag}}(s)$.
    \item Let us first consider the agent itself, since an agent has equivalent precedence to itself. 
    \begin{enumerate}
        \item This is true by Lemma \ref{lemma_safety_itself}.
    \end{enumerate}
    \item This can be proven for any other agents of equivalent precedence that is not the agent itself as follows.
    \item Agents with equal precedence take actions simultaneously so $O_{\text{dynamic safety}}(s,a,u)$ does not guarantee no collision.
    \item According to the Action Selection Strategy defined in Section \ref{section_action_selection_strategy}, $\text{Ag}$ will choose one of three actions: the agent's intended action $a_i$, the best straight action $a_{st}$, or its backup plan action $a_{bp}$.
    \item By definition of precedence assignment, any $\text{Ag}'$ for which $\text{Ag}' \sim \text{Ag}$ will be such that $La(\text{Ag}) \neq La(\text{Ag}')$.
    \label{equal_precedence_separate_lanes}
    \item Let us show if $\text{Ag}$ selects $a_{bp}$, it will 1) not collide with any $\text{Ag}'\in S$ and 2) $\lnot ((\text{Ag}, a_{bp}) \bot \text{Ag}')$.
    \label{equal_backup_plan_action}
    \begin{enumerate}
        \item W.l.o.g., let us consider $\text{Ag}'$ where $\text{Ag}' \sim Ag$.
        \item The flag $\mathcal{F}_{\text{Ag'}}(u, a_i) = \texttt{T}$ if $\text{Ag}'s$ intended action $a_i$ causes collision with $\text{Ag}$ or $(Ag', a_{i}) \bot Ag$, i.e. it collides with or violates the safety of $\text{Ag}$'s backup plan action.
        \item By the action-selection-strategy, $\text{Ag}'$ will not take the action $a_i$ when $\mathcal{F}_{\text{Ag'}}(u,a_i) = \texttt{T}$, so this guarantees $\text{Ag}$ will not collide with $\text{Ag}'$ when $\text{Ag}$ takes $a_{bp}$.
        \item By the Agent Dynamics, $\text{Ag}$'s backup plan action cannot cause $\text{Ag}$ to end up in a position where it can violate $\text{Ag}'$'s backup plan without colliding with it--for which $\text{Ag}'$'s flag $\mathcal{F}_{\text{Ag}}(u, a_i)$ would be set. 
    \end{enumerate}
    \item Let us show that $\text{Ag}$ will only choose an $a_{st}$ if it will 1) not collide with $\text{Ag}'\in S$ and 2) $\lnot ((\text{Ag}, a_{st}) \bot \text{Ag}')$.
    \label{safety_equal_straight}
    \begin{enumerate}
        \item When $a_{st} = a_{bp}$, then the arguments in \ref{equal_backup_plan_action} hold.
        \item $\text{Ag}$ selects an $a_{st}$ that is not $a_{bp}$ only when 1) its conflict cluster is empty (i.e. $C_{\text{Ag}} = \emptyset$) or 2) when it has received a conflict request from another agent and it has won its conflict cluster resolution (i.e. $W_{\text{Ag}} = \texttt{T}$). 
        \item If $C_{\text{Ag}} = \emptyset$, by definition of how conflict clusters are defined in Section \ref{conflict_cluster_definition}, the agent's action $a_{st}$ will not cause $\text{Ag}$ to collide with any $\text{Ag}' \in S$, and $\forall \text{Ag}' \in S, \lnot((\text{Ag}, a_{st}) \bot \text{Ag}')$.
        \item In the case $\text{Ag}$ has received a conflict request and has won $W_{\text{Ag}}$, by Lemma \ref{conflict_bubble_lemma}, if $W_{\text{Ag}} = \texttt{T}$, it will be the only agent in its conflict cluster that has won.
        \label{straight_action_safety_2}
        \item By definition of the conflict cluster, any $\text{\text{Ag}}' \in C_{\text{Ag}}$ where $\text{Ag} \sim \text{Ag}'$ will take a straight action. 
        \item Since agents of equivalent precedence are initially in separate lanes by \ref{equal_precedence_separate_lanes} and any $\text{Ag}' \in S$ will take a straight action, then $La(s_{\text{Ag}, t+1}) \neq La(s_{\text{Ag'}, t+1})$ when $\text{Ag}$ takes $a_{st}$.
        \item Thus, by definition of agent dynamics and Definition \ref{safety_backup_plan_violation_action}, the action will not cause $\text{Ag}$ to collide with any $\text{Ag}' \in S$, and $\forall \text{Ag}' \in S, \lnot((\text{Ag}, a_{st}) \bot \text{Ag}')$.
    \end{enumerate}
    \item Let us show that $\text{Ag}$ will only choose an $a_{i}$ if it will 1) not collide with any $\text{Ag}' \in S$ and 2) $\lnot ((\text{Ag}, a_{i}) \bot \text{Ag}')$.
    \begin{enumerate}
    \item Let us consider when $\gamma_{\text{Ag}} = \texttt{straight}$ for $a_i$.
    \begin{enumerate}
        \item This follows from the same arguments presented in \ref{safety_equal_straight}.
    \end{enumerate}
    \item Let us consider when $\gamma_{\text{Ag}} \in \{\texttt{right-turn, left-turn}\}$ for $a_i$. 
    \begin{enumerate}
        \item This follows from the fact that all other agents are following the Agent Protocol and will not take a lane-change action in the intersection, and because of the definition of the Agent Dynamics and Road Network.
    \end{enumerate}
    \item Let us consider when  $\gamma_{\text{Ag}} \in \{ \texttt{right-lane change}$, \\$\texttt{left-lane change} \}$.
    \begin{enumerate}
        \item $\text{Ag}$ will only take its intended action $a_i$ if the flag \\ $\mathcal{F}_{\text{Ag}}(u, a_i) = \texttt{F}$, and in the case that it is part of a conflict cluster, it is the winner of the conflict cluster resolution, i.e. $\mathcal{W}_{\text{Ag}} = \texttt{T}$.
        \item By definition of $\mathcal{F}_{\text{Ag}}(u, a_i)$, the agent will not take $a_i$ when $a_i$ causes $\text{Ag}$ to collide with any agent $\text{Ag}' \in S$ or when it causes $Ag$ to violate the safety of the back up plan of another agent $Ag'$, i.e. $ \exists Ag'$ s.t. $(\text{Ag}, a_i) \bot \text{Ag}'$.
        \item In the case the agent has received a conflict request and has won $\mathcal{W}_{\text{Ag}}$, by Lemma \ref{conflict_bubble_lemma}, if $\mathcal{W}_{\text{Ag}} = \texttt{T}$, it will be the only agent in its conflict cluster that has won.
        \item By definition of the conflict cluster, any $\text{Ag}' \in C_{\text{Ag}}$ where $\text{Ag} \sim \text{Ag}'$ will take its backup plan action $a_{bp}$, and thus $s_f = \tau(s, a_{st})$, and $s_f' = \tau(s, a_{bp})$, where \\
        $ d(s_f, s_f') \geq gap_{\text{req}} $. 
        \item Thus, $a_i$ will only be selected when $a_i$ does not cause $\text{Ag}$ to collide with any $\text{Ag}' \in S$ and \\ $\forall \text{Ag}' \in S, \lnot((\text{Ag}, a_i) \bot \text{Ag}')$. 
    \end{enumerate}
    \end{enumerate}
\end{enumerate}
\end{proof}

The following lemma states that if all $\text{Ag} \in \mathfrak{A}$ are  following the Agent Protocol, any agent $\text{Ag}$ will not take an action for which it collides with or violates the safety backup plan of any agent with incomparable precedence to it.
\begin{lemma}
\label{lemma_safety_incomparable}
If $\text{Ag}$ is following the Agent Protocol, and $S_{\text{Ag}, bp}(u) = \texttt{T}$, $\text{Ag}$ will only choose an action $a \in Act_{\text{Ag}}$ for which the following two conditions hold: 1) $\mathcal{G}_{\text{Ag}}(s, a) \cap (\cup_{\text{Ag}'\in S}\mathcal{G}_{\text{Ag}'}(s', a')) = \emptyset$ and 2) $\forall \text{Ag}'\in S$, $\lnot ((\text{Ag}, a)\bot \text{Ag}')$, where the set $S \triangleq \{\text{Ag}' | \text{Ag}' \not \sim \text{Ag} \}$, i.e. agents with precedence incomparable to the agent.
\end{lemma}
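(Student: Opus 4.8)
The plan is to exploit the third local precedence assignment rule of Section~\ref{section_precedence_rules}: two agents have incomparable precedence precisely when they lie in different bundles. Hence the set $S = \{\text{Ag}' \mid \text{Ag}' \not\sim \text{Ag}\}$ is exactly $\{\text{Ag}' \mid Bu(\text{Ag}') \neq Bu(\text{Ag})\}$. Since every grid point carries a single legal orientation and a bundle is a maximal collection of same-orientation lanes, grid points belonging to distinct bundles are disjoint except at intersection grid points (those in $\mathcal{S}_{\text{intersection}}$, which carry more than one legal orientation). Consequently any grid-point overlap, and any interference of $\text{Ag}$ with the straight-deceleration backup plan $BP_{\text{Ag}'}$ of $\text{Ag}'$, can occur only on intersection grid points. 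This structural fact reduces both conditions of the lemma to showing safety between different-bundle agents at intersections.

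First I would dispatch the agents outside the bubble. For $\text{Ag}' \in S$ with $\text{Ag}' \notin \mathcal{B}_{\text{Ag}}(s)$, the argument of Lemma~\ref{lemma_safety_agents_outside_bubble} applies verbatim: the construction in Appendix~\ref{bubble_construction} collects every grid point from which another car can reach any point of $\mathcal{Z}_{\text{Ag}}(s_0)$ or drive its backup plan into $\mathcal{Z}_{\text{Ag}}(s_0)$, and this construction makes no reference to precedence. Hence no such $\text{Ag}'$ can share a grid point with, or obstruct the backup plan of, $\text{Ag}$, so conditions 1) and 2) hold. It then remains to treat $\text{Ag}' \in S$ with $\text{Ag}' \in \mathcal{B}_{\text{Ag}}(s)$, where by the structural fact the two agents meet at an intersection.

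For the intersection case I would invoke Lemma~\ref{lemma_safety_itself}: since $S_{\text{Ag}, bp}(u) = \texttt{T}$ and $\text{Ag}$ follows the protocol, $\text{Ag}$ selects an action satisfying every oracle in the top two tiers of its profile, in particular $O_{\text{traffic light law}}$ and $O_{\text{traffic intersection clearance law}}$. Because all agents follow the protocol, $\text{Ag}'$ likewise respects these oracles. By Assumption~4 of Section~\ref{all_assumptions}, traffic lights coordinate crossing traffic so that agents respecting the traffic-light rules cannot collide; this rules out grid-point overlap at the intersection, giving condition 1). For condition 2), the intersection-clearance oracle guarantees that $\text{Ag}$ never ends in a state whose backup plan leaves it stranded in the intersection against a red light, so $\text{Ag}$ vacates the conflicting crossing cells in time for $\text{Ag}'$'s deceleration path to remain safe, whence $\lnot((\text{Ag},a) \bot \text{Ag}')$.

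The hard part will be condition 2) at the intersection: showing that an action taken by $\text{Ag}$ while crossing or decelerating through an intersection never degrades the safety of $\text{Ag}'$'s backup plan. This requires combining the intersection-clearance oracle with the traffic-light coordination assumption and the multi-step inertial dynamics, since a single decelerating agent may occupy intersection cells over several time-steps. The delicate point is to confirm that the clearance guarantee covers the entire backup-plan occupancy $BP_{\text{Ag}}$, not merely the immediate next cell, so that $\text{Ag}$'s worst-case stopping trajectory never intrudes on $\text{Ag}'$'s crossing path.
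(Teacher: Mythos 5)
Your structural reduction (incomparable $=$ different bundle, different bundles only meet at intersections) matches the spirit of the paper's proof, but your proposal has a genuine gap: it handles only the \emph{perpendicular-traffic} half of the intersection case and silently drops the \emph{oncoming-traffic} half. Agents in the oncoming lanes of the same road are also in a different bundle (opposite legal orientation), hence incomparable, hence in $S$ — and traffic-light coordination (Assumption 4) does \emph{not} protect against them, because both directions of the same road face a green light simultaneously. The dangerous interaction is a left turn across oncoming traffic, and no appeal to $O_{\text{traffic light law}}$ or $O_{\text{traffic intersection clearance law}}$ rules it out. The paper's proof resolves exactly this case by case-splitting the intended action on maneuver type and invoking $O_{\text{Ag},t,\text{unprotected left-turn safety}}$: a left turn is only ever taken when that oracle holds (light has cycled green-to-red with no agents in the oncoming lanes), which is what guarantees both no overlap and $\lnot((\text{Ag},a)\bot\text{Ag}')$ against oncoming agents. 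Your proposal never mentions this oracle, and your closing paragraph concedes that condition 2) at the intersection is left unresolved — but that unresolved step is precisely where the left-turn/oncoming argument is needed, so the proof as written does not close.

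Two smaller issues. First, your dispatch of agents outside the bubble "verbatim" via Lemma \ref{lemma_safety_agents_outside_bubble} is not licensed: that lemma's set $S$ is explicitly restricted to agents comparable to $\text{Ag}$ ($\sim$, $\prec$, or $\succ$), and the bubble construction itself is built around same-bundle interactions; for incomparable agents the paper instead relies on the intersection perception assumption (agents see across the intersection) together with the oracles, and its proof of this lemma contains no outside-bubble step at all. Second, the paper's proof is organized as a case analysis over the three actions the Action Selection Strategy permits ($a_{bp}$, $a_{st}$, $a_i$, with $a_i$ further split by maneuver, including the observation that $O_{\text{traffic intersection lane change law}}$ forbids lane changes in intersections); your reliance on Lemma \ref{lemma_safety_itself} to say "whatever is chosen satisfies the top two tiers" is a reasonable compression of that structure, but it only yields the perpendicular-bundle conclusion — the oncoming-bundle conclusion genuinely requires the per-maneuver analysis you omitted.
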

\begin{proof}
We prove this by using arguments based on the definition of precedence, Agent Dynamics, and the Agent Protocol. 
\begin{enumerate}
\item Let us show when $\text{Ag}$ chooses $a_{bp}$, it will 1) not collide with any $\text{Ag}' \in S$ and 2) $\lnot ((\text{Ag}, a_{bp}) \bot \text{Ag}')$.
    \label{safety_incomparable_backup}
    \begin{enumerate}
        \item Since $S_{\text{Ag}, bp}(u) = \texttt{T}$, the agent will have at least one action ($a_{bp}$) for which the top two tiers of specifications are satisfied.
        \label{assumption_backup_plan}
        \item By \ref{assumption_backup_plan}, the action $a_{bp}$ will only take $\text{Ag}$ into the  intersection if traffic light is green. 
        \item By Assumption 4, all traffic lights are coordinated so if agents respect traffic light rules, they will not collide.
        \item By the assumption that all other $\text{Ag}' \in \mathfrak{G}$ are obeying the same protocol, each agent will only take actions that satisfy the top two tiers of their profile.
        \item Any $\text{Ag}'$ in a perpendicular bundle will not enter the intersection since they have a red light. 
        \item Thus, $\text{Ag}$ cannot collide or violate the backup plan of agents in perpendicular bundles.
        \item Any $\text{Ag}'$ in an oncoming traffic bundle must only take an unprotected left-turn when it satisfies \\ $O_{\text{unprotected left-turn}}(s,a,u)$.
        \item Thus $\text{Ag}$ will not collide or violate the backup plan of agents in bundles of oncoming traffic.
        \end{enumerate}
    \item Let us show that when $\text{Ag}$ chooses $a_{st}$, it will 1) not collide with any $\text{Ag}' \in S$ and 2) $\lnot ((\text{Ag}, a_{st}) \bot \text{Ag}')$.
    \label{safety_incomparable_straight}
    \begin{enumerate}
        \item Since $a_{st}$ is chosen according to the Agent Profile, it will only be a straight action that is not $a_{bp}$ as long as it satisfies the top-two tiers of the profile and more. 
        \item Thus, $a_{st}$ will only take $\text{Ag}$ into intersection if traffic light is green.  
        \item By the same arguments in \ref{safety_incomparable_backup}, this holds.
    \end{enumerate}
    \item Let us show that when $\text{Ag}$ chooses $a_{i}$, it will 1) not collide with any $\text{Ag}' \in S$ and 2) $\lnot ((\text{Ag}, a_{i}) \bot \text{Ag}')$.
    \begin{enumerate}
        \item Let us consider when $a_i$ is such that $\gamma_{Ag} = \texttt{straight}$. 
        \begin{enumerate}
            \item This follows from the same arguments presented in \ref{safety_incomparable_straight}.
        \end{enumerate}
        \item Let us consider when $a_i$ is such that 
        $\gamma_{Ag} \in$ \\ $\{\texttt{left-lane change$,
        $right-lane change}\}$. 
        \begin{enumerate}
            \item $\text{Ag}$ will never select such an action at an intersection since $O_{\text{intersection lane-change}}(s,a,u)$ will evaluate to $\texttt{F}$.
        \end{enumerate}
        \item Let us consider when $a_i$ is such that $\gamma_{Ag} \in \\ \{\texttt{left-turn, right-turn}\}$.
        \begin{enumerate}
            \item By the assumption that all other agents are following the Agent Protocol, all $\text{Ag}'$ that are in bundle perpendicular to $Bu(Ag)$ will not be in the intersection and will not collide with $\text{Ag}$.
            \label{safety_incomparable_no_collision_turn}
            \item Further, the traffic light oracle $O_{\text{traffic light}}(s,a,u) = \texttt{T}$ only when $\lnot((\text{Ag}, a_i) \bot \text{Ag}')$ when $\gamma_{\text{Ag}} = \texttt{right-turn}$.
            \label{safety_incomparable_right_turn}
            \item Thus, when $\gamma_{\text{Ag}} = \texttt{right-turn}$ proof by \ref{safety_incomparable_no_collision_turn} and \ref{safety_incomparable_right_turn}.
            \item For an action $a_i$ where $\gamma_{\text{Ag}} = \texttt{left-turn}, \text{Ag}$ will only take $a_i$ if $O_{\text{traffic-light}}(s,a,u) = \texttt{T}$ and \\
            $O_{\text{unprotected left-turn}}(s,a,u) = \texttt{T}$.
            \item Since all agents are following the law based on Proof \ref{safety_proof}, $O_{\text{traffic light}}(s,a,u)=\texttt{T}$ means action will not cause the agent to collide with or violate the safety of the backup plan in perpendicular bundles.
            \item By the definition of the unprotected-left-turn oracle, $Ag$ will only take the left-turn action when it does not violate the safety of the backup plan of agents in oncoming traffic.
    \end{enumerate}
\end{enumerate}
\end{enumerate}
\end{proof}

\subsection{Safety Proof}
\begin{theorem}
Given all agents $\text{Ag} \in \mathfrak{A}$ in the quasi-simultaneous game select actions in accordance to the Agent Protocol specified in Section \ref{section_agent_protocol}, we can show the safety property $P \Rightarrow \square Q$, where the assertion $P$ is an assertion that the state of the game is such that $\forall Ag, S_{\text{Ag}, bp}(s,u) = \texttt{T}$, i.e. each agent has a backup plan action that is safe, as defined in \ref{definition_safety_back_up_plan_action}. We denote $P_t$ as the assertion over the state of the game at the beginning of the time-step $t$, before agents take their respective actions. $Q$ is the assertion that the agents never occupy the same grid point in the same time-step (e.g. collision never occurs when agents take their respective actions during that time-step). We denote $Q_t$ as the assertion for the agent states/actions taken at time-step $t$.
\end{theorem}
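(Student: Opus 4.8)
The plan is to follow the invariant-assertion (Owicki--Gries style) method sketched in the main text, now discharging each obligation with the safety lemmas established above. To establish $P \Rightarrow \square Q$ I would exhibit an invariant $I$ satisfying $P \Rightarrow I$, $I \Rightarrow \square I$, and $I \Rightarrow Q$. I take $I_t$ to be the assertion on the \emph{actions} selected at time $t$: every $\text{Ag} \in \mathfrak{A}$ chooses some $a \in Act_{\text{Ag}}$ whose grid-point occupancy is disjoint from that of every other agent's chosen action, and which does not violate the safety of any agent's backup plan, i.e. $S_{\text{Ag}',bp}(u') = \texttt{T}$ for all $\text{Ag}'$ after every agent has acted. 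The base case $P_0$ is supplied directly by Assumption 6, which places all agents on distinct grid points with safe backup plans at $t=0$.

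The heart of the argument is the implication $P_t \Rightarrow I_t$, and this is where essentially all of the work lies. Fix an arbitrary $\text{Ag}$ and consider any other agent $\text{Ag}'$. The local precedence rules are exhaustive: relative to $\text{Ag}$, every $\text{Ag}'$ is of higher precedence ($\text{Ag} \prec \text{Ag}'$), lower precedence ($\text{Ag}' \prec \text{Ag}$), equal precedence ($\text{Ag}' \sim \text{Ag}$), or incomparable ($\text{Ag}' \not\sim \text{Ag}$), and additionally $\text{Ag}'$ may lie inside or outside the bubble $\mathcal{B}_{\text{Ag}}(s)$. I would therefore assemble the conclusion by invoking Lemma~\ref{lemma_safety_agents_outside_bubble} for agents outside the bubble together with Lemmas~\ref{lemma_safety_itself}, \ref{lemma_safety_higher}, \ref{lemma_safety_lower}, \ref{lemma_safety_equal}, and~\ref{lemma_safety_incomparable}, each of which asserts precisely that, under the hypothesis $S_{\text{Ag},bp}(u) = \texttt{T}$ (which is exactly the per-agent content of $P_t$) and the protocol, the action $\text{Ag}$ selects neither collides with nor violates the backup plan of agents in that precedence class. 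Since the classes jointly exhaust all other agents, their conjunction yields both clauses of $I_t$.

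The remaining obligations are comparatively routine. For $I \Rightarrow \square I$ I would first show $I_t \Rightarrow P_{t+1}$: clause~(2) of $I_t$ states that after all agents act the resulting global state $u'$ satisfies $S_{\text{Ag}',bp}(u') = \texttt{T}$ for every $\text{Ag}'$, which is exactly $P_{t+1}$; chaining this with $P_{t+1} \Rightarrow I_{t+1}$ from the previous paragraph closes the induction and gives $I \Rightarrow \square I$. Finally $I \Rightarrow Q$ is immediate, since clause~(1) of $I$ asserts that no two agents share a grid point when acting, which is the definition of $Q$.

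The main obstacle I anticipate is contained entirely in the $P_t \Rightarrow I_t$ step, and within it the equal-precedence case (Lemma~\ref{lemma_safety_equal}). Agents in a common equivalence class act \emph{simultaneously}, so the one-step dynamic-safety oracle cannot by itself certify non-collision among them; safety there rests on the conflict-cluster resolution (via Lemmas~\ref{conflict_bubble_lemma} and~\ref{at_most_conflict_cluster}, ensuring at most one winner per cluster) and on the maximum-yielding-not-enough flag forcing losers to back off. The delicate point is that the per-pair guarantees supplied by the lemmas must be shown to \emph{compose} into a single joint no-collision assertion over all agents acting at once; I would argue this by noting that each lemma's conclusion is a universally quantified statement about $\text{Ag}$'s chosen action against \emph{every} agent in its class, so that taking the conjunction over classes and over all $\text{Ag} \in \mathfrak{A}$ produces the required global property without cross-class interference.
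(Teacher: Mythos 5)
Your proposal is correct and follows essentially the same route as the paper's own proof: the identical invariant $I$ on selected actions, the decomposition of $P_t \Rightarrow I_t$ by precedence class (outside-bubble, self, higher, lower, equal, incomparable) discharged by Lemmas~\ref{lemma_safety_agents_outside_bubble}--\ref{lemma_safety_incomparable}, the chaining $I_t \Rightarrow P_{t+1} \Rightarrow I_{t+1}$ for $I \Rightarrow \square I$, and $I \Rightarrow Q$ read off from the first clause of $I$. Your added observation that the equal-precedence case is the delicate one (resting on the conflict-cluster resolution and the maximum-yielding-not-enough flag rather than the one-step dynamic-safety oracle) accurately reflects where the paper's Lemma~\ref{lemma_safety_equal} does its work.
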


\begin{proof}
\label{safety_proof}
To prove an assertion of this form, we need to find an invariant assertion $I$ for which i) $P \Rightarrow I$, ii) $I \Rightarrow \square I$, and iii) $I \Rightarrow Q$ hold.
 We define $I$ to be the assertion that holds on the actions that agents select to take at a time-step. We denote $I_t$ to be the assertion on the actions agents take at time $t$ such that $\forall Ag$, $\text{Ag}$ takes $a\in Act_{\text{Ag}}$ where 1) it does not collide with other agents and 2) $\forall \text{Ag}, S_{\text{Ag}, bp}(u') = \texttt{T}$ where $s' = \tau_{\text{Ag}}(s, a)$, and $u'$ is the corresponding global state of the game after $\text{Ag}$ has taken its action $a$.
 
It suffices to assume:
\begin{enumerate}
    \label{safety_assumption:one_app}
    \item Each $\text{Ag} \in \mathfrak{A}$ has access to the traffic light states.
    \label{safety_assumption:two_app}
    \item There is no communication error in the conflict requests, token count queries, and the agent intention signals.
    \label{safety_assumption:three_app}
    \item All intersections in the road network $R$ are governed by traffic lights. 
    \item The traffic lights are designed to coordinate traffic such that if agents respect the traffic light rules, they will not collide.
    \label{safety_assumption:four_app}
    \item Agents follow the agent dynamics defined in Section \ref{agent_attributes}.
    \label{safety_assumption:five_app}
    \item For $t=0$, $\forall \text{Ag} \in \mathfrak{A}$ in the quasi-simultaneous game is initialized to: 
        \begin{itemize}
            \item Be located on a distinct grid point on the road network.
            \item Have a safe backup plan action $a_{bp}$ such that $S_{\text{Ag}, bp}(s, u) = \texttt{T}$. 
        \end{itemize}
        \label{safety_assumption:six_app}
        
\end{enumerate}

We can prove $P \Rightarrow \square Q$ by showing the following:
\begin{enumerate}
    \item $P_t \Rightarrow I_t$. This is equivalent to showing that if all agents are in a state where $P$ is satisfied at time $t$, then all agents will take actions at time $t$ where the $I$ holds. 
    \begin{enumerate}
        \item In the case that the assertion $P_t$ holds, let us show that $\text{Ag}$ will only choose an action $a \in Act_{Ag}$ for which the following two conditions hold: 1) $\mathcal{G}_{Ag}(s, a) \cap (\cup_{Ag'\in S}\mathcal{G}_{\text{Ag}'}(s', a')) = \emptyset$ and 2) $\forall Ag'\in S$, $\lnot ((\text{Ag}, a)\bot \text{Ag}')$, where the set $S$ is:
            \begin{enumerate}
            \label{safety_outside_bubble_app}
            \item The set $S \triangleq \{\text{Ag}'|\text{Ag} \prec \text{Ag}'\}$, i.e. agents with higher precedence than $\text{Ag}$. Proof by Lemma \ref{lemma_safety_higher}.
            \label{safety_higher_precedence_app}
            \item $S \triangleq \{\text{Ag}' | \text{Ag}' \prec \text{Ag} \}$, i.e. agents with lower precedence than $\text{Ag}$. Proof by Lemma \ref{lemma_safety_lower}.
            \label{safety_lower_precedence_app}
            \item $S \triangleq \{\text{Ag}' | \text{Ag}' \sim \text{Ag} \}$, i.e. agents with equal precedence than the agent. Proof by Lemma \ref{lemma_safety_equal}.
            \label{safety_equal_precedence_app}
            \item $S \triangleq \{\text{Ag}' | \text{Ag}' \not \sim \text{Ag} \}$, i.e. agents with precedence incomparable to the agent. Proof by Lemma \ref{lemma_safety_incomparable}.
            \label{safety_incomparable_precedence}
        \end{enumerate}
        \item The set of all agents, agents with lower precedence, higher precedence, equal precedence, and incomparable precedence, is complete and includes all agents.
        \label{safety:agent_completeness_app}
        \item By \ref{safety_top_two_tiers}-\ref{safety_incomparable_precedence} and \ref{safety:agent_completeness_app}, an agent will not take an action that will cause collision with any other agents (including itself) or violate the safety of the safety backup plan of all other agents, and thus any action taken by any agent will be such that following the action, the assertion $P$ still holds.  
    \end{enumerate}
    \item $P_t \Rightarrow I_t$. This is equivalent to showing that if all agents are in a state where $P$ is satisfied at time $t$, then all agents will take actions at time $t$ where the $I$ holds. This can be proven using arguments based on the design of the Agent Protocol. More details can be found in Lemmas A.\ref{lemma_safety_agents_outside_bubble}-A.\ref{lemma_safety_incomparable} in the Appendix.
    \label{p_implies_i_app}
    \item $I \Rightarrow \square I$. If agents take actions at time $t$ such that the assertion $I_t$ holds, then by the definition of the assertion $I$, agents will end up in a state where at time t+1, assertion $P$ holds, meaning $I_t \Rightarrow P_{t+1}$. Since $P_{t+1} \Rightarrow I_{t+1}$, from \ref{p_implies_i_app}, we get $I \Rightarrow \square I$.
    \item  $I \Rightarrow Q$. This is equivalent to showing that if all agents take actions according to the assertions in $I$, then collisions will not occur. This follows from the invariant assertion that agents are taking actions that do not cause collision, and the fact that all $\text{Ag}$ have a safe backup plan action $a_{bp}$ to choose from, and thus will always be able to (and will) take an action from which it can avoid collision in future time steps.  
\end{enumerate}
\end{proof}

\subsection{Liveness Lemmas}
\begin{lemma}
If the only $a \in Act_{Ag}$ for an agent $\text{Ag}$ for which $O_{\text{destination reachability}}(s,a,u) = \texttt{T}$ and $O_{\text{forward progress}}(s,a,u) = \texttt{T}$ is an action such that: $\gamma_{\text{Ag}} \in \{ \texttt{right-turn, left-turn}\}$ and the grid-point $s_f = \tau_{\text{Ag}}(s, a)$ is unoccupied (for a left-turn, where $a$ is the final action of the left-turn maneuver), $\text{Ag}$ will always eventually take $a$.
\label{liveness_lemma_left_right_turns}
\end{lemma}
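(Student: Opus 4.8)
The plan is to reduce the statement to two independent finiteness claims — that any conflict blocking $\text{Ag}$ is resolved in favor of $\text{Ag}$ in finitely many steps, and that the traffic-light constraint enabling the turn recurs — and then argue these can be made to hold simultaneously. First I would establish that the intended action is the turn itself, i.e. $a_i = a$. Since $s_f = \tau_{\text{Ag}}(s,a)$ is unoccupied by hypothesis, the action $a$ satisfies the top-tier safety oracles, and since by hypothesis $a$ is the \emph{unique} action with $O_{\text{destination reachability}}(s,a,u) = \texttt{T}$ and $O_{\text{forward progress}}(s,a,u) = \texttt{T}$, the consistent-evaluating function of the profile in Section \ref{section_profiles} ranks $a$ strictly above every safe stay-put action (which makes no progress). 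Hence $a_i = a$. A key structural observation follows: because $a$ is a turn rather than a lane change, $\text{Ag}$ never \emph{sends} a conflict request (condition 1 of the request criteria fails), so the maximum-yielding-not-enough flag $\mathcal{F}_{\text{Ag}}$ is irrelevant here; the only events in the Action Selection Strategy of Section \ref{section_action_selection_strategy} that can divert $\text{Ag}$ from $a_i$ are (i) \emph{receiving} a conflict request and losing its resolution, and (ii) the traffic-light / unprotected-left oracle failing at that instant.

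Next I would dispatch obstacle (i) by the fairness of the token scheme. While $\text{Ag}$ is blocked it takes no forward-progress action, so $O_{\text{forward progress}} = \texttt{F}$ and its token count $\texttt{Tc}_{\text{Ag}}$ strictly increments every time-step, whereas any agent that does progress resets its count to $0$. Since $\mathfrak{A}$ is finite, within finitely many steps $\texttt{Tc}_{\text{Ag}}$ strictly dominates the counts of every agent in $\text{Ag}$'s conflict cluster (ties broken by ID), so $\mathcal{W}_{\text{Ag}} = \texttt{T}$; by Lemma \ref{at_most_conflict_cluster} it is then the sole winner and is no longer forced onto $a_{bp}$. Crucially, once attained this dominance is monotone — it persists until $\text{Ag}$ finally progresses — which is what lets me later combine it with the light cycle.

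I would then dispatch obstacle (ii) using Assumption 7 of Section \ref{all_assumptions}, which guarantees the lights cycle with a red window $\Delta t_{\text{tl}} > t_{\text{min}}$. For $\gamma_{\text{Ag}} = \texttt{right-turn}$ the green phase recurs and permits $\text{Ag}$ to enter the intersection. For $\gamma_{\text{Ag}} = \texttt{left-turn}$ the unprotected-left oracle returns $\texttt{T}$ exactly when $\text{Ag}$ has waited through green, the light turns red, and the oncoming lanes are clear; the red window forces oncoming agents to stop, so that lane clears, and together with $s_f$ unoccupied the oracle fires. In both cases the enabling condition recurs infinitely often.

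Finally I would combine the two. Because the token dominance from step (i) is monotone once established and the enabling light phase from step (ii) recurs infinitely often, I can choose a time lying in an enabling phase that occurs \emph{after} $\text{Ag}$ has become the cluster winner, at which point $s_f$ is still unoccupied by hypothesis; the Action Selection Strategy then outputs $a_i = a$ and $\text{Ag}$ takes the turn, establishing that $\text{Ag}$ always eventually takes $a$. The main obstacle is exactly this last coupling: individually each impediment clears, but a careless argument could let the conflict resolution be won only during a phase when the light blocks the turn (or the oncoming lane is non-empty), so the crux is to exploit the monotonicity of the token count against the infinitely-recurring light cycle to force the two favorable conditions to overlap. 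For the left-turn sub-case one must additionally avoid circularity — the oncoming lane clearing should not itself be assumed from the liveness being proved — which I resolve by appealing directly to the red-window assumption rather than to the progress of oncoming agents.
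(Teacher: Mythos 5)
Your proof is essentially correct on the core of the lemma and follows the same skeleton as the paper's: establish that the turn is the intended action, then show the profile's oracles are always eventually \emph{simultaneously} satisfied, using Assumption~7 (finite light cycles) for the right-turn case and the definition of the unprotected-left-turn oracle (waiting through green, light turns red, oncoming lane forced clear) for the left-turn case, with dynamic safety following from traffic-law compliance guaranteed by the safety proof. Where you genuinely diverge is in the conflict machinery. The paper's proof makes a stronger structural observation than yours: an agent in a turn-only position neither \emph{sends nor receives} conflict requests, and its flag $\mathcal{F}_{\text{Ag}}$ is never set. The reason the receiving direction also vanishes is that any would-be sender must intend a lane change conflicting with $\text{Ag}$'s turn; since $\text{Ag}$ sits at the intersection front, such a lane change would begin or end in the intersection and is excluded by $O_{\text{traffic intersection lane change law}}$, so no neighbor ever bids against $\text{Ag}$. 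This kills the conflict-cluster branch outright, which is what lets the paper's proof be a pure oracle-recurrence argument with no token counting and no ``combination'' step at the end.

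By contrast, you keep the received-request branch alive and dispatch it with token fairness, and that step, as stated, has a hole: you claim $\texttt{Tc}_{\text{Ag}}$ ``strictly dominates the counts of every agent in $\text{Ag}$'s conflict cluster within finitely many steps,'' but a competing agent that is itself blocked also increments its count every step, so the gap between the two counts never closes. Finite-time dominance only follows if every higher-count competitor eventually takes a forward-progress action and resets to $0$, which requires that competitor's own liveness --- exactly the kind of circularity you were careful to avoid in the oncoming-lane argument, but did not avoid here. In this lemma the flaw is harmless only because the branch is vacuous (per the paper's observation above); if received requests were actually possible in this configuration, your argument for winning the cluster would not go through without an additional inductive hypothesis of the sort the paper deploys in Lemmas~\ref{lane_change_liveness_lemma} and~\ref{lemma_destination_node}. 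The cleanest repair is to replace your case (i) entirely with the structural argument that no requests can be received, after which your monotonicity-plus-recurrence coupling at the end also becomes unnecessary.
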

\begin{proof} 
W.l.o.g., let us consider agent $\text{Ag} \in \mathfrak{A}$ in the quasi-simultaneous game $\mathfrak{G}$. We prove this by showing that all criteria required by the Agent Protocol are always eventually satisfied, thereby allowing $\text{Ag}$ to take action $a$.
\begin{enumerate}
    \item By the definition of $\mathfrak{R}$ and the agent dynamics, when $\text{Ag}$ is in a position where only $\gamma_{\text{Ag}} \in \{\texttt{right-turn}, \texttt{left-turn}\}$, it will neither send nor receive requests from other agents and $\mathcal{F}_{\text{Ag}}(u, a_i)$ will never be set to \texttt{T}.
    \item In accordance with the Action Selection Strategy, for $\text{Ag}$ to take action $a$, all the oracles in the Agent Profile must be simultaneously satisfied (so it will be selected over any other $a'\in Act_{\text{Ag}}$). Thus, we show: 
    \begin{enumerate}
        \label{liveness_turns_all_oracles}
        \item The following oracle evaluations will always hold when $\text{Ag}$ is in this state: $O_{\text{traffic intersection lane-change}}(s,a,u)=\texttt{T}$,$O_{\text{legal orientation}}(s,a,u)=\texttt{T}$, $O_{\text{static safety}}(s,a,u) = \texttt{T}$ and \\ $ O_{\text{traffic intersection clearance}}(s,a,u) = \texttt{T}$.
        \begin{enumerate}
            \item The first oracle is true vacuously and the following are true by the road network constraints and agent dynamics, Assumption 8, and the assumption in the lemma statement that $s_f = \tau(s,a)$ is unoccupied respectively.
        \end{enumerate}
        \item To show that the following oracles will always eventually simultaneously hold true, let us first consider when $\gamma = \{\texttt{right-turn}\}$.
        \begin{enumerate}
            \item By the assumption, the traffic light is red for a finite time, and when the traffic light is green, $O_{\text{traffic light}}(s,a,u) = \texttt{T}$. 
            \item $O_{\text{unprotected left-turn}}(s,a,u)$ is vacuously true for a right-turn action.
            \item Since $O_{\text{traffic intersection clearance}}(s,a,u) = \texttt{T}$ and by the safety proof \ref{safety_proof}, all $\text{Ag}$ are only taking actions in accordance with traffic laws so there will never be any $\text{Ag}'\in\mathfrak{A}$ blocking the intersection, making $O_{\text{dynamic safety}}(s,a,u) = \texttt{T}$.
            \item Thus, all oracles are always eventually simultaneously satisfied and $\text{Ag}$ can take $a$ where $\gamma = \{\texttt{right-turn}\}$
        \end{enumerate}
        \item Let us consider when $\gamma_{\text{Ag}} = \{\texttt{left-turn} \}$.
        \label{liveness_turn_left}
        \begin{enumerate}
            \item By Assumption 7, traffic lights are green for a finite time.
            \item By the safety proof \ref{safety_proof}, all $\text{Ag}$ are only taking actions in accordance with traffic laws so there will never be any $\text{Ag}'\in\mathfrak{A}$ blocking the intersection. 
            \item When $\gamma_{\text{Ag}} = \texttt{left-turn}$, by definition of the unprotected left-turn oracle, $\square \lozenge O_{\text{unprotected left-turn}}(s,a,u)$, specifically when the traffic light switches from green to red and $\text{Ag}$ has been waiting at the traffic light. 
            \item Thus, $\square \lozenge O_{\text{unprotected left-turn}}(s,a,u)$ after the light turns from green to red. 
            \item Further, $O_{\text{unprotected left-turn}}(s,a,u) = \texttt{T}$ combined with \\ $O_{\text{traffic intersection clearance}}(s,a,u) = \texttt{T}$ implies \\$O_{\text{dynamic safety}}(s,a,u) = \texttt{T}$.
            \item Thus, all oracles are always eventually simultaneously satisfied and $\text{Ag}$ can take $a$ where $\gamma = \{\texttt{left-turn}\}$.
        \end{enumerate}
    \end{enumerate}
    \item Thus, we have shown all oracles in the Agent Profile will always eventually be satisfied, and $\text{Ag}$ will take $a$ such that $O_{\text{destination reachability}}(s,a,u)= \texttt{T}$ and \\ $O_{\text{forward progress}}(s,a,u)= \texttt{T}$. 
\end{enumerate}
\end{proof}

\begin{lemma}
If the only $a \in Act_{\text{Ag}}$ for which \\ $O_{\text{destination reachability}}(s,a,u)= \texttt{T}$ and $O_{\text{forward progress}}(s,a,u)= \texttt{T}$ is when $a$ has \\ 
$\gamma_{\text{Ag}} \in \{ \texttt{right-lane change, left-lane change}\}$ and the grid-point(s) $\mathcal{G}(s,a)$ is (are) either unoccupied or agents that occupy these grid points will always eventually clear these grid points, $\text{Ag}$ will always eventually take this action $a$.
\label{lane_change_liveness_lemma}
\end{lemma}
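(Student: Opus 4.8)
The plan is to follow the same template as the preceding Lemma~\ref{liveness_lemma_left_right_turns}: I would show that every condition the Action Selection Strategy of Section~\ref{section_action_selection_strategy} requires for $\text{Ag}$ to execute a lane-change is \emph{always eventually} satisfied, and that these conditions coincide at a single time-step. Recall that $\text{Ag}$ is permitted to take a lane-change precisely when three things hold at once: (i) its intended action $a_i$ is the lane change and $\mathcal{W}_{\text{Ag}} = \texttt{T}$ (it wins its conflict cluster resolution), (ii) the maximum-yielding-not-enough flag satisfies $\mathcal{F}_{\text{Ag}}(u,a_i) = \texttt{F}$, and (iii) $O_{\text{Ag},t,\text{dynamic safety}}(s,a_i,u) = \texttt{T}$. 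The goal is therefore to establish $\square\lozenge(\mathcal{W}_{\text{Ag}} \land \lnot\mathcal{F}_{\text{Ag}} \land O_{\text{dynamic safety}})$ and to argue that when all three coincide the profile selects this $a$.

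First I would establish that $a_i$ is indeed the lane-change. By hypothesis the lane-change is the \emph{only} action for which both $O_{\text{destination reachability}}$ and $O_{\text{forward progress}}$ evaluate to $\texttt{T}$. Crucially, the behavioral profile of Section~\ref{section_profiles} deliberately \emph{excludes} $O_{\text{dynamic safety}}$ from the selection of $a_i$, so the lane-change need only clear the higher-tier static-safety, legal-orientation and traffic oracles—which it does by the road-network constraints—in order to be ranked above every stalling (non-progressing) action. Hence the consistent-evaluating function returns the lane-change as $a_i$. The occupancy half of dynamic safety then follows from the lemma hypothesis: the target grid points $\mathcal{G}(s,a)$ are either already free or will always eventually be cleared by whichever agents occupy them.

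The two remaining obligations are handled by the token mechanism and by Assumption~7. For $\mathcal{W}_{\text{Ag}}$: in this configuration the only forward-progress action is the lane-change, so whenever $\text{Ag}$ fails to take it we have $O_{\text{forward progress}} = \texttt{F}$ and its token count $\texttt{Tc}_{\text{Ag}}$ strictly increments, while any cluster rival that wins has its count reset; by the fairness built into the token resolution scheme of Section~\ref{section_conflict_cluster_resolution} (``each agent will always eventually win its conflict resolution''), $\text{Ag}$ always eventually attains the highest count in its cluster, giving $\square\lozenge(\mathcal{W}_{\text{Ag}} = \texttt{T})$. For $\mathcal{F}_{\text{Ag}}$ and the gap half of dynamic safety: by Assumption~7 the traffic light is red for a window $\Delta t_{\text{tl}} > t_{\text{min}}$ that recurs infinitely often, forcing the agent(s) in the target lane to decelerate long enough that the required separation $gap_{\text{req}}$ opens up, whereupon $\mathcal{F}_{\text{Ag}}(u,a_i) = \texttt{F}$ and $O_{\text{dynamic safety}}(s,a_i,u) = \texttt{T}$.

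The main obstacle is \textbf{simultaneity}: the three events must co-occur at one time-step, not merely each occur infinitely often. I expect this to be exactly what $t_{\text{min}}$ in Assumption~7 is engineered to guarantee—the red-light window is chosen long enough that a \emph{waiting} agent (one that has accumulated tokens precisely because it could not progress) both reaches the top of its cluster ordering and finds the gap open within the same window. Care is also needed because the conflict cluster $\mathcal{C}_{\text{Ag}}$ may change membership across steps; I would close this by noting that such changes occur only through agents entering or leaving the bubble $\mathcal{B}_{\text{Ag}}(s)$, and that the fairness property is invariant under this bounded reshuffling, so no rival can indefinitely starve $\text{Ag}$. Combining these, all three conditions hold together infinitely often, the profile selects the lane-change, and therefore $\text{Ag}$ always eventually takes $a$.
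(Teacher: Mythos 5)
Your proposal correctly identifies the three gating conditions ($\mathcal{W}_{\text{Ag}}$, $\lnot\mathcal{F}_{\text{Ag}}(u,a_i)$, $O_{\text{dynamic safety}}$) and correctly names the central obstacle—simultaneity—but the way you resolve it is where the gap lies. You defer simultaneity to $t_{\text{min}}$ in Assumption 7, expecting the red-light window to be long enough that a waiting agent ``both reaches the top of its cluster ordering and finds the gap open within the same window.'' That is not what $t_{\text{min}}$ provides: it is defined only so that agents in the target lane are slowed long enough for the gap to open (i.e., so that $\mathcal{F}_{\text{Ag}}$ clears); it says nothing about how long it takes $\text{Ag}$ to climb to the top of its token ordering, which depends on how many rivals hold higher counts and is not bounded by any single light cycle. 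The paper closes this gap with a \emph{persistence} argument that your proposal is missing: once $\mathcal{W}_{\text{Ag}}$ becomes true, it \emph{remains} true until $\text{Ag}$ actually takes its forward-progress action, because $\text{Ag}$'s token count keeps incrementing at every step it is blocked, while any agent newly entering its conflict cluster enters with a token count of $0$. With persistence in hand, one only needs $\square\lozenge\mathcal{W}_{\text{Ag}}$ (from the token mechanism), and then, during the ensuing interval on which $\mathcal{W}_{\text{Ag}}$ stays true, the recurring conditions $\lnot\mathcal{F}_{\text{Ag}}$ and $O_{\text{dynamic safety}}$ (from Assumption 7 and the lemma's clearing hypothesis) are eventually met; co-occurrence follows with no timing coordination between the token dynamics and the light cycle.

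A second, smaller omission: the paper splits the proof into two cases—Case A, where the lane change takes $\text{Ag}$ onto its goal, and Case B, where it takes $\text{Ag}$ onto a critical tile ahead of a turn—and Case A is not covered by your traffic-light argument at all, since the goal may lie far from any intersection. There the paper invokes Assumption 9 (each agent treats another agent's goal as a static obstacle), so any agent in the target lane must come to a stop before the goal point; this makes $\mathcal{F}_{\text{Ag}}$ \emph{always} false (not merely recurrently false) and reduces the problem to the token argument alone, with clearance of the target grid point guaranteed because agents exit the network upon reaching their goals. Without this case distinction, your argument only handles lane changes adjacent to signalized intersections.
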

\begin{proof}
W.l.o.g., let us consider agent $\text{Ag} \in \mathfrak{A}$ in the quasi-simultaneous game $\mathfrak{G}$. We prove this by showing that all criteria required by the Agent Protocol are always eventually satisfied, thereby allowing $\text{Ag}$ to take its action $a$.
\begin{enumerate}
    \item Let us consider Case A, when $a$ is such that $s_f = \tau_{\text{Ag}}(s,a) = \texttt{Goal}_{\text{Ag}}$, i.e. the action takes the agent to its goal, and let us show that $\text{Ag}$ will always eventually be able to take $a$.
    \label{liveness_lane_change_case_a}
    \item In accordance with the Action Selection Strategy, for $\text{Ag}$ to take $a$ is that 1) all the oracles in the agent profile must be simultaneously satisfied (so the action $a$ is chosen over any other $a'\in Act_{\text{Ag}}$, 2) $\mathcal{F}_{\text{Ag}}(u, a_i)=0$, and 3) $W_{\text{Ag}} = \texttt{T}$. 
    \item We first show all the oracles for $\text{Ag}$ will always be simultaneously satisfied: 
    \label{liveness_lane_change_case_a_oracles}
    \begin{enumerate}
        \item When $\text{Ag}$ is in this state, the following oracle evaluations always hold: $O_{\text{traffic light}}(s,a,u) = \texttt{T}$, \\ $O_{\text{traffic intersection lane-change}}(s,a,u)=\texttt{T}$, \\ $O_{\text{unprotected left turn}}(s,a,u)=\texttt{T}$, \\ $\square \lozenge O_{\text{traffic intersection clearance}}(s,a,u)$, $O_{\text{static safety}}(s,a,u) = \texttt{T}$, $O_{\text{traffic orientation}}(s,a,u)=\texttt{T}$.
        \begin{enumerate}
            \item The first four hold vacuously, the others hold by Assumption 8, and the last holds by Agent dynamics and the Road Network.
        \end{enumerate}
        \item $O_{\text{dynamic safety}}(s,a,u)=\texttt{T}$.
        \begin{enumerate}
            \item By the definition Road Network $\mathfrak{R}$, agent dynamics in Section \ref{agent_attributes}, and the condition that $\forall \text{Ag} \in \mathfrak{A}$ will leave $\mathfrak{R}$ (i.e. $\text{Ag}$ does not occupy any grid point on $\mathfrak{R}$ when it reaches its respective goal $\texttt{Goal}_{\text{Ag}}$). Thus, \\ $O_{\text{dynamic safety}}(s,a,u)=\texttt{T}$ whenever an agent is in this state. 
        \end{enumerate}
    \end{enumerate}
    \item In accordance with the action selection strategy, for $\text{Ag}$ to take $a$, it must be that $\mathcal{F}_{\text{Ag}}(u, a_i)=0$, i.e. the max-yielding-flag-not-enough must not be set. Let us show that this is always true.
    \label{liveness_lane_change_case_a_flag}
    \begin{enumerate}
        \item The only $\text{Ag}'$ that can cause the $\mathcal{F}_{\text{Ag}}(u, a_i)=1$ of $\text{Ag}$ is when an agent $\text{Ag}'$ is in a state where $La(Ag') = \texttt{Goal}_{\text{Ag}}$.
        \item W.l.o.g. let us consider such an $\text{Ag}'$. By liveness Assumption 9, upon approaching the goal, the agent $\text{Ag}'$ must be in a state where $\text{Ag}'$ backup plan action $a_{bp}$ will allow it to a complete stop before reaching its goal. 
        \label{complete_stop_goal}
        \item By \ref{complete_stop_goal}, $\text{Ag}'$ will always be in a state for which the max-yielding-not-enough flag for $\text{Ag}$ is $\mathcal{F}_{\text{Ag}}(u, a_i)=0$.
    \end{enumerate}
    \item In order for $\text{Ag}$ to take $a$, it must be that $W_{Ag}=1$. Let us show that this is always eventually true.
    \label{liveness_lane_change_case_a_winning}
    \begin{enumerate}
        \item In the case that $\text{Ag}$ has the maximum number of tokens, $\mathcal{W}_{Ag} = 1$ and $\text{Ag}$ will be able to take its forward action since all criteria are satisfied.
        \item Any $\text{Ag}' \in \mathcal{C}_{Ag}$ will be of equal or lower precedence than $\text{Ag}$.
        \item Any $\text{Ag}'$  with the maximum number of tokens will move to its goal since $\mathcal{W}_{\text{Ag}}=1$ and all the other criteria required for that agent to take its action will be true.
        \item By definition of the Action Selection Strategy in Section \ref{section_action_selection_strategy}, any agent $\hat{\text{Ag}}$ that replaces $\text{Ag}'$ will have taken a forward progress action and its respective token count will reset to 0.
        \label{agent_reset_to_0}
        \item Thus, any $\text{Ag}'$ will be allowed to take its action before $\text{Ag}$, but $\text{Ag}$'s token count $\texttt{Tc}_{\text{Ag}}$ will increase by one for every time-step this occurs. 
        \label{highest_token_count}
        \item Thus, by \ref{agent_reset_to_0} and by \ref{highest_token_count}, $\text{Ag}$ will always eventually have the highest token count in its conflict cluster such that $W_{\text{Ag}}=1$.
    \item Since conditions \ref{liveness_lane_change_case_a_oracles} and \ref{liveness_lane_change_case_a_flag} are always true, and \ref{liveness_lane_change_case_a_winning} is always eventually true, then all conditions will simultaneously always eventually be true and the $\text{Ag}$ will always eventually take the action $a$. 
    \end{enumerate}
    
    \item Let us consider Case B, when $a$ is the final action to take for an agent to reach its sub-goal (i.e. a critical left-turn or right-turn tile), and let us show $\text{Ag}$ will always eventually be able to take a forward progress action where $\gamma_{\text{Ag}} \in \{ \texttt{left-lane change}, \texttt{right-lane change}\}$. 
    \item In accordance with the Action Selection Strategy, for $\text{Ag}$ to take $a$ is that 1) $W_{\text{Ag}}=1$, 2) $\mathcal{F}_{\text{Ag}}(u, a_i)=0$, i.e. the max-yielding-flag-not-enough must not be set and 3) all the oracles in the Agent Profile must be simultaneously satisfied.
    \item Let us first consider when $W_{\text{Ag}}=1$, then $\square W_{\text{Ag}}$ until $\text{Ag}$ takes its forward progress action $a$ because by definition of $W_{\text{Ag}}$, $\text{Ag}$ has the highest token count in its conflict cluster, $\text{Ag}.\texttt{tc} = \text{Ag}.\texttt{tc}+1$, while $\text{Ag}$ does not select $a$ (and thus does not make forward progress) and any $\text{Ag}$ that newly enters $\text{Ag}$'s conflict cluster will have a token count of 0. 
    \label{liveness_lane_change_winning_true_beg}
    \item All the oracles are either vacuously or trivially satisfied by the assumptions except for $O_{\text{dynamic safety}}(s,a,u)$.
    \item By the Assumption 7, the traffic light will always cycle through red-to-green and green-to-red at the intersection $\text{Ag}$ is located at. 
    \item By the Assumption on the minimum duration of the red traffic light, all $\text{Ag}'$ will be in a state such that $\mathcal{F}_{\text{Ag}}(u, a_i) = 0$.
    \item By the lemma assumption that all $Ag'$ occupying grid points will always eventually take their respective forward progress actions, $\square \lozenge O_{\text{dynamic safety}}(s,a,u)$.
    \item Thus, all criteria for which $\text{Ag}$ can take its forward progress action $a$ will be simultaneously satisfied. 
    \item When $W_{\text{Ag}}$ = 0, we must show $\square \lozenge W_{\text{Ag}}$.
    \label{liveness_lane_change_winning_true_end}
    \begin{enumerate}
        \item For $\text{Ag}$, all agents in its conflict cluster have equal or lower precedence and are not in the same lane as $\text{Ag}$. 
        \item For any such $\text{Ag}'$ with equal precedence, $\text{Ag}'$ will always eventually take its forward progress action by the arguments in \ref{liveness_lane_change_winning_true_beg}-\ref{liveness_lane_change_winning_true_end} if $\text{Ag}'$ intends to make a lane-change.
        \item By the lemma assumption, any agents $Ag'$ occupying the grid points that $Ag$ needs to take its action will always eventually take its forward progress action so $\square \lozenge O_{\text{dynamic safety}}(s,a,u)$.
        \item Any $\hat{\text{Ag}}$ with lower precedence and higher token count that $\text{Ag}$ will take $\text{Ag}'$'s position and in doing so will have a token count of 0 and any $\text{Ag}$ that replaces any agents with higher token count than $\text{Ag}$ and is in $\text{Ag}$'s conflict cluster will have token count 0.
        \item Thus $\square \lozenge W_{\text{Ag}}$.
    \end{enumerate}

\end{enumerate}
\end{proof}

\begin{lemma}
Let us consider a road segment $rs \in RS$ where there exist grid points $g \in \mathcal{S}_{\text{sinks}}$. Every $\text{Ag} \in rs$ will always eventually be able to take $a \in Act_{\text{Ag}}$ for which $O_{\text{forward progress}}(s,a,u) = \texttt{T}$.
\label{lemma_destination_node}
\end{lemma}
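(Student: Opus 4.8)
The plan is to induct on the agents' longitudinal distance to their destination grid points, exactly as anticipated in the liveness proof sketch. Because $rs$ contains sink grid points, it plays the role of the base case in the larger road-segment induction: a sink is an always-available exit (an agent vacates all of its occupancy cells the instant it reaches its goal, which by Assumption 9 it treats as a static obstacle), so the clearing of grid points here never depends on any other road segment, and in particular the sparsity condition of Definition \ref{assumptions_sparse_traffic} is not needed. Throughout, I would lean on the Safety Guarantee (Theorem \ref{safety_guarantee_theorem}, proof \ref{safety_proof}), which maintains the invariant $S_{\text{Ag}, bp}(u) = \texttt{T}$ for every agent at every time-step; this is the standing precondition required both by the action-selection strategy and by the two preceding liveness lemmas.

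First I would order the finitely many agents currently on $rs$ by their longitudinal distance to their respective destination grid points, smallest first, and induct on this rank. For the base case, consider the agent $\text{Ag}_1$ of rank one, closest to its destination. No agent lies between $\text{Ag}_1$ and its sink along its intended route, so the grid point(s) it must occupy to make forward progress are unoccupied. Depending on which maneuver its profile selects, I invoke the established machinery: if the forward-progress action is a lane change toward the sink, Lemma \ref{lane_change_liveness_lemma} applies with its ``always eventually clears'' hypothesis satisfied vacuously (the target cells are unoccupied); if it is a turn, Lemma \ref{liveness_lemma_left_right_turns} applies; and if it is a straight action, $O_{\text{dynamic safety}}(s,a,u) = \texttt{T}$ since nothing lies ahead, so the protocol permits the move. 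In every case $\text{Ag}_1$ always eventually advances and, upon reaching the sink, exits the network and permanently frees its cells.

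For the inductive step, assume every agent of strictly smaller rank always eventually makes forward progress and, being nearer to its sink, always eventually vacates the cells it currently holds. Consider $\text{Ag}_k$. The agents that can occupy the grid point(s) $\text{Ag}_k$ needs for its forward-progress action are ahead of it along its route, hence of smaller rank, which by the induction hypothesis always eventually clear those cells. This is precisely the hypothesis of Lemma \ref{lane_change_liveness_lemma} for a lane-change forward action, and it yields $\square\lozenge O_{\text{dynamic safety}}(s,a,u)$ for a straight forward action; combined with the token-fairness argument reproduced in that lemma --- $\text{Ag}_k$'s token count strictly increases every time-step it is blocked, so it always eventually wins its conflict cluster, $\mathcal{W}_{\text{Ag}_k} = \texttt{T}$ --- all action-selection criteria are simultaneously and eventually satisfied. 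Hence $\text{Ag}_k$ always eventually takes an $a$ with $O_{\text{forward progress}}(s,a,u) = \texttt{T}$, and by induction this holds for every agent on $rs$.

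The main obstacle is justifying that the agents ahead genuinely vacate the needed cells rather than merely advancing once or oscillating in place --- that is, discharging the ``always eventually clear'' hypothesis of Lemma \ref{lane_change_liveness_lemma} rather than assuming it. This is exactly where the sink structure is essential: the induction bottoms out at agents that reach sinks and leave the network, so progress at the front is permanent and cascades rearward. A secondary subtlety is the interaction between the token-count fairness mechanism and the repeated replacement of cleared agents by newcomers; I would argue, as in Lemma \ref{lane_change_liveness_lemma}, that any agent replacing a cleared higher-token neighbor re-enters with token count $0$, so $\text{Ag}_k$'s monotonically growing count guarantees it eventually attains the maximum in its cluster.
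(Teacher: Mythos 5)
Your proof is correct and takes essentially the same route as the paper's: the paper likewise proves this lemma by induction on agents' longitudinal distance to their destination grid points (base case $m_{\text{Ag}}=1$, one action from the goal), discharging lane-change forward actions via Lemma \ref{lane_change_liveness_lemma}, straight actions via eventual satisfaction of $O_{\text{dynamic safety}}$ under the inductive hypothesis, and cluster-winning via the same token-fairness argument. Your choice to index the induction by agent rank rather than by the distance value $m$, and your explicit strengthening of the inductive hypothesis to ``eventually vacates its cells,'' are minor reformulations (indeed clarifications) of the paper's argument rather than a different approach.
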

\begin{proof}
We prove this by induction. W.l.o.g, let us consider $\text{Ag} \in \mathfrak{A}$. Let $m_{\text{Ag}}=\text{proj}_{\text{long}}(\texttt{Goal}_{\text{Ag}}) -\text{proj}_{\text{long}}(Ag.s)$.
\begin{enumerate}
    \item Base Case: $m_{\text{Ag}}=1$, i.e. $\text{Ag}$ only requires a single action $a$ to reach its goal $\texttt{Goal}_{\text{Ag}}$.
    \begin{enumerate}
        \item If $a$ is such that \\
        $\gamma_{Ag} \in \{ \texttt{left-lane change, right-lane change} \}$, then $\text{Ag}$ will take always eventually this action by Lemma \ref{lane_change_liveness_lemma}.
        \item If $a$ is such that $\gamma_{\text{Ag}} = \texttt{straight}$:
        \item In accordance with the Action Selection Strategy, for $\text{Ag}$ to take $a$ is that 1) all the oracles in the agent profile must be simultaneously satisfied (so the action $a$ is chosen over any other $a'\in Act_{\text{Ag}}$, and 2) $W_{\text{Ag}}$ = 1. 
        \item First, we show that all oracles in the agent profile will always be simultaneously satisfied. 
        \label{liveness_base_case_road_seg_0}
        \begin{enumerate}
            \item These all follow from the same arguments presented when $\gamma_{Ag} = \{\text{right-lane change}, \text{left-lane change}\}$ in Case A in Lemma \ref{lane_change_liveness_lemma}.
        \end{enumerate}
        \item In accordance with the Action Selection Strategy, we must show that $\square \lozenge W_{\text{Ag}}$. This is vacuously true since no $\text{Ag}$ will be in the agent's conflict cluster when an agent is in this state. 
        \label{liveness_base_case_road_seg_0_winning}
    \end{enumerate}
    \item Case $m=N$: Let us assume that any $\forall \text{Ag}$ where $m_{\text{Ag}} = N$ always eventually take $a \in Act_{\text{Ag}}$ for which \\
    $O_{\text{forward progress}}(s,a,u)=\texttt{T}$. 
    \label{liveness_road_seg_0_assumption}
    \item Case $m=N+1$: Let us show $\forall \text{Ag}$ where $m_{\text{Ag}}=N+1$ always eventually take $a$ for which \\
    $O_{\text{forward progress}} = \texttt{T}$.
    \label{liveness_road_seg_0_case_n_1}
    \begin{enumerate}
        \item Any $\text{Ag}$ for which $m_{\text{Ag}} > 1$ will always have an $a$ where $\gamma_{\text{Ag}} = \texttt{straight}$ such that $O_{\text{forward progress}}(s,a,u) = \texttt{T}$. 
        \item Thus, we show that $\text{Ag}$ always eventually will take $\gamma_{\text{Ag}} = \texttt{straight}$ such that $O_{\text{forward progress}}(s,a,u) = \texttt{T}$.
        \item W.l.o.g., let us consider $\text{Ag}$ for which $m_{\text{Ag}} = N+1$.
        \item In accordance with the Action Selection Strategy, for $\text{Ag}$ to take $a$ is 1) $W_{\text{Ag}}$ = 1 and 2) all the oracles in the agent profile must be simultaneously satisfied (so the action $a$ is chosen over any other $a'\in Act_{\text{Ag}}$).
        \item In accordance with the Action Selection Strategy, we must show $\square \lozenge W_{\text{Ag}}$.
        \label{always_eventually}
        \begin{enumerate}
            \item Any $\text{Ag}' \in \mathcal{C}_{\text{Ag}}$ will be an agent of equal or higher precedence and in separate lane. 
            \item Any such agent with higher token count than $\text{Ag}$ that is in its conflict cluster will always eventually be able to go by the inductive assumption in \ref{liveness_road_seg_0_assumption}.
            \item After all such agents take a forward progress action, they will no longer be in $\text{Ag}$'s conflict cluster and $\text{Ag}$ will have the highest token count since all $Ag$ that newly enter the conflict cluster will have token count of $0$. 
        \end{enumerate}
        \item After the assignment $W_{\text{Ag}} = 1$,  $\square W_{\text{Ag}}$ until $\text{Ag}$ selects $a$. This is true because by definition of $W_{\text{Ag}}$, $\text{Ag}$ has the highest token count in its conflict cluster, $\text{Ag}.\texttt{tc} = Ag.\texttt{tc}+1$, while $\text{Ag}$ does not select $a$, and any $\text{Ag}$ that enters $\text{Ag}$'s conflict cluster will have a token count of 0. 
        \item Let us show that the oracles in the Agent Profile will always evaluate to \texttt{T}.
        \begin{enumerate}
            \item The same arguments hold here as in Lemma \ref{lane_change_liveness_lemma}.\ref{liveness_lane_change_case_a} for all oracles except for \\ $O_{\text{dynamic safety}}(s,a,u)$, where  $\square \lozenge O_{\text{dynamic safety}}(s,a,u) = \texttt{T}$ by the inductive Assumption \ref{liveness_road_seg_0_assumption}.
        \end{enumerate}
    \end{enumerate}
\end{enumerate}
\end{proof}

\begin{lemma}
Let $\text{Ag}$ be on a road segment $rs \in RS$, where $RS$ is the set of nodes in the dependency road network dependency graph $\mathcal{G}_{\text{dep}}$. Let $rs$ be a road segment for which $\forall rs' \in RS s.t. \exists e: (rs', rs)$. Each road segment $rs'$ has vacancies in the grid points where $\text{Ag} \in rs$ would occupy if it crossed the intersection (i.e. $s_f = \tau_{\text{Ag}}(s,a)$), and we show that $\text{Ag}$ will always eventually take an action $a \in Act_{\text{Ag}}$ where $O_{\text{progress oracle}}(s,a,u) = \texttt{T}$.
\label{lemma_any_road_segment}
\end{lemma}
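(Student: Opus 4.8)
The plan is to prove this by induction on the longitudinal distance of $\text{Ag}$ to the front of the intersection separating $rs$ from the downstream segments $rs'$ into which $\text{Ag}$ crosses, mirroring the inductive structure of Lemma \ref{lemma_destination_node} but with the front of the intersection now playing the role that the sink grid points played there. Concretely, I would set $m_{\text{Ag}} = \text{proj}_{\text{long}}(g_{\text{front}}) - \text{proj}_{\text{long}}(\text{Ag}.s)$, where $g_{\text{front}}$ is the last grid point on $rs$ before the intersection, and induct on $m_{\text{Ag}}$.

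For the base case ($m_{\text{Ag}} = 1$), $\text{Ag}$ is one action away from crossing into some $rs'$, so its forward-progress action is either a straight crossing or a turn with $\gamma_{\text{Ag}} \in \{\texttt{left-turn}, \texttt{right-turn}\}$. The key point is that the vacancy hypothesis on the $rs'$ guarantees that the grid points $\mathcal{G}_{\text{Ag}}(s,a)$ and $s_f = \tau_{\text{Ag}}(s,a)$ are, or always eventually become, free, which is exactly what is needed to drive $\square\lozenge O_{\text{dynamic safety}}(s,a,u)$. For turn actions I would invoke Lemma \ref{liveness_lemma_left_right_turns} directly; for the straight crossing I would reuse the oracle-satisfaction argument of Lemma \ref{lemma_destination_node}, with Assumption 7 (finite-duration traffic-light cycling) and the safety proof \ref{safety_proof} ensuring that no agent ever blocks the intersection, so that $O_{\text{traffic light}}$, $O_{\text{traffic intersection clearance}}$, and $O_{\text{dynamic safety}}$ are simultaneously and always eventually satisfied.

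For the inductive step I would assume every $\text{Ag}$ with $m_{\text{Ag}} = N$ always eventually takes a forward-progress action and establish the same for $m_{\text{Ag}} = N+1$. Such an agent is not yet at the front, so its forward-progress action has $\gamma_{\text{Ag}} = \texttt{straight}$ and the only possible obstruction is the agent directly ahead at distance $\le N$; by the inductive hypothesis that agent always eventually vacates the needed grid point, yielding $\square\lozenge O_{\text{dynamic safety}}(s,a,u)$, while the remaining oracles hold vacuously or by Assumption 8 and the road-network constraints. In parallel I would establish $\square\lozenge W_{\text{Ag}}$ by the token-count fairness argument of Lemma \ref{lemma_destination_node}: every higher-token agent in $\mathcal{C}_{\text{Ag}}$ eventually progresses and resets its count to $0$ while $\text{Ag}$'s count strictly increases, so $\text{Ag}$ eventually attains the maximal count and wins, after which $\square W_{\text{Ag}}$ persists until it moves.

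I expect the main obstacle to be the base case, specifically coupling the downstream clearance of the $rs'$ to the $O_{\text{dynamic safety}}$ oracle \emph{at the instant the traffic light permits crossing}. Unlike the destination case of Lemma \ref{lemma_destination_node}, where the exiting agent simply disappears from the network, an agent crossing the intersection here must land in grid points of $rs'$ governed by a different segment's dynamics and traffic phase; I would need to argue that the vacancy hypothesis, the cyclic red-to-green schedule of Assumption 7, and the no-intersection-blocking guarantee from the safety proof can be made to hold \emph{at a common time}, and that in the left-turn case $O_{\text{unprotected left-turn}}$ is co-satisfiable with these, exactly as handled in Lemma \ref{liveness_lemma_left_right_turns}.
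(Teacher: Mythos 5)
Your overall strategy is the same as the paper's: induction on the longitudinal distance $m_{\text{Ag}}$ to the front of the intersection, with turn actions dispatched to Lemma~\ref{liveness_lemma_left_right_turns}, the straight crossing at the front handled by an oracle-satisfaction argument (traffic light eventually green by Assumption 7, and green implies the intersection is clear by the safety proof, so $O_{\text{dynamic safety}}$ holds whenever $O_{\text{traffic light}}$ does), and the remaining oracles plus the vacancy hypothesis closing the base case. Your identification of the ``main obstacle'' and its resolution also matches how the paper argues.

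However, there is a genuine gap in your inductive step. The claim that an agent with $m_{\text{Ag}} = N+1$ ``is not yet at the front, so its forward-progress action has $\gamma_{\text{Ag}} = \texttt{straight}$'' is false. A mid-segment agent may need to reach a critical left-turn or right-turn tile, i.e., it must be in the correct lane \emph{before} it arrives at the intersection; for such an agent the only action satisfying $O_{\text{forward progress}}$ (together with destination reachability) is a lane change. This is exactly the situation that Case B of Lemma~\ref{lane_change_liveness_lemma} exists to handle, and the paper's inductive step explicitly splits on it: (i) if the only forward-progress action is a lane change, the agents occupying the target grid points are closer to the front and hence eventually progress by the induction, any agents replacing them enter with token count $0$ while $\text{Ag}$'s count strictly increases, so $\text{Ag}$ eventually holds the maximal token count and wins its conflict cluster, at which point Case B of Lemma~\ref{lane_change_liveness_lemma} applies; (ii) otherwise the action is straight and the argument you give (identical to the inductive step of Lemma~\ref{lemma_destination_node}) suffices. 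Your token-fairness argument is deployed only in service of conflict resolution for a \emph{straight} action; it never establishes that a blocked lane change eventually becomes available. Agents stranded in the wrong lane are therefore not covered by your induction, and the lemma's conclusion fails for precisely the class of agents the conflict-cluster and token machinery was designed for.
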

\begin{proof}
We prove this with induction. W.l.o.g., let us consider $\text{Ag} \in \mathfrak{A}$. Let $m_{\text{Ag}}=\text{proj}_{\text{long}}(g_{\text{front of rs}}) -\text{proj}_{\text{long}}(\text{Ag}.s)$, where $g_{\text{front of intersection}}$ represents a grid point at the front of the road segment.
\begin{enumerate}
    \item Base Case $m_{\text{Ag}} = 0$: Let us consider an $\text{Ag}$ whose next action will take will bring $\text{Ag}$ to cross into the intersection and show that $\text{Ag}$ will always eventually take $a$ for which $O_{\text{forward progress}}(s,a,u) = \texttt{T}$.
    \label{base_case_other_road_segment}
    \begin{enumerate}
        \item If the only $a$ where $O_{\text{forward progress}} = \texttt{T}$ is such that $\gamma_{\text{Ag}} \in \{\texttt{left-turn}, \texttt{ right-turn}$\}, proof by Lemma \ref{liveness_lemma_left_right_turns}.
        \item If the only $a$ where $O_{\text{forward progress}}(s,a,u) = \texttt{T}$ is such that $\gamma_{\text{Ag}} = \texttt{straight}$. 
        \label{straight_action_liveness}
        \begin{enumerate}
            \item In accordance with the Action Selection Strategy, for $\text{Ag}$ to take $a$ is that 1) all the oracles in the Agent Profile must be simultaneously satisfied (so the action $a$ is chosen over any other $a'\in Act_{\text{Ag}}$, 2) $W_{\text{Ag}}$ = 1. 
            \begin{enumerate}
                \item $O_{\text{unprotected left-turn}}(s,a,u)= \texttt{T}$, \\
                $O_{\text{traffic intersection lane-change}}(s,a,u)= \texttt{T}$, \\ $O_{\text{static safety}}(s,a,u)= \texttt{T}$, \\
                $O_{\text{traffic intersection clearance}}(s,a,u) = \texttt{T}$ \\
                $O_{\text{legal orientation}}(s,a,u)= \texttt{T}$.
                \item The first two oracles are true vacuously, followed by Assumption 8, and by agent dynamics and the road network $\mathfrak{R}$ definition, respectively, and by the assumption in the lemma statement. 
                \item $\square \lozenge O_{\text{traffic light}}(s,a,u)$ by Assumption 7.
                \item $O_{\text{dynamic obstacle}}(s,a,u) = \texttt{T}$ because by the safety proof, all $\text{Ag}$ take $a\in Act_{\text{Ag}}$ that satisfy the first top tiers of the agent profile so there will be no $\text{Ag}'\in \mathfrak{A}$ that are in the intersection when the traffic light for $\text{Ag}$ is green. Thus, whenever $O_{\text{traffic light}}(s,a,u) = \texttt{T}$, then it $O_{\text{dynamic obstacle}}(s,a,u) = \texttt{T}$ as well.  
            \end{enumerate}
            \item $W_{Ag}=1$ vacuously since neither $\text{Ag}$ or any $\text{Ag}' \in \mathfrak{A}$ will send a conflict request at the front of the intersection since all $a_i$ must satisfy $O_{\text{traffic intersection lane-change}}(s,a,u)$ according to the Safety Proof in Section A\ref{safety_proof}.
        \end{enumerate}
        \item By the safety proof in \ref{safety_proof}, $\text{Ag}$ will only take $a\in Act_{\text{Ag}}$ that satisfy the top two tiers of the Agent Profile, so $\text{Ag}$ will not take an $a$ where\\
        $\gamma_{\text{Ag}} \in \{ \texttt{left-lane change}, \texttt{right-lane change}$\} into an intersection. 
    \end{enumerate}
    \item Case $m_{Ag} = N$: Let us assume that $\text{Ag}$ with $m_{\text{Ag}} = N$ will always eventually take $a \in Act_{\text{Ag}}$ for which \\
    $O_{\text{forward progress}}(s,a,u)=\texttt{T}$.
    \item Case $m_{Ag} = N+1$: Let us show that any $\text{Ag}$ that is at a longitudinal distance of $N+1$ from the destination will always eventually take $a$ for which $O_{\text{forward progress}}(s,a,u) = \texttt{T}$. 
    \begin{enumerate}
        \item Let us consider when $\text{Ag}$'s only $a$ such that
        \\$O_{\text{forward progress}}(s,a,u) = \texttt{T}$ is
        \\ $\gamma_{\text{Ag}} \in \{ \texttt{right-lane change}, \texttt{left-lane change}$ \}. 
        \item Although $Ag$ may not have priority (since it does not have max tokens in its conflict cluster), any $Ag$ that occupies grid points $\mathcal{G}(s,a,u)$ will always eventually make forward progress by Argument \ref{base_case_other_road_segment}.
        \item Further, 
        \item Once these agents have made forward progress, any $\hat{Ag}$ that replace $\text{Ag}'$ will have a $\texttt{Tc}_{Ag}=0$ and since $Ag$ is always increasing its token counts as it cannot make forward progress, it will always eventually have the max tokens and thus have priority over those grid points.
        \item Thus, this can be proven by using Case B in Lemma \ref{lane_change_liveness_lemma}.
        \item For all other $a\in Act_{\text{Ag}}$ are actions for which $\gamma_{\text{Ag}} = \texttt{straight}$, and the same arguments as in the proof of straight actions for $rs$ with $g \in \mathcal{S}_{\text{sinks}}$ in  \ref{liveness_road_seg_0_case_n_1} hold.  
        \label{straight_forward_progress_induction_n_plus_one}
    \end{enumerate}
\end{enumerate}
\end{proof}

\subsection{Liveness Proof}
\begin{theorem}[Liveness Under Sparse Traffic Conditions]
Under the Sparse Traffic Assumption given by \ref{assumptions_sparse_traffic} and given all agents $\text{Ag} \in \mathfrak{A}$ in the quasi-simultaneous game select actions in accordance with the agent protocol specified in Section \ref{section_agent_protocol}, liveness is guaranteed, i.e. all $\text{Ag} \in \mathfrak{A}$ will always eventually reach their respective goals.
\end{theorem}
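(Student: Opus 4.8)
The plan is to reduce the goal---that every agent always eventually reaches its destination---to the single claim that every agent always eventually takes a forward-progress action, i.e. an action $a$ with $O_{\text{forward progress}}(s,a,u)=\texttt{T}$. Since each $\text{Ag}$ treats its own goal as a static obstacle and exits the network upon arrival, and since a destination is reached through a finite sequence of critical waypoints, repeatedly guaranteeing forward progress drives each agent to its goal in finitely many steps. Before the induction I would record the two standing invariants flagged in the sketch: safety (no collision, and every agent retains a safe backup plan at all times) is inherited directly from Theorem~\ref{safety_guarantee_theorem}, and the sparsity assumption $N<M-1$ of Definition~\ref{assumptions_sparse_traffic} guarantees that no minimal loop of the road network is ever fully saturated, so a deadlock configuration in which every agent waits indefinitely on an occupied successor can never arise.

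I would then organize the forward-progress argument around the road-network dependency graph $G_{\text{dep}}=(RS,E)$ and induct on its structure, with the two completed lemmas supplying the base step and the generic inductive step. Lemma~\ref{lemma_destination_node} shows that on any road segment containing a sink every agent always eventually makes forward progress; these sink-containing segments are exactly the out-degree-zero nodes of $G_{\text{dep}}$ (an agent there can simply leave the network rather than depend on a downstream clearance), and they serve as the base of the induction. Lemma~\ref{lemma_any_road_segment} then supplies the inductive step: if every successor road segment $rs'$ of $rs$ in $G_{\text{dep}}$ always eventually presents a vacancy at the grid points an agent on $rs$ would cross into, then every agent on $rs$ always eventually makes forward progress. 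When $G_{\text{dep}}$ is a DAG, I would take a reverse-topological ordering of $RS$ and process nodes from the sink-containing segments backward along the dependency edges: the hypothesis of Lemma~\ref{lemma_any_road_segment} for a segment $rs$ is precisely the conclusion of the inductive hypothesis applied to each of its finitely many successors, so forward progress propagates along the linear order to every segment, hence to every agent.

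The hard part is the cyclic case, where $G_{\text{dep}}$ contains directed loops and the topological induction breaks down because each segment in a loop ultimately depends on itself. Here I would lean on the sparsity invariant: since $N<M-1$, every minimal loop contains at least one vacant grid point at all times, so around any cycle of $G_{\text{dep}}$ there is always at least one segment whose relevant successor cells are free. This circulating vacancy plays the role that the sink segments played as the anchor of the DAG induction. The agent immediately behind the vacancy satisfies the hypothesis of Lemma~\ref{lemma_any_road_segment} and advances, which shifts the vacancy one step backward around the loop; iterating, the free space circulates and each agent in the loop is in turn presented with a clear successor. To convert ``\emph{some} agent can always eventually move'' into ``\emph{every} agent always eventually moves,'' I would invoke the fairness of the token mechanism exactly as in Lemma~\ref{lane_change_liveness_lemma}: a repeatedly blocked agent has its token count $\texttt{Tc}$ strictly increase while any agent it cedes to resets to zero upon making progress, so each agent eventually attains the maximum count in its conflict cluster and wins priority, i.e. $\mathcal{W}_{\text{Ag}}=\texttt{T}$.

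The main obstacle, and the step I would spend the most care on, is making the cyclic argument rigorous: I must show that the circulating vacancy together with token-based priority provably rules out starvation of any single agent inside a loop---that no agent can be perpetually overtaken---which is precisely where the physical sparsity bound $N<M-1$ and the fairness of the conflict resolution have to be combined quantitatively rather than merely invoked. I would close by chaining the DAG and cyclic cases to conclude that every $\text{Ag}\in\mathfrak{A}$ always eventually takes a forward-progress action, and hence, by the reduction of the first paragraph, always eventually reaches its goal.
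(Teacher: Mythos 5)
Your proposal is correct and follows essentially the same route as the paper's own proof: reduction of liveness to the claim that every agent always eventually takes an action satisfying $O_{\text{forward progress}}$, with Lemma~\ref{lemma_destination_node} (sink-containing segments) as the base case, Lemma~\ref{lemma_any_road_segment} as the inductive step propagated along a topological ordering of $G_{\text{dep}}$ in the DAG case, and the sparsity-guaranteed circulating vacancy combined with token-count fairness to handle cycles. Even your candid admission that the cyclic case is the least rigorous step mirrors the paper, whose own treatment of that case is similarly schematic.
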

\begin{proof}
It suffices to assume:
\begin{enumerate}
    \item $\forall \text{Ag} \in \mathfrak{A}$, $\forall \text{Ag}' \in \mathbb{B}_{\text{Ag}}$, $\text{Ag}$ knows $\text{Ag}'.s, Ag'.i$, i.e. the other agent's state $\text{Ag}.s$ and intended action $a_i$ and all $\text{Ag}$ within a region around the intersection defined in the Appendix. 
    \item Each $\text{Ag} \in \mathfrak{A}$ has access to the traffic light states. 
    \item There is no communication error in the conflict requests, token count queries, and the agent intention signals. 
    \item For $t=0$, $\forall \text{Ag} \in \mathfrak{A}$ in the quasi-simultaneous game is initialized to: 
        \begin{itemize}
            \item Be located on a distinct grid point on the road network.
            \item Have a safe backup plan action $a_{bp}$ such that $S_{\text{Ag}, bp}(u) = \texttt{T}$. 
        \end{itemize}
    \item The traffic lights are red for some time window $\Delta t_{\text{tl}}$ such that $t_{\text{min}}<\Delta t_{\text{tl}}<\infty$, where $t_{\text{min}}$ is defined in the Appendix in Section \ref{traffic_min_time}.   
    \item The static obstacles are not on any grid point $g$ where \\
    $g.d= 1$. 
    \item Each $\text{Ag}$ treats its respective goal $\text{Ag}.\texttt{g}$ as a static obstacle.
    \item Bundles in the road network $\mathfrak{R}$ have no more than 2 lanes.
    \item The road network $R$ is such that all intersections are governed by traffic lights. 
\end{enumerate}
and prove: 
\begin{enumerate}
    \item Let us consider a road segment $r \in RS$ that contains grid point(s) $g \in \mathcal{S}_{\text{sinks}}$. Every $\text{Ag} \in r$ will be able to always eventually take $a \in Act_{Ag}$ for which $O_{\text{forward progress}}(s,a,u) = \texttt{T}$.
    \item Let us consider a road segment $rs \in RS$. Let us assume $\forall rs \in RS, \exists (rs, rs') \in G_{\text{dep}}$, i.e. the clearance of $rs$ depends on the clearance of all $rs'$. We use inductive reasoning to show that any $\text{Ag}$ on $rs$ will always eventually take an $a \in Act_{\text{Ag}}$ where $O_{\text{forward progress}}(s,a,u) = \texttt{T}$.
    \item For any $\mathfrak{R}$ where the dependency graph $G_{\text{dep}}$ (as defined in \ref{road_network_dependency_graph}) is a directed-acylcic-graph (DAG), we prove all $\text{Ag} \in \mathfrak{A}$ will always eventually take $a\in Act_{Ag}$ for which \\ $O_{\text{forward progress}}(s,a,u)=\texttt{T}$ inductively as follows.
    \begin{enumerate}
        \item A topological sorting of a directed acyclic graph G = (V, E) is a linear ordering of vertices V such that $(u, v) \in E \rightarrow u$ appears before $v$ in ordering.
        \item If and only if a graph $G$ is a DAG, then $G$ has a topological sorting. Since $G_{\text{dep}}$ is a $DAG$, it has a topological sorting. 
        \item We can then use an argument by induction on the linear ordering provided by the topological sorting to show that all $\text{Ag}$ always eventually take $a\in Act_{Ag}$ for which \\ $O_{\text{forward progress}}(s,a,u)=\texttt{T}$.
        \begin{enumerate}
            \item Let $l$ denote the linear order associated with the road network dependency graph $G_{\text{dep}}$, where an ordering of $l=0$ denotes a road segment with source nodes.
            \item Base Case $l=0$. This can be proven true by Lemma \ref{lemma_destination_node}.
            \item Let us assume this is true for any road segment where $l=N$.
            \label{inductive_assumption_liveness_proof}
            \item Under the Inductive Assumption \ref{inductive_assumption_liveness_proof}, there will be clearance in any road segment that agent $\text{Ag}$ depends on for $\text{Ag}$ to make forward progress to its destination.
            \item Since all $\text{Ag}$ are following the traffic laws by the Safety proof in \ref{safety_proof}, the clearance spots will be given precedence to $\text{Ag} \in rs$ for a positive, finite time, and thus the assumptions required in Lemma \ref{liveness_lemma_left_right_turns} and \ref{lane_change_liveness_lemma} used to prove Lemma \ref{lemma_any_road_segment} will hold. 
            \item Thus, the Lemma \ref{lemma_any_road_segment} to show that all $\text{Ag}$ for which $l=N+1$ always eventually take an action for which \\
            $O_{\text{forward progress}}(s,a,u)=\texttt{T}$.
        \end{enumerate}
    \end{enumerate}
    \item When the graph $G_{\text{dep}}$ is cyclic, the Sparsity Assumption \ref{assumptions_sparse_traffic} can be used to prove all agents always eventually take an action for which $O_{\text{forward progress}}(s,a,u)=\texttt{T}$.
    \begin{enumerate}
        \item The sparsity assumption \ref{assumptions_sparse_traffic} ensures that there is at least one vacancy in any map loop. 
        \item Let us consider $Ag$ inside a map loop.
        \begin{enumerate}
            \item Let us consider $Ag$ in the loop for which the vacancy is directly ahead of $Ag$. If the vacancy is directly ahead of $Ag$, then if the only forward progress action $a$ keeps $Ag$ in the loop, $Ag$ will always eventually take its action by Lemmas \ref{liveness_lemma_left_right_turns}, \ref{lane_change_liveness_lemma} and the arguments in Lemma \ref{lemma_any_road_segment} \ref{straight_action_liveness}. If the only forward progress action $a$ makes $Ag$ leave the loop, $Ag$ will always eventually take its action by the sparsity assumption \ref{assumptions_sparse_traffic} and the inductive arguments in \ref{liveness_induction_statement}.
            \label{first_in_loop}
            \item By \ref{first_in_loop}, it can then be inductively shown that any $Ag$ in the loop will always eventually have a vacancy for which it can take a forward progress action. 
            \label{inductivity_for_agents_in_map_loop}
        \end{enumerate}
        \item Let us consider $Ag$ on a road segment that is not part of a map loop.        \begin{enumerate}
            \item Let us consider an action $a$ that takes $Ag$ into a map loop. If the grid point required by $Ag$ to make forward progress is occupied, by \ref{inductivity_for_agents_in_map_loop}, it will always eventually be unoccupied. If the only action $Ag$ can take is such that $\gamma_{Ag}=\{\texttt{lane-change}\}$ since all $Ag'$ in the loop are reset when they take forward progress action, $Ag$ will always eventually have the max token count. Thus, the same arguments in Lemma \ref{lane_change_liveness_lemma} hold. If the only action $Ag$ can take is such that $Ag$ crosses into an intersection, the traffic light rules ensure that $Ag$ has precedence over any $Ag$ in the loop. Thus, $Ag$ will always eventually take a forward progress action by Lemma \ref{liveness_lemma_left_right_turns} and  Lemma \ref{lemma_any_road_segment} \ref{straight_action_liveness}. 
            \item For any action $a$ that does not take $Ag$ into a map loop, $Ag$ can take a forward action because of the sparsity assumptions \ref{assumptions_sparse_traffic} and the inductive arguments in \ref{liveness_induction_statement}.
        \end{enumerate}
    \end{enumerate}
    \item By the induction arguments and by definition of the forward progress oracle $O_{\text{forward progress}}(s,a,u)$, all $\text{Ag}$ will always eventually take actions that allow them to make progress to their respective destinations, and liveness is guaranteed.
\end{enumerate}
\end{proof}

\subsection{Traffic Light Assumptions}
A traffic light grid point contains three states
$g.s=\{ \texttt{red}, \texttt{yellow}, \texttt{green} \}$. The traffic lights at each intersection are coordinated so that if all agents obey the traffic signals, collision will not occur (i.e. the lights for the same intersection will never be simultaneously green) and the lights are both red for long enough such that $\text{Ag}$ that entered the intersection when the light was $\texttt{yellow}$ will be able to make it across the intersection before the other traffic light turns $\texttt{green}$. 

\subsubsection{Traffic Light Minimum Time}
\label{traffic_min_time}
In order to guarantee that agents will always eventually be able to make a lane-change to a critical tile, the traffic light has to be red for sufficiently long such that any $Ag'$ that may cause $\mathcal{F}_{\text{Ag}}(u, a_i) = \texttt{T}$ is slowed down for long enough such that $Ag$ can take its lane-change action. This can be computed simply once given the dynamics of $\text{Ag}$. Normally a simple heuristic can be used instead of computing this specific lower-bound.

\subsection{Simulation Maps}
\begin{figure}[H]
\centering
\includegraphics[scale=0.15]{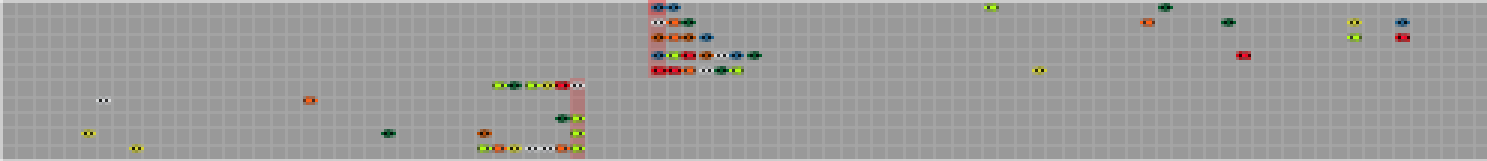}
\caption{Straight road map environment.}
\label{fig_straight_road}
\end{figure}

\begin{figure}[H]
\centering
\includegraphics[scale=0.5]{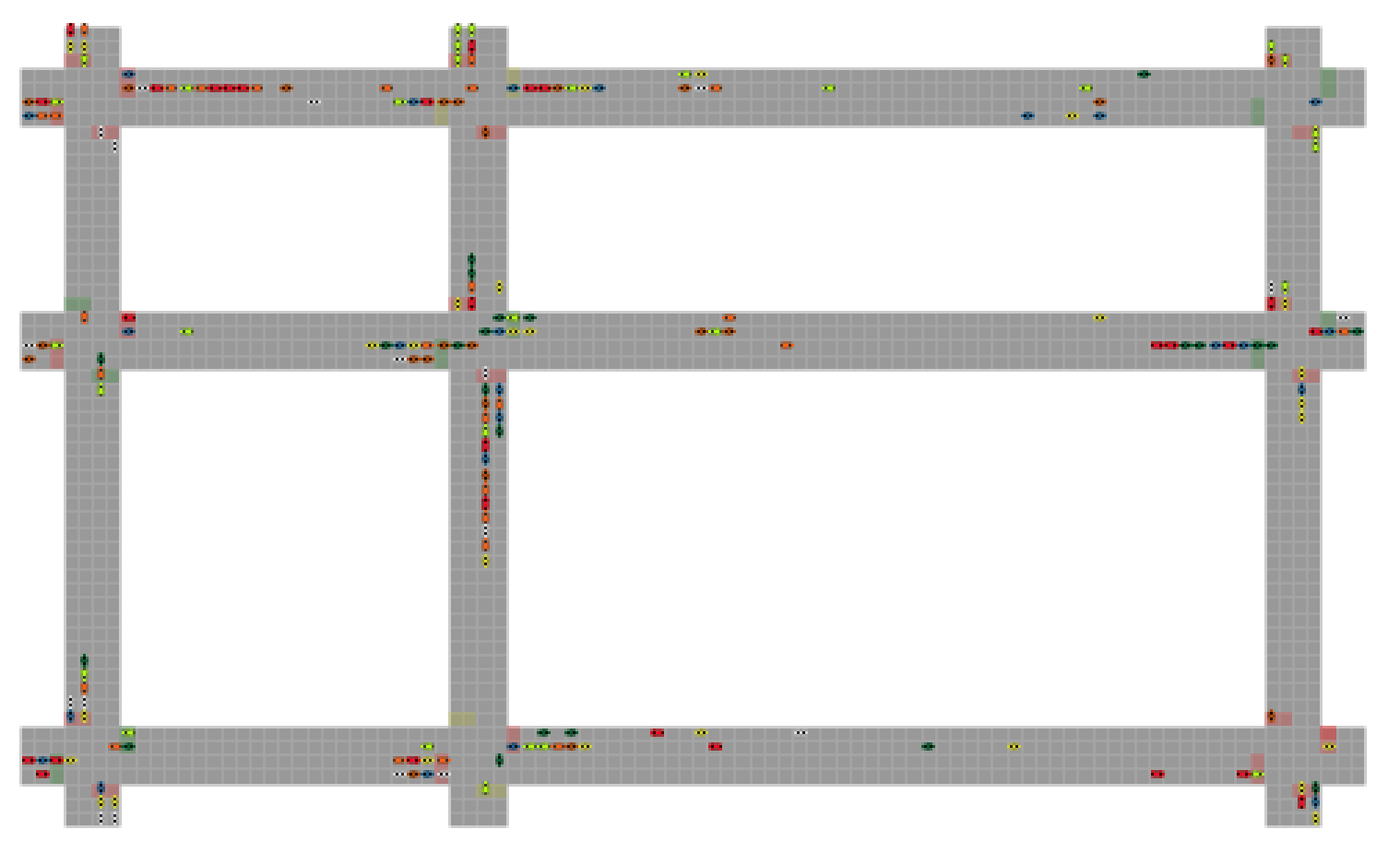}
\caption{City blocks map environment.}
\label{fig_city_blocks}
\end{figure}

\subsection{Simulation Environment Features}
A road network environment, complete with legal lane orientations, intersections, and traffic lights, can be specified via a CSV file. The specified (by the user) road network environment forms a map data structure graph, which decomposes the roads into bundles, mentioned in \ref{section_precedence_rules}.

The map will automatically parse the boundaries and lane directions of the road network to define where agents can either spawn from or exit the road network. In each game scenario, agents will randomly spawn according to a specified spawn rate.

Each agent has the following attributes in our simulation: parameters like min and max velocity and accelerations, dynamics specified by agent actions and their corresponding occupancy grids, goal location, agent color, ID, token count. Note, these attributes can be modified depending on what the user wants to include. For each agent, a graph-planning algorithm is used to compute a high-level motion plan on the map graph to get the agent to its goal.

Each game scenario is comprised of the road network graph and a set of agents (constantly changing over time as new agents spawn and old agents reach their goals and leave). The game is simulated forward for a specified number of time steps and the traces from the simulation are saved. The animation module in RoSE animates the traces from the simulated game. 

RoSE also offers a collection of debugging tools to help reconstruct scenarios that occurred during a simulated game. If the user would like to regenerate the same initialization, the simulation has a feature where users can specify a specific randomization seed. There is a configuration tool that allows users to prescribe the states of a set of agents and their respective goals. A final debugging tool outputs the variables of the agent that were relevant to the decision-making process. 
\end{appendices}

\end{document}